\newsavebox{\box@tikzpicture}
  \pgfmathsetmacro\width@scale@picture{#2/\wd\box@tikzpicture}%
\DeclareMathSymbol{\widehatsym}{\mathord}{largesymbols}{"62}
\newcommand\lowerwidehatsym[1][-1.3ex]{%
  \text{\smash{\raisebox{#1}{$\widehatsym$}}}%
}
\newcommand\fixwidehat[1]{%
  \mathchoice%
  {\accentset{\displaystyle\lowerwidehatsym}{#1}}%
  {\accentset{\textstyle\lowerwidehatsym[-1.2ex]}{#1}}%
  {\accentset{\scriptstyle\lowerwidehatsym[-1.6ex]}{#1}}%
  {\accentset{\scriptscriptstyle\lowerwidehatsym}{#1}}%
}
\DeclareMathOperator*{\argmin}{arg\,min}
\newcommand{\Oo}{\mathcal O}  
\newcommand{\ceil}[1]{\left\lceil#1\right\rceil}
\newcommand{\floor}[1]{\left\lfloor#1\right\rfloor}
\newcommand{\set}[1]{\left\{#1\right\}}
\newcommand{\cond}{\mathrel{}\middle\vert\mathrel{}}
\newcommand{\sizeof}[1]{\left\lvert#1\right\rvert}
\newcommand{\sizefrac}[2]{\frac{\sizeof{#1}}{\sizeof{#2}}}
\newcommand{\dist}{\operatorname{dist}}
\newcommand{\norm}[1]{\lVert#1\rVert}
\newcommand{\alphamax}{\fixwidehat{\alpha}}
\newcommand{\Kmax}{\fixwidehat{K}}
\newcommand{\Smax}{\fixwidehat{S}}
\newcommand{\Cmax}{\fixwidehat{C}}
\newcommand{\Estar}{E^{*}}
\newcommand{\sstar}{S^{*}}
\newcommand{\sminus}{S^{-}}
\newcommand{\splus}{S^{+}}
\newcommand{\schange}{S^{\Delta}}
\newcommand{\cstar}{C^{*}}
\newcommand{\cminus}{C^{-}}
\newcommand{\cplus}{C^{+}}
\newcommand{\cchange}{C^{\Delta}}
\newcommand{\alphastar}{\alpha^{*}}
\newcommand{\alphaold}{\alpha^{\textsc{old}}}
\newcommand{\alphanew}{\alpha^{\textsc{new}}}
\newcommand{\load}{\ell}
\newcommand{\maxload}{L}
\newcommand{\Aa}{\mathcal A}
\newcommand{\cold}{C^{\textsc{old}}}
\newcommand{\cnew}{C^{\textsc{new}}}
\newcommand{\eold}{E^{\textsc{old}}}
\newcommand{\enew}{E^{\textsc{new}}}
\newcommand{\Gold}{G^{\textsc{old}}}
\newcommand{\Gnew}{G^{\textsc{new}}}
\newcommand{\Aaold}{\Aa^{\textsc{old}}}
\newcommand{\Aanew}{\Aa^{\textsc{new}}}
\newcommand{\opt}{\textsc{opt}}
\newcommand{\mystar}{\prime} 
\newcommand{\cm}{C_M}
\newcommand{\gm}{G_M}
\newcommand{\alpham}{\alpha_M}
\declaretheorem[style=plain,name={Theorem}]{theorem}
\declaretheorem[style=plain,name={Lemma},sibling=theorem]{lemma}
\declaretheorem[style=plain,name={Corollary},sibling=theorem]{corollary}
\declaretheorem[style=plain,name={Observation},sibling=theorem]{observation}
\declaretheorem[style=definition,name={Definition},sibling=theorem]{definition}
\declaretheorem[style=definition,name={Remark},sibling=theorem]{remark}
\author[1]{Aaron Bernstein}
\author[2]{Jacob Holm\thanks{This research is supported by Mikkel Thorup's Advanced Grant DFF-0602-02499B from the Danish Council for Independent Research under the Sapere Aude research career programme.}}
\author[3]{Eva Rotenberg\thanks{This research was partly conducted during the third author's time as a PhD student at University of Copenhagen.}}
\affil[1]{Technical University of Berlin\\
  bernstei@gmail.com}
\affil[2]{University of Copenhagen (DIKU)\\
  jaho@di.ku.dk}
\affil[3]{Technical University of Denmark\\ 
	erot@dtu.dk}
\title{Online Bipartite Matching with Amortized $\Oo(\log^2 n)$ Replacements}
\begin{document}

\thispagestyle{empty}
\maketitle
\begin{abstract}
In the online bipartite matching problem with replacements, all the vertices on one side of the bipartition are given, and the vertices on the other side arrive one by one with all their incident edges.
The goal is to maintain a maximum matching while minimizing the number of changes (replacements)
to the matching. We show that the greedy algorithm that always takes the shortest augmenting path from the newly inserted vertex (denoted the SAP protocol) uses at most amortized $\Oo(\log^2 n)$ replacements per insertion, where $n$ is the total number of vertices inserted. This is the first analysis to achieve a polylogarithmic number of replacements for \emph{any} replacement strategy, almost matching the $\Omega(\log n)$ lower bound. The previous best strategy known achieved amortized $\Oo(\sqrt{n})$ replacements    
[Bosek, Leniowski, Sankowski, Zych, FOCS 2014]. For the SAP protocol in particular, nothing better than then trivial $\Oo(n)$ bound was known except in special cases.
 Our analysis immediately implies the same upper bound of $\Oo(\log^2 n)$ reassignments for the capacitated assignment problem, where each vertex on the static side of the bipartition is initialized with the capacity to serve a number of vertices.
  
We also analyze the problem of minimizing the maximum server load. We show that if the final graph
has maximum server load $L$, then the SAP protocol makes amortized $\Oo( \min\{L \log^2 n , \sqrt{n}\log n\})$ reassignments. We also show that this is close to tight because $\Omega(\min\{L, \sqrt{n}\})$ reassignments can be necessary.
\end{abstract}
\setcounter{page}{0}
\thispagestyle{empty}
\newpage

\section{Introduction}
In the online bipartite matching problem, the vertices on one side are given in advance
(we call these the servers $S$), while the vertices on the other side (the clients $C$) arrive
one at a time with all their incident edges. In the standard online model the arriving
client can only be matched immediately upon arrival, and the matching cannot be changed later.
Because of this irreversibility, the final matching might not be maximum; no algorithm can guarantee better than a $(1-1/e)$-approximation~\cite{Karp90}. But in many settings the irreversibility assumption
is too strict: rematching a client is expensive but not impossible. This motivates
the online bipartite matching problem with replacements, where the goal is to at all times match as many clients as possible, while minimizing the number of changes to the matching.
Applications include hashing, job scheduling, web hosting, streaming content delivery, and data storage; see~\cite{conf/infocom/ChaudhuriDKL09} for more details. 

In several of the applications above, a server can serve multiple clients, which raises the question of online bipartite \emph{assignment} with reassignments. There are two ways of modeling this:
\begin{description}
    \item[Capacitated assignments.] Each server $s$ comes with the capacity to serve some number of clients $u(s)$, where each $u(s)$ is given in advance. Clients should be assigned to a server, and at no times should the capacity of a server be exceeded. There exists an easy reduction showing that this problem is equivalent to online matching with replacements~\cite{BernsteinKPPS17}.
A more formal description is given in Section~\ref{sec:capacitated-assignment}.
    \item[Minimize max load.] There is no limit on the number of clients a server can serve, but we want to (at all times) distribute the clients as ``fairly'' as possible, while still serving all the clients. Defining the load on a server as the number of clients assigned to it, the task is to, at all times, minimize the maximum server load --- with as few reassignments as possible. A more formal
description is given in Section~\ref{sec:MinMax}
\end{description} 

While the primary goal is to minimize the number of replacements, special emphasis has been placed on
analyzing the \emph{SAP} protocol in particular, which always augments down a shortest augmenting path from the newly arrived client to a free server (breaking ties arbitrarily). This is the most natural
replacement strategy, and shortest augmenting paths are already of great interest in graph
algorithms: they occur in for example in Dinitz' and Edmonds and Karp's algorithm for maximum flow~\cite{Dinic70,Edmonds72}, and in Hopcroft and Karp's algorithm for maximum matching in bipartite graphs~\cite{Hopcroft73}.

Throughout the rest of the paper, we let $n$ be the number of clients in the final graph,
and we consider the \emph{total} number of replacements during the entire sequence of 
insertions; this is exactly $n$ times the amortized number of replacements. 
The reason for studying the vertex-arrival model (where each client arrives with all its incident edges) instead of the (perhaps more natural) edge-arrival model is the existence of a trivial lower bound of $\Omega(n^2)$ total replacements in this model: Start with a single edge, and maintaining at all times that the current graph is a path, add edges to alternating sides of the path. Every pair of
insertions cause the entire path to be augmented, leading to a total of $\sum_{i= 1}^{n/2} i \in \Omega(n^2)$ replacements.

\subsection{Previous work}
The problem of online bipartite matchings with replacements was introduced in 1995 by Grove, Kao, Krishnan, and Vitter~\cite{Grove:95}, who showed matching upper and lower bounds of $\Theta(n\log n)$ replacements for the case where each client has degree two. 
In 2009, Chadhuri, Daskalakis, Kleinberg, and Lin~\cite{conf/infocom/ChaudhuriDKL09}
showed that for any arbitrary underlying bipartite graph, if the client vertices arrive in a random order, the expected number of replacements (in their terminology, the \emph{switching cost}) is $\Theta(n\log n)$ using SAP, which they also show is tight. They also show that if the bipartite graph remains a forest, there exists an algorithm (not SAP) with $\Oo(n\log n)$ replacements, and a matching lower bound. 
Bosek, Leniowski, Sankowski and Zych later analyzed the SAP protocol 
for forests, giving an upper bound of $\Oo(n \log^2 n)$ replacements~\cite{Bosek2015}, later improved to the optimal $\Oo(n \log n)$ total replacements~\cite{BosekLZS17}. 
For general bipartite graphs, no analysis of SAP is known that shows better than the trivial $\Oo(n^2)$
total replacements. Bosek et al.~\cite{BosekLSZ14} showed a different algorithm that achieves a total of 
$\Oo(n \sqrt{n})$ replacements. They also show how to implement this algorithm in total
time $\Oo(m\sqrt{n})$, which matches the best performing combinatorial algorithm
for computing a maximum matching in a static bipartite graph (Hopcroft and Karp~\cite{Hopcroft73}).

The lower bound of $\Omega (\log n)$ by Grove et al.~\cite{Grove:95} has not been improved since, and is conjectured by Chadhuri et al.~\cite{conf/infocom/ChaudhuriDKL09} to be tight, even for SAP, in the general case. We take a giant leap towards closing that conjecture.

For the problem of minimizing maximum load,~\cite{Gupta:2014} and~\cite{BernsteinKPPS17} showed an approximation solution: with
only $\Oo(1)$ amortized changes per client insertion they maintain
an assignment $\Aa$ such that at all times the maximum load is within a factor of $8$ of optimum.

The model of online algorithms with replacements -- alternatively referred to as online algorithms with recourse -- has also been studied for a variety of problems other than matching. 
The model is similar to that of online algorithms, except that instead of trying 
to maintain the best possible approximation without making any changes, the goal is to maintain
an optimal solution while making as few changes to the solution as possible. This model encapsulates settings in which changes to the solution are possible but expensive. 
The model originally goes back to online Steiner trees~\cite{Imase:91}, and there
have been several recent improvements for online Steiner tree with recourse~\cite{Megow:16,Gu:2013:PDM:2488608.2488674,Gupta:2014:OST:2634074.2634108,Lacki:15}. There are many papers on online scheduling that try to minimize the number of job reassignments~\cite{Phillips1998,WESTBROOK:2000,Andrews1999,Sanders:09,Skutella:10,Epstein2011}. The model has also been studied in the context of flows~\cite{WESTBROOK:2000,Gupta:2014}, and there is a very recent result on online set cover with recourse~\cite{Gupta:2017}.

\subsection{Our results}

\begin{theorem}
    \label{thm:main-uncapacitated}
    SAP makes at most $\Oo(n\log^2 n)$ total replacements when $n$ clients are added.
\end{theorem}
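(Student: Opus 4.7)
The plan is an amortized analysis using a potential function built from alternating-path distances to free servers. For the current matching $M$ and a matched client $c$, let $d_M(c)$ be the length of the shortest alternating path (starting with an unmatched edge at $c$) from $c$ to a currently-free server, so $d_M(c) \leq 2n$. Bucket distances dyadically by setting $\ell_M(c) = \lceil \log_2(1+d_M(c)) \rceil \in \{0,1,\ldots,\Oo(\log n)\}$, and take as potential $\Phi(M) = \sum_{c \text{ matched in } M} \ell_M(c)$; clearly $\Phi(M) \leq \Oo(n \log n)$ at every point in time.

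The first ingredient I would establish is a monotonicity statement mirroring Hopcroft--Karp distance-monotonicity, adapted to the online setting. When SAP augments along a shortest alternating path $P = c_0 s_1 c_1 \cdots c_{k-1} s_k$ (with $c_0$ the newly arrived client and $s_k$ the previously-free server), one shows $d_{M'}(v) \geq d_M(v)$ for every vertex $v$, where $M' = M \oplus P$. Since SAP chooses the shortest path from $c_0$, the free server $s_k$ is a closest free server to every $c_i$ on $P$; any shorter alternating path in $M'$ could be spliced with a prefix of $P$ to give an alternating path in $M$ shorter than $P$, contradicting SAP's choice. Consequently $\ell$ is non-decreasing per client over time, and the total of all level-increases summed over the entire sequence of insertions is bounded by the final value of $\Phi$, namely $\Oo(n \log n)$.

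The core of the argument is a charging lemma: a single SAP of length $L$ forces $\Phi$ to rise by $\Omega(L / \log n)$. Combined with the $\Oo(n \log n)$ total-increase budget this immediately yields $\sum L \leq \Oo(n \log^2 n)$ as desired. To prove the lemma, observe that the clients $c_1,\ldots,c_{k-1}$ on $P$ satisfy $d_M(c_i) = 2(k-i)-1$, so they populate $\Omega(\log L)$ distinct dyadic level-buckets. Removing $s_k$ from the free-server set, combined with distance monotonicity, should force at least a constant fraction of the clients in each non-top bucket to jump to a strictly higher level, contributing $\Omega(L/\log L) = \Omega(L/\log n)$ to $\Phi$.

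The main obstacle is making this bucket-jumping claim precise: monotonicity alone only gives $d_{M'}(c_i) \geq d_M(c_i)$, not a strict bucket-crossing, so one needs a careful structural analysis of the alternating graph under $M'$. One must rule out alternative short alternating paths from $c_i$ to a free server in $M'$, essentially by showing that any such path would either contradict the shortest-path property of $P$ (by splicing with a suffix of $P$), or would be forced to re-enter the augmenting path at $c_0$, which by the SAP property lacks any ``shortcut'' unmatched edge to a nearby free server. Formalizing this simultaneously for every dyadic bucket on $P$ --- so as to convert monotone non-decreases into the required $\Omega(L/\log n)$ strict level-jumps --- is the delicate part, and is where I expect the bulk of the technical work to lie.
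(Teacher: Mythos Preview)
Your plan differs sharply from the paper's, and the step you yourself flag as ``the delicate part'' --- the charging lemma that one SAP augmentation of length $L$ raises $\Phi$ by $\Omega(L/\log n)$ --- is not merely delicate but false as stated. Take servers $s_0,\dots,s_k,t_0,\dots,t_k$ and clients $a_1,\dots,a_k,b_1,\dots,b_k$ with $N(a_i)=\{s_i,s_{i-1},t_{i-1}\}$ and $N(b_i)=\{t_i,t_{i-1}\}$. With suitable tie-breaking, SAP on the arrival order $a_1,b_1,a_2,b_2,\dots$ produces the matching $a_i\!-\!s_i$, $b_i\!-\!t_i$, leaving $s_0,t_0$ free and giving $d_M(a_i)=d_M(b_i)=2i-1$. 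Now insert a client $c$ whose only neighbour is $s_k$: SAP augments down the $s$-chain, a path of length $2k+1$. After augmentation each $a_i$ (now matched to $s_{i-1}$) still has the alternating tail $a_i,t_{i-1},b_{i-1},\dots,t_0$ of length exactly $2i-1$ to the remaining free server $t_0$, and each $b_i$ is untouched; hence no existing level changes and $\Delta\Phi=\ell_{M'}(c)=O(\log n)$ while $L=\Theta(k)$. Replacing the single $t$-chain by several parallel chains with analogous cross-edges lets this repeat: many consecutive long augmentations, each contributing only $O(\log n)$ to $\Phi$. So no per-path (or bounded-lookahead) charging against this distance potential can succeed; the ``alternative short alternating paths'' you hoped to rule out are exactly what the construction supplies.

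The paper avoids this trap by abandoning matching-dependent quantities altogether (and explicitly remarks that height-type potentials ``have had limited success because the matching $M$ can change quite erratically''). Its analysis tracks the \emph{balanced server flow} $\alpha\colon S\to[0,1]$, a fractional load vector uniquely determined by the current graph $G$ and independent of $M$. Two structural facts replace your monotonicity and charging: (i) each $\alpha(s)$ is monotone non-decreasing as clients arrive; (ii) an Expansion Lemma shows that whenever $\alpha(s)\le 1-\epsilon$, \emph{every} maximum matching admits an augmenting tail from $s$ of length at most $\tfrac{2}{\epsilon}\ln n$. Hence if the shortest augmenting path from the new client exceeds $h$, that client's entire unit of flow lands on servers already in the band $[1-\tfrac{2\ln n}{h},\,1]$, so at most $O(n\ln(n)/h)$ such insertions can ever occur; summing over dyadic $h$ gives the $\Oo(n\log^2 n)$ bound. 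The point is that $\alpha$ ignores which matching is present, so parallel alternating structure of the kind above cannot hide the progress an insertion makes.
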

This is a huge improvement of the $\Oo(n\sqrt{n})$ bound by~\cite{BosekLSZ14}, and is only a log factor from the lower bound of $\Omega(n\log n)$ by~\cite{Grove:95}. It is also a huge improvement of the analysis of SAP; previously no better upper bound than $\Oo(n^2)$ replacements for SAP was known.
To attain the result we develop a new tool for analyzing matching-related properties of graphs
(the balanced flow in Sections~\ref{sec:server-flow} and~\ref{sec:maxmatch}) that is quite general, 
and that we believe may be of independent interest.

Although SAP is an obvious way of serving the clients as they come, it does not immediately allow for an efficient implementation. Finding an augmenting path may take up to $\Oo(m)$ time, where $m$ denotes the total number of edges in the final graph. Thus, the naive implementation takes $\Oo(mn)$ total time. However, short augmenting paths can be found substantially faster, and 
using the new analytical tools developed in this paper, we are able to exploit this in a data structure that finds the augmenting paths efficiently:

\begin{restatable}{theorem}{implement}
    \label{thm:implement}
    There is an implementation of the SAP protocol that runs in total time $\Oo(m\sqrt{n}\sqrt{\log n})$. 
\end{restatable}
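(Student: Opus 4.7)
The plan is to exploit the $\Oo(n\log^2 n)$ replacement bound from Theorem~\ref{thm:main-uncapacitated} through a Hopcroft--Karp-style split of augmenting paths by length. Fix a threshold $L = \Theta(\sqrt{n\log n})$. For each client insertion, perform a BFS in the alternating-path graph starting from the newly arrived client, terminating as soon as a free server is reached; if depth $L$ is exceeded without finding one, fall back to a full BFS in $\Oo(m)$ time. Summing the cost of actually performing the augmentations is $\Oo(\sum_i \ell_i) = \Oo(n\log^2 n)$, and is dominated by the search cost analyzed below.

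The long-path case is essentially free given Theorem~\ref{thm:main-uncapacitated}. Since the total length of all augmenting paths is $\Oo(n\log^2 n)$, the number of paths of length strictly greater than $L$ is at most $\Oo(n\log^2 n/L) = \Oo(\sqrt{n\log n})$, so the cumulative cost of all fallback full BFSes is $\Oo(m\sqrt{n\log n})$, within the target bound. The challenge is therefore to make the short-path searches also fit into $\Oo(m\sqrt{n\log n})$ total time, whereas a naive per-search cost of $\Oo(m)$ would only give the useless bound $\Oo(mn)$.

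My plan for the short-path case is to maintain an alternating BFS level structure rooted at the set of free servers, updated incrementally after each augmentation in the spirit of Even--Shiloach, so that when a new client arrives the shortest augmenting path can be extracted by walking down levels, and each incremental update touches only edges whose endpoint levels can have changed. The amortization is where the main difficulty lies, and where I would lean most heavily on the analytical tools developed earlier in the paper: the balanced-flow framework of Sections~\ref{sec:server-flow}--\ref{sec:maxmatch} provides a slowly-changing potential that tightly controls how BFS depths evolve across the sequence of SAP augmentations, which is exactly what is needed to charge the cost of depth-$\leq L$ searches to either (i) edges on the augmenting path itself (paying $\Oo(n\log^2 n)$ in total) or (ii) changes in the potential (paying $\Oo(mL) = \Oo(m\sqrt{n\log n})$ in total).

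The main obstacle will be step three: ensuring that the incremental BFS structure can be rebuilt after each augmentation in time proportional to the change in the balanced flow, rather than from scratch. This requires integrating the combinatorial structure used to prove Theorem~\ref{thm:main-uncapacitated} directly into the data structure, rather than treating the replacement bound as a black box; in particular, the level structure must be updated only along those portions of the graph where the balanced flow certificate itself has changed, so that the per-edge charge over the whole execution is $\Oo(L)$ on average.
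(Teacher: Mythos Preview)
Your architecture matches the paper's---threshold at $h=\sqrt{n\log n}$, an Even--Shiloach-style incremental BFS rooted at the free servers for short paths, brute force for long ones---but two pieces are off. First, the long-path count: from Theorem~\ref{thm:main-uncapacitated} you get $\Oo(n\log^2 n / L)$ long paths, which for $L=\sqrt{n\log n}$ is $\Oo(\sqrt{n}\log^{3/2}n)$, not $\Oo(\sqrt{n\log n})$; the resulting brute-force cost $\Oo(m\sqrt{n}\log^{3/2}n)$ overshoots the target. The paper instead invokes Lemma~\ref{lem:bound-long-paths} directly, which bounds the number of length-$>h$ paths by $\Oo(n\ln(n)/h)=\Oo(\sqrt{n\log n})$ and recovers the stated running time.

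Second, and more substantively, you misidentify what makes the short-path amortization work. You propose to use the balanced-flow values as the potential that ``tightly controls how BFS depths evolve,'' and speak of integrating that structure directly into the data structure. But the balanced flow is not what drives the Even--Shiloach bound here. The actual key is a purely combinatorial shortest-path fact: reversing the edges of a shortest alternating path (and deleting the terminal edge to the sink) never \emph{decreases} any vertex's distance to the sink, so the standard Even--Shiloach analysis already gives $\Oo(mh)$ total maintenance time with no reference to $\alpha$ whatsoever (this is the folklore Lemma~\ref{lem:ES-extended} in the paper). Where the paper \emph{does} use balanced flow in the implementation is for a case your proposal omits entirely: when a brute-force search finds no augmenting path, all vertices it reached are permanently deleted from the graph, and Lemma~\ref{lem:no-augmenting-path} (via $\alpham(s)=1$ being permanent) certifies that no future augmenting path can pass through them. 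Without that deletion step, unmatchable clients could each trigger an $\Oo(m)$ search, and the total could be $\Oo(mn)$.
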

Note that this is only an $\Oo(\sqrt{\log n})$ factor from the offline algorithm of Hopcroft and Karp~\cite{Hopcroft73}. 
This offline algorithm had previously been matched in the online setting by the algorithm of Bosek et al.~\cite{BosekLSZ14}, 
which has total running time $O(m\sqrt{n})$. 
Our result has the advantage of combining multiple desired properties in a single algorithm: 
few replacements ($O(n\log^2(n))$ vs. $O(n^{1.5})$ in \cite{BosekLSZ14}), fast implementation ($\Oo(m\sqrt{n}\sqrt{\log n})$
vs. $\Oo(m\sqrt{n})$ in \cite{BosekLSZ14}), and the most natural augmentation protocol (shortest augmenting path).

Extending our result to the case where each server can have multiple clients, we use that the capacitated assignment problem is equivalent to that of matching 
(see Section~\ref{sec:capacitated-assignment} to obtain:

\begin{restatable}{theorem}{capacitate}
    \label{thm:main-capacitated}
    SAP uses at most $\Oo(n\log^2 n)$ reassignments for the capacitated assignment problem, where $n$ is the number of clients.
\end{restatable}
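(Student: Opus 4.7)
The plan is to deduce Theorem~\ref{thm:main-capacitated} from Theorem~\ref{thm:main-uncapacitated} via the standard reduction from capacitated assignment to uncapacitated bipartite matching sketched in Section~\ref{sec:capacitated-assignment}. Concretely, given an instance with servers $S$ of capacities $u(s)$ and $n$ arriving clients $C$, I would build an expanded bipartite graph $(S',C,E')$ in which each server $s$ is replaced by $u(s)$ unit-capacity copies $s_1,\dots,s_{u(s)}$, all sharing the neighborhood of $s$. A capacitated assignment of $C$ into $S$ corresponds bijectively to a matching of $C$ into $S'$: assigning client $c$ to $s$ in the original problem corresponds to matching $c$ to some free copy $s_i$ in the expanded problem.

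The second step is to verify that SAP executed on the capacitated side simulates SAP executed on the expanded matching side (with a consistent tie-breaking rule). When a client $c$ arrives, a shortest alternating walk from $c$ in the capacitated instance (ending at a server with residual capacity) has the same length as a shortest augmenting path from $c$ to an unmatched copy in $(S',C,E')$, and each reassignment on the capacitated side corresponds to exactly one matching edge flip on the expanded side. In particular, the total number of reassignments made by SAP on the capacitated instance equals the total number of edge replacements made by SAP on the expanded matching instance.

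Finally I would apply Theorem~\ref{thm:main-uncapacitated} to $(S',C,E')$. Since the number of clients in the expanded instance is the same $n$ as in the original capacitated instance, the theorem bounds the total number of matching replacements by $\Oo(n\log^2 n)$, giving the claimed bound on reassignments. The one point that needs a sanity check, and the main obstacle to a pure one-line reduction, is that the expanded graph can be enormous if the capacities $u(s)$ are large, so I want to be sure that the bound of Theorem~\ref{thm:main-uncapacitated} really depends only on the number of clients $n$ and not on the total vertex count. This is indeed how the theorem is stated; moreover, one can always restrict attention to $\Oo(n)$ server copies per server (or $n$ relevant copies in total), since at most $n$ copies can ever be matched during the entire process, and unmatched copies never become the endpoint of any SAP augmentation until they do. With this observation in hand, the reduction plugs straight into Theorem~\ref{thm:main-uncapacitated} and yields the desired $\Oo(n\log^2 n)$ bound.
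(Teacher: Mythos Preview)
Your proposal is correct and follows essentially the same approach as the paper: reduce to the uncapacitated case by replacing each server $s$ with $u(s)$ unit-capacity copies, observe that SAP on the capacitated instance corresponds step-for-step (and flip-for-flip) to SAP on the expanded instance, and then apply Theorem~\ref{thm:main-uncapacitated}, noting that its bound depends only on the number of clients $n$, not on the number of servers. The paper makes exactly the same observation you flagged about the server count being irrelevant to the bound.
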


In the case where we wish to minimize the maximum load, such a small number of total reassignments is not possible.  Let $\opt(G)$ denote the minimum possible maximum load in graph $G$.  We present a lower bound showing
that when $\opt(G)=\maxload$ we may need as many as $\Omega(n \maxload)$ reassignments, as well as a nearly matching upper bound.

\begin{restatable}{theorem}{lowerbound}
    \label{thm:assignment-lower-bound}
    For any positive integers $n$ and $\maxload \leq \sqrt{n/2}$ divisible by $4$
    there exists a graph $G=(C\cup S,E)$ with $\sizeof{C} = n$ and $\opt(G) = \maxload$,
    along with an ordering in which the clients in $C$ are inserted, 
    such that any algorithm for the exact online assignment problem
    requires a total of $\Omega(n\maxload)$ changes.
    This lower bound holds even if the algorithm knows the entire graph $G$ in advance,
    as well as the order in which the clients are inserted.
\end{restatable}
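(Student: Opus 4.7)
The plan is to exhibit a specific bipartite graph $G=(C\cup S,E)$ with $\sizeof{C}=n$ and a specific insertion order, and then prove structurally (independent of the algorithm) that every valid sequence of load-$L$ assignments must undergo $\Omega(nL)$ total client--server changes. The structural nature of the claim is what will give the "full-foresight" strengthening in the theorem statement: neither knowing $G$ nor knowing the insertion order in advance helps the algorithm.

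I would build $G$ as the vertex-disjoint union of $\Theta(n/L^{2})$ copies of a single gadget $H$, each of which uses $\Theta(L^{2})$ clients on its own dedicated $\Theta(L)$ servers. Disjointness makes the change-counts additive, so it suffices to show that each gadget forces $\Omega(L^{3})$ changes. The hypothesis $L\le\sqrt{n/2}$ ensures that $n$ is large enough for at least one full gadget (and generically $\Theta(n/L^{2})$ of them), while the divisibility assumption $4\mid L$ will make all internal counts in the gadget integral. The gadget $H$ I would design is a chain of $L+1$ servers $s_{0},s_{1},\ldots,s_{L}$ populated by two kinds of clients. The first kind are $L$ \emph{bridge} clients at each level $i\in\{1,\ldots,L\}$, whose only edges are $\{s_{i-1},s_{i}\}$; inserting them first in a prescribed order fills $s_{0},\ldots,s_{L-1}$ to load exactly $L$ and leaves $s_{L}$ empty. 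The second kind are \emph{trigger} clients, each with a single edge to the head $s_{0}$; triggers are interleaved with refresh bridges introduced at the tail of the chain to keep the chain able to absorb subsequent cascades. One of the two key lemmas is that $\opt=L$ holds after every insertion, which I would prove by explicitly exhibiting a valid load-$L$ assignment at each step.

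The heart of the argument is that every trigger insertion forces $\Omega(L)$ reassignments. Just before a trigger arrives, every server $s_{0},\ldots,s_{L-1}$ is at load exactly $L$, so the trigger overflows $s_{0}$. Because every bridge client's edge set consists of only two consecutive servers, the only way to restore load $\le L$ everywhere without extra capacity is to push one unit of load from $s_{0}$ onto $s_{1}$, then from $s_{1}$ onto $s_{2}$, and so on down the chain until a server with residual capacity is reached; the rigid two-server bridge edges forbid any shortcut. This gives a cascade of $\Omega(L)$ reassignments per trigger.

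The main obstacle, and where I expect the bulk of the technical work to live, is sustaining this $\Omega(L)$ cascade length across many consecutive triggers within a single gadget. After one cascade, every intermediate server $s_{i}$ has exchanged one forward-pointing bridge (edges $\{s_{i},s_{i+1}\}$) for a backward-pointing import (edges $\{s_{i-1},s_{i}\}$); after roughly $L$ unrefreshed triggers the chain cannot cascade any further. I plan to handle this by interleaving triggers with carefully placed refresh insertions that both extend the chain with fresh servers $s_{L+1},s_{L+2},\ldots$ as its tail saturates and restore the supply of movable bridges on each intermediate server. The delicate accounting is to tune the setup/trigger/refresh ratios so that each gadget absorbs $\Theta(L^{2})$ client insertions but forces $\Omega(L^{3})$ reassignments in total; summing over the $\Theta(n/L^{2})$ gadgets then yields the advertised $\Omega(nL)$ bound.
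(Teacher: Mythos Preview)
Your outer shell --- take $\Theta(n/L^{2})$ disjoint copies of a gadget on $\Theta(L^{2})$ clients and show each copy forces $\Omega(L^{3})$ changes --- is exactly what the paper does. The gap is in the per-trigger cascade claim. You assert that ``just before a trigger arrives, every server $s_{0},\ldots,s_{L-1}$ is at load exactly $L$'', but the algorithm is only required to be in \emph{some} optimal assignment, and after your bridge setup the optimum is far from unique. Concretely: the assignment that places every level-$i$ bridge on $s_{i}$ (rather than $s_{i-1}$) is also optimal, and it leaves $s_{0}$ empty. A foresighted algorithm --- which the theorem explicitly allows --- can simply sit in that state; each of the first $L$ triggers then lands on $s_{0}$ at cost one with no cascade at all, and after $L$ triggers $s_{0}$ is saturated by degree-one clients, so no further triggers are possible without pushing $\opt$ above $L$. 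The refresh machinery you sketch does not address this: the issue is not that the chain runs out of movable bridges \emph{after} a cascade, it is that the algorithm was never forced to cascade in the first place.

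The paper's gadget avoids per-insertion reasoning entirely. It uses $L$ servers and $L$ client blocks $C_{1},\ldots,C_{L}$ where $C_{i}$ is adjacent to $\{s_{i},s_{i+1}\}$ for $i<L$ but $C_{L}$ is adjacent only to $s_{L}$; this one-sided anchor at $s_{L}$ is precisely what makes the optimal assignment \emph{unique} at carefully chosen checkpoints. Insertions alternate between ``down-heavy'' and ``up-heavy'' epochs (adding two clients to each block in the lower half, respectively the upper half), and one verifies directly that the unique optimum after a down-heavy epoch differs from the unique optimum before it in $\Omega(L^{2})$ client placements. With $\Theta(L)$ epochs this gives $\Omega(L^{3})$, and the argument only ever compares two forced states at checkpoints --- it never claims anything about what a single insertion must cost. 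If you want to salvage a trigger-style argument, you would need either an anchor that makes the pre-trigger optimum unique (at which point you are essentially rebuilding the paper's construction) or a global potential that lower-bounds the total change across \emph{every} sequence of optimal assignments, not just the one your cascade intuition suggests.
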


\begin{restatable}{theorem}{minmax}
    \label{thm:assignment-upper-bound}
Let $C$ be the set of all clients inserted, let $n = \sizeof{C}$, and let $\maxload=\opt(G)$
be the minimum possible maximum load in the final graph $G = (C \cup S, E)$.
SAP at all times maintains an optimal assignment while making
a total of $\Oo(n\min \set{\maxload \,\smash{\log^2 n}, \sqrt{n}\log n})$ 
reassignments.
\end{restatable}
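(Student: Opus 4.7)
The plan is to first establish that SAP maintains an optimal min-max assignment at every insertion, and then to bound the total reassignments via two independent analyses that yield the two branches of the minimum.

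For optimality, I would argue inductively on the number of inserted clients. Assume the assignment just before $c_t$'s arrival has max load $L=\opt$. SAP searches for a shortest augmenting path from $c_t$ to a server with load strictly below $L$. If such a path exists, max load remains $L$, which is still optimal because $\opt$ is monotone non-decreasing under insertions. Otherwise, a Hall-type duality argument on the residual graph implies that every assignment of $c_t$ must raise max load to at least $L+1$, so SAP's augmentation to a server with load exactly $L$ is optimal.

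For the bound $\Oo(n\maxload\log^2 n)$, I would partition the insertion sequence into \emph{phases} by the current max load: phase $\ell$ is the period during which the max load equals $\ell$. Via the server-replication reduction underlying Theorem~\ref{thm:main-capacitated}, I would interpret min-max SAP during phase $\ell$ as online capacitated SAP with per-server capacity $\ell$, and argue that the reassignments during phase $\ell$ can be charged to a single capacitated SAP instance on all $n$ insertions with capacity $\ell$. By Theorem~\ref{thm:main-capacitated}, this contributes at most $\Oo(n\log^2 n)$ per phase, and summing over the $\maxload$ phases yields the first bound.

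For the bound $\Oo(n\sqrt{n}\log n)$, I would use a Hopcroft--Karp-style length argument in the layered matching view. Since the current matching size is at most $n$, after $k$ augmentations the shortest augmenting path has length $\Oo(n/(n-k))$, so at most $\Oo(\sqrt{n})$ augmenting paths can exceed length $\sqrt{n}$. Combining this with a bucket-wise accounting that incurs an extra logarithmic overhead gives total reassignments $\Oo(n\sqrt{n}\log n)$.

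The main obstacle will be the phase-based charging in the first bound. Since the matching state persists across phases rather than resetting, one must argue that the reassignments within each phase can still be validly charged to a fresh capacitated SAP instance; this likely requires either a careful coupling argument or an extension of the balanced-flow potential from Sections~\ref{sec:server-flow} and~\ref{sec:maxmatch} to account for the inter-phase behaviour.
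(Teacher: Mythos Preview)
Your approach to the first bound, $\Oo(n\maxload\log^2 n)$, is essentially the paper's: partition into phases (the paper calls them epochs) by the current value of $\opt$, and within each phase view SAP as capacitated SAP via server replication. You correctly identify the obstacle, namely that the matching state carries over between phases rather than resetting. The paper resolves this precisely via Observation~\ref{obs:initial-matching}: because the key potential argument (Lemmas~\ref{lem:expand} and~\ref{lem:bound-long-paths}) depends only on the balanced server flow $\alpha$ and not on the particular matching, the $\Oo(n\log^2 n)$ bound for capacitated SAP holds regardless of what matching one starts the epoch with. So your sketch plus Observation~\ref{obs:initial-matching} is exactly the paper's argument.

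Your argument for the second bound, $\Oo(n\sqrt{n}\log n)$, has a genuine gap. The statement ``after $k$ augmentations the shortest augmenting path has length $\Oo(n/(n-k))$'' is the offline Hopcroft--Karp lemma; it requires that the graph is fixed and that one always augments along a \emph{globally} shortest augmenting path. Neither holds here: clients arrive online, and SAP only guarantees the shortest path \emph{from the newly arrived client}. Indeed, the entire point of Lemma~\ref{lem:bound-long-paths} is that in the online setting one can only prove the weaker $\Oo(n\ln(n)/h)$ bound rather than $\Oo(n/h)$. Even granting Lemma~\ref{lem:bound-long-paths}, it applies to a single matching instance, whereas in the min-max problem the notion of ``free server'' changes from epoch to epoch, so a direct bucket-wise summation across all $n$ augmentations is not justified.

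The paper takes a different route for this bound: it applies the first bound to the first $\sqrt{n}/\log n$ epochs, contributing $\Oo(n\sqrt{n}\log n)$. In every later epoch the optimum load is at least $\sqrt{n}/\log n$, so any server at maximum load carries at least that many clients; hence there are at most $\sqrt{n}\log n$ such servers, and every augmenting path has length $\Oo(\sqrt{n}\log n)$. Summing over the $n$ augmentations gives the remaining $\Oo(n\sqrt{n}\log n)$.
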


\subsection{High level overview of techniques}
\label{sec:overview}
Consider the standard setting in which we are given the entire graph from the beginning
and want to compute a maximum matching.
The classic shortest-augmenting paths algorithm constructs a matching by at
every step picking a shortest augmenting path in the graph. We now show a very simple
argument that the total length of all these augmenting paths is $\Oo(n\log n)$.
Recall the well-known fact 
that if all augmenting paths in the matching have length $\geq h$, then
the current matching is at most $2n/h$ edges from optimal~\cite{Hopcroft73}.
Thus the algorithm augments down at most $2n/h$ augmenting paths of length $\geq h$. 
Let $P_1, P_2, ..., P_k$
denote all the paths augmented down by the algorithm in decreasing order of $\sizeof{P_i}$;
then $k \leq n$, and $\sizeof{P_i}=h$ implies $i\leq2n/h$. But then $\sizeof{P_i} \leq 2n/i$,
so $\sum_{1 \leq i \leq k} \sizeof{P_i} \leq 2n\sum_{1\leq i\leq k}\frac{1}{i} = 2n(\ln(k)+\Oo(1)) = \Oo(n\log k) = \Oo(n\log n)$.

In the online setting, the algorithm does not have access to the entire graph.
It can only choose the shortest augmenting path from the arriving client $c$.
We are nonetheless able to show a similar bound for this setting:

\begin{restatable}{lemma}{longpaths}
\label{lem:bound-long-paths}
Consider the following protocol for constructing a matching: For each client $c$ in arbitrary order, augment along the shortest
augmenting path from $c$ (if one exists).
Given any $h$, this protocol augments down a total of at most 
$4n\ln(n)/h$ augmenting paths of length $> h$. 
\end{restatable}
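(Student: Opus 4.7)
The plan is to mimic the offline shortest-augmenting-paths argument sketched just above (where the $i$-th longest path has length at most $2n/i$, giving total length $O(n\log n)$) and adapt it to the online setting, absorbing an extra logarithmic factor because SAP processes augmentations in an arbitrary---not length-increasing---order.

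I would begin by pinning down the invariants of SAP. Because SAP augments whenever possible, the matching $M_t$ after the $t$-th insertion is always maximum in the current graph $G_t$. Hence when client $c_{t+1}$ arrives, every augmenting path in $(G_{t+1}, M_t)$ must pass through $c_{t+1}$, and SAP's augmentation length equals the alternating-path distance from $c_{t+1}$ to the nearest free server in $(G_{t+1}, M_t)$. A \emph{long} augmentation (length $>h$) is therefore exactly the event that $c_{t+1}$ lies at alternating distance $>h$ from every free server at the moment of its insertion.

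Next I would bridge to the final graph $G_n$ via a Hopcroft-Karp-style vertex-covering argument. At any time $t$, the matching $M_t$ is at most $|M_n|-|M_t|$ edges from the maximum matching of $G_n$, so there exist $|M_n|-|M_t|$ vertex-disjoint augmenting paths in $(G_n, M_t)$ of total length at most $2n$; in particular the shortest has length at most $2n/(|M_n|-|M_t|)$. To count long augmentations, I would partition the insertion steps by the value of $k := |M_n|-|M_{t-1}|$ at the moment of insertion, apply this structural bound within each group, and sum via the harmonic series $\sum_k 1/k = O(\log n)$, mirroring the offline harmonic calculation. The constant $4\ln n$ then drops out naturally from the sum of $2n/k$ across the relevant range of $k$.

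The main obstacle is precisely that the alternating distance SAP actually \emph{sees} in $(G_{t+1}, M_t)$ can be much larger than the corresponding distance available in $(G_n, M_t)$, because later-arriving clients may provide shortcuts that SAP cannot yet exploit. The delicate step is therefore to ``refund'' each such shortcut to the future insertion that creates it, showing that the total refund chargeable to any single later insertion is $O(\log n)$. This amortization, layered on top of the harmonic sum across $k$, is what yields the $4n\ln n/h$ bound; the extra $\log n$ factor over the clean offline $2n/h$ bound appears to be intrinsic to the online setting rather than a slackness in the analysis.
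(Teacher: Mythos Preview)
Your proposal contains a genuine gap at precisely the point you yourself flag as ``the delicate step.'' Everything up to that point is fine: you correctly observe that $M_t$ is maximum in $G_t$, that in $(G_n,M_t)$ there are $|M_n|-|M_t|$ disjoint augmenting paths, and hence that \emph{in $G_n$} the shortest such path has length at most $2n/(|M_n|-|M_t|)$. But the quantity you need to bound is the shortest augmenting path \emph{in $G_{t+1}$}, and you offer no mechanism to relate the two beyond the phrase ``refund each such shortcut to the future insertion that creates it.'' The claim that the total refund chargeable to any single later insertion is $O(\log n)$ is asserted, not argued, and it is exactly the heart of the problem. A single future client can sit on short alternating routes for many earlier insertions, and nothing in your setup prevents the charges from piling up. Without a concrete charging rule and a proof that it terminates with $O(\log n)$ charge per vertex, the harmonic sum never gets off the ground; the whole argument reduces to ``the offline bound is $O(n/h)$, and one extra $\log n$ should cover the online loss,'' which is a conjecture, not a proof.

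The paper's proof avoids this obstacle entirely by abandoning any attempt to compare the current graph to the final graph or to track properties of the matching. Instead it introduces a potential on the \emph{servers} that depends only on the current graph $G_t$ (the balanced server flow $\alpha$), proves that these potentials are monotone nondecreasing under client insertion, and shows via an expansion argument that if the shortest augmenting path from $c$ exceeds $h$ then every neighbour $s$ of $c$ already has $\alpha(s)\ge 1-2\ln(n)/h$; moreover the unit of flow contributed by $c$ lands entirely on servers at or above this level. Since each such server can absorb at most $2\ln(n)/h$ more flow before saturating at $1$, and there are fewer than $2n$ such servers, at most $4n\ln(n)/h$ long insertions can occur. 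The key idea you are missing is a matching-independent, monotone potential; your Hopcroft--Karp framing keeps you tied to comparisons between matchings in different graphs, which is exactly where the analysis breaks.
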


The proof of our main theorem then follows directly from the lemma.

\begin{proof}[{\bf Proof of Theorem~\ref{thm:main-uncapacitated}}]
  Note that the SAP protocol exactly follows the condition of Lemma~\ref{lem:bound-long-paths}. 
  Now, Given any $0 \leq i \leq \log_2(n) + 1$, we say that an augmenting path 
  is at level $i$ if its length is in the interval $[2^i, 2^{i+1})$.
    By Lemma~\ref{lem:bound-long-paths}, the SAP protocol augments down at most
    $4n\ln(n)/2^i$ paths of level $i$. Since each of those paths contains at most
    $2^{i+1}$ edges, the total length of augmenting paths of level $i$
    is at most $8n\ln(n)$. Summing over all levels yields the desired $\Oo(n\log^2 n)$ bound.
\end{proof}

The entirety of Sections~\ref{sec:server-flow} and~\ref{sec:maxmatch} is devoted to proving Lemma
\ref{lem:bound-long-paths}. Previous algorithms attempted to bound
the total number of reassignments
by tracking how some property of the matching $M$ changes over time.
For example, the analysis of Gupta et al.~\cite{Gupta:2014} keeps track of
changes to the "height" of vertices in $M$,
while the algorithm with $\Oo(n\sqrt{n})$ reassignments~\cite{BosekLSZ14} 
takes a more direct approach, and 
uses a non-SAP protocol whose changes to $M$ depend on 
how often each particular client has already been reassigned. 

Unfortunately such arguments have had limited success because the matching $M$
can change quite erratically. This is especially true under the SAP protocol, which
is why it has only been analyzed in very restrictive settings~\cite{conf/infocom/ChaudhuriDKL09,Grove:95,Bosek2015}.
We overcome this difficulty by showing that it is enough to analyze how new clients
change the structure of the graph $G = (C \cup S, E)$, without reference to any particular matching.

Intuitively, our analysis keeps track of  how "necessary" 
each server $s$ is (denoted $\alpha(s)$ below). So for example,
if there is a complete bipartite graph with 10 servers and 10 clients, then all servers
are completely necessary. But if the complete graph has 20 servers and 10 clients, then while any matching has 10 matched servers and 10 unmatched ones, 
it is clear that if we abstract away from the particular matching every server is 1/2-necessary.
Of course in more complicated graphs different servers might have different necessities, and
some necessities might be very close to 1 (say $1 - 1/n^{2/3}$).
Note that server necessities depend only on the graph, not on any particular matching.
Note also that our algorithm never
computes the server necessities, as they are merely an analytical tool.

We relate necessities to the number of reassignments with 2 crucial arguments. 
{\bf 1.} Server necessities only increase as clients are inserted, and once a server has 
$\alpha(s) = 1$, then regardless of the current matching, no future augmenting path will go through $s$.
{\bf 2.} If, \emph{in any matching}, the shortest augmenting path from a new client $c$ is long,
then the insertion of $c$ will increase the necessity of servers that already had high necessity.
We then argue that this cannot happen too many times before the servers involved have necessity 1,
and thus do not partake in any future augmenting paths.

\subsection{Paper outline}
In Section~\ref{sec:prelim}, we introduce the terminology necessary to understand the paper. In Section~\ref{sec:server-flow}, we introduce and reason about the abstraction of a balanced server flow, a number that reflects the necessity of each server. In Section~\ref{sec:maxmatch}, we use the balanced server flow to prove Lemma~\ref{lem:bound-long-paths}, which proves our main theorem that SAP makes a total of $\Oo(n \log^2 n)$ replacements. In Section~\ref{sec:implementation}, we give an efficient implementation of SAP. Finally, in Section~\ref{sec:extend}, we present our results on capacitated online assignment, and for minimizing maximum server load in the online assignment problem.

\section{Preliminaries and notation}\label{sec:prelim}

Let $(C,S)$ be the vertices, and $E$ be the edges of a bipartite graph.  We call $C$ the \emph{clients}, and $S$ the \emph{servers}.  Clients arrive, one at a time, and we must maintain an explicit maximum matching of the clients. For simplicity of notation, we assume for the rest of the
paper that $C \neq \emptyset$.
For any vertex $v$, let $N(v)$ denote the neighborhood of $v$, and for any $V\subseteq C\cup S$ let $N(V)=\bigcup_{v\in V}N(v)$.

\begin{theorem}[Halls Marriage Theorem~\cite{JLMS:JLMS0026}]\label{thm:hall}
  There is a matching of size $\sizeof{C}$   if and only if
  $\sizeof{K}\leq\sizeof{N(K)}$ for all $K\subseteq C$.
\end{theorem}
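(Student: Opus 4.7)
The plan is to prove Hall's Marriage Theorem by induction on $\sizeof{C}$. The ``only if'' direction is immediate: in any matching of size $\sizeof{C}$, distinct clients in $K$ are sent to distinct servers in $N(K)$, so $\sizeof{N(K)} \geq \sizeof{K}$. The real work is the ``if'' direction.

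For the base case $\sizeof{C}=1$, Hall's condition applied to $C$ itself gives $\sizeof{N(c)}\geq 1$, so we match $c$ to any neighbor. For the inductive step with $\sizeof{C}\geq 2$, I would split into two cases based on whether Hall's condition holds with slack. \textbf{Case 1:} Every nonempty proper $K\subsetneq C$ satisfies $\sizeof{N(K)}\geq\sizeof{K}+1$. Pick any $c\in C$ and any $s\in N(c)$, match them, and consider the subgraph on $(C\setminus\{c\},S\setminus\{s\})$. For any nonempty $K\subseteq C\setminus\{c\}$ we have $\sizeof{N(K)\setminus\{s\}}\geq\sizeof{N(K)}-1\geq\sizeof{K}$, so Hall's condition still holds and the induction hypothesis delivers a matching of $C\setminus\{c\}$, which extends by the edge $\{c,s\}$.

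\textbf{Case 2:} There exists a nonempty proper $K\subsetneq C$ with $\sizeof{N(K)}=\sizeof{K}$. Hall's condition restricts to the subgraph on $(K,N(K))$ because for any $L\subseteq K$ we have $N_G(L)\subseteq N(K)$, and by induction we obtain a matching of $K$ into $N(K)$. For the remaining clients, consider $G'$ on $(C\setminus K, S\setminus N(K))$. For any $L\subseteq C\setminus K$, applying Hall's condition in $G$ to $L\cup K$ gives $\sizeof{N_G(L\cup K)}\geq\sizeof{L}+\sizeof{K}$; since $N_G(L\cup K)=N(K)\cup(N_G(L)\setminus N(K))$ is a disjoint union of size $\sizeof{K}+\sizeof{N_{G'}(L)}$, we conclude $\sizeof{N_{G'}(L)}\geq\sizeof{L}$. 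Induction then yields a matching of $C\setminus K$ into $S\setminus N(K)$, and the union of the two disjoint matchings matches all of $C$.

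The main subtlety is in Case 2: one must check that Hall's condition is inherited both by the ``tight'' subgraph $(K,N(K))$ and by the complementary subgraph $(C\setminus K,S\setminus N(K))$, and the latter check requires invoking the hypothesis on the superset $L\cup K$ rather than on $L$ alone. Once that bookkeeping is done, the induction closes cleanly. An alternative I would consider is a max-flow/augmenting-path argument: orient edges from $C$ to $S$, add a source with unit edges to $C$ and a sink with unit edges from $S$, and show that any $s$--$t$ cut of size $<\sizeof{C}$ induces a Hall violator; this is conceptually closer to the rest of the paper's matching-theoretic setting, but the inductive proof is shorter and self-contained.
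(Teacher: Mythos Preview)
Your proof is correct and is the standard inductive argument for Hall's theorem. Note, however, that the paper does not supply its own proof of this statement: Theorem~\ref{thm:hall} is simply cited as a classical result from~\cite{JLMS:JLMS0026} and used as a black box (e.g.\ in Lemma~\ref{lem:serverflowexists} and Lemma~\ref{lem:alphalessone}). So there is no in-paper proof to compare your approach against; your writeup is a fine self-contained justification, but the paper itself treats the theorem as background.
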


\begin{definition}
    \label{dfn:augmenting-path}
    Given any matching in a graph $G = (C \cup S, E)$, an alternating path is one which alternates between unmatched and matched edges.     An augmenting path is an alternating path that starts and ends with an unmatched vertex. Given any augmenting path $P$, ``flipping'' the matched status of every edge on $P$ gives a new larger matching. We call this process \emph{augmenting down $P$}.
\end{definition}

Denote by SAP the algorithm
that upon the arrival of a new client $c$ augments down the shortest augmenting path from $c$;
ties can be broken arbitrarily, and if no augmenting path from $c$ exists the algorithm does nothing.
Chaudhuri et al.~\cite{conf/infocom/ChaudhuriDKL09} 
showed that if the final graph contains a perfect matching, then
the SAP protocol also returns a perfect matching. We now generalize this as follows

\begin{observation}
\label{obs:ignore-clients}
Because of the nature of augmenting paths, once a client $c$ or a server $s$ is matched by the SAP protocol, it will remain matched
during all future client insertions. On the other hand, if a client $c$ arrives and there is
no augmenting path from $c$ to a free server, then during the entire sequence of client insertions $c$ will never be matched by the SAP protocol; no alternating path can go through $c$ because it
is not incident to any matched edges.
\end{observation}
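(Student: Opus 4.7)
My plan is to derive both parts of the observation from one elementary structural fact, which is immediate from Definition~\ref{dfn:augmenting-path}: when SAP augments down a path $P$, every internal vertex of $P$ is incident to exactly one matched edge of $P$ both before and after the flip (possibly a different partner), the two endpoints switch from unmatched to matched, and all vertices outside $P$ are untouched. There is essentially no quantitative content; the only place care is needed is in the alternation argument for the second claim.

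For the first claim (once matched, always matched), I would induct on the sequence of client insertions. At each step SAP either does nothing, or augments down a single path $P$ from the new client $c'$ to a free server $s'$. In either case the structural fact above shows that every previously matched vertex remains matched: vertices off $P$ trivially, internal vertices of $P$ by the alternation, and the endpoints $c'$ and $s'$ simply move from unmatched to matched. So no matched vertex ever becomes unmatched.

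For the second claim, let $c$ be a client that arrives at step $k$ with no augmenting path to a free server, so SAP leaves $c$ unmatched at step $k$. Suppose for contradiction that $c$ becomes matched for the first time at some later step $k' > k$, via a SAP augmentation on a path $P'$ from a new client $c'$ to a free server $s'$. Then $c \neq s'$ since $c$ is a client, and $c \neq c'$ since $c'$ is the new arrival at step $k' \neq k$, so $c$ is an internal vertex of $P'$. In a bipartite alternating path from an unmatched client to an unmatched server the edges alternate unmatched, matched, unmatched, $\dots$, unmatched, so every internal client vertex is incident to exactly one matched edge of $P'$ at the moment of augmentation. But $c$ has no matched incident edge at step $k'$: any matched edge $(c,s)$ would mean $c$ is matched, contradicting the choice of $k'$ as the first step at which $c$ becomes matched. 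This contradiction shows that $c$ never becomes matched. The only place to be careful is this last parity step; the rest is bookkeeping.
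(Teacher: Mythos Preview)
Your proof is correct and follows essentially the same reasoning the paper gives: the paper treats this as a self-evident observation and embeds the justification (``no alternating path can go through $c$ because it is not incident to any matched edges'') directly in the statement rather than giving a separate proof. Your argument simply fleshes out that one-line parity observation with explicit induction and a contradiction, which is fine and arguably clearer.
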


\begin{lemma}
\label{lem:sap-correctness}
The SAP protocol always maintains a maximum matching in the current graph $G = (C \cup S, E)$.
\end{lemma}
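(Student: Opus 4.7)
The plan is to prove this by induction on the number of clients that have been inserted. The base case is trivial: before any clients arrive, the empty matching is maximum in a graph with no clients (recall the paper assumes $C \neq \emptyset$ only in the final graph). For the inductive step, I would assume that immediately before the arrival of client $c$, the SAP protocol maintains a maximum matching $M$ in the current graph $G = (C \cup S, E)$; let $G' = (C \cup \{c\} \cup S, E')$ denote the graph after the insertion.

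The key combinatorial fact to invoke is that adding a single vertex can increase the size of a maximum matching by at most one: any matching in $G'$ becomes a matching in $G$ after deleting at most the one edge incident to $c$. Hence any matching in $G'$ has size at most $\abs{M}+1$.

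I would then split into two cases according to SAP's action. In the first case, SAP finds an augmenting path from $c$ in $G'$ with respect to $M$ and augments down it, yielding a matching $M'$ of size $\abs{M}+1$; by the bound above, $M'$ is maximum in $G'$. In the second case, no augmenting path from $c$ exists in $G'$ with respect to $M$, and SAP leaves $M$ unchanged. By Berge's theorem, it suffices to rule out any augmenting path in $G'$ with respect to $M$. Any such path $P$ has two unmatched endpoints. If $P$ does not visit $c$, then $P$ uses only edges of $G$ and is already an augmenting path in $G$ with respect to $M$, contradicting the inductive hypothesis that $M$ is maximum in $G$. If $P$ does visit $c$, then since the only edges of $G'$ not in $G$ are those incident to $c$ and since $c$ is currently unmatched, $c$ must be an endpoint of $P$, so $P$ is an augmenting path from $c$ — contradicting the case assumption.

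I do not expect a serious obstacle here: the result is an immediate consequence of Berge's theorem combined with the one-vertex-increment bound. The only subtlety is ensuring that previously matched clients remain matched, so that $c$ is the \emph{only} candidate for a new unmatched endpoint of an augmenting path; this is exactly what Observation~\ref{obs:ignore-clients} guarantees, so no further work is needed.
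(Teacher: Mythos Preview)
Your proposal is correct and takes essentially the same approach as the paper: both argue by induction (the paper phrases it as ``consider the first client $c$ for which $M$ fails to be maximum''), both use that adding one vertex raises the matching number by at most one, and both invoke the augmenting-path characterization of maximum matchings. The only cosmetic difference is that the paper appeals directly to the symmetric difference $M \oplus M'$ to produce an augmenting path from $c$, whereas you invoke Berge's theorem and do an explicit case split on whether a hypothetical augmenting path passes through $c$; these are two phrasings of the same standard fact.
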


\begin{proof}
Consider for contradiction the first client $c$ such that after the insertion of $c$, 
the matching $M$ maintained
by the SAP protocol is not a maximum matching. Let $C$ be the set of clients before $c$ was inserted.
Since $M$ is maximum in the graph $G = (C \cup S,E)$ but not in 
$G' = (C \cup S \cup \set{c},E)$,
it is clear that $c$ is matched in the maximum matching $M'$ of $G'$ but not in $M$. 
But this contradicts the well known property of augmenting paths that the symmetric difference 
$M \oplus M'$ contains an augmenting path in $M$ from $c$ to a free server.
\end{proof}

\section{The server flow abstraction}
\label{sec:server-flow}

\subsection{Defining the Server Flow}
We now formalize the notion of server necessities from Section~\ref{sec:overview}
by using a flow-based notation. 
The necessity of a server $s$ will be the value $\alpha (s)\in \left[0,1\right]$ of a balanced server flow $\alpha$: We will now go on to define a server flow, define what it means for a server flow to be balanced, and then, show that the balanced server flow is unique.
\begin{definition}\label{def:flow}
 Given any graph $G = (C \cup S, E)$, define a \emph{server flow} $\alpha$ as any map from $S$ to the 
nonnegative reals such that there exist nonnegative $(x_e)_{e\in E}$ with:
  \begin{align*}
    \forall c\in C&: \sum_{s\in N(c)} x_{cs} = 1
    &
    \forall s\in S&: \sum_{c\in N(s)} x_{cs} = \alpha(s)
  \end{align*}
  We say that such a set of $x$-values \emph{realize} the server flow.
\end{definition}

A server flow can be thought of as a fractional assignment from $C$ to $S$; note, however,
that is is not necessarily a fractional matching, since servers may have a load greater than $1$.
Note also that the same server flow may be realized in more than one way.
Furthermore, if $\sizeof{N(c)}=0$ for some $c\in C$ then $\sum_{s\in N(c)}x_{cs}=0\neq 1$, so no server flow is possible.  So suppose (unless otherwise noted) that $\sizeof{N(c)}\geq 1$ for all $c\in C$.

The following theorem can be seen as a generalization of Hall's Marriage Theorem:
\begin{lemma}\label{lem:serverflowexists}
  If $\max_{\emptyset\subset K\subseteq C}\frac{\sizeof{K}}{\sizeof{N(K)}} = \frac{p}{q}$, then there exists a server flow where every server $s\in S$ has $\alpha(s) \leq \frac{p}{q}$.
\end{lemma}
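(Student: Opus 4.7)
The plan is to recast the existence question as a maximum $s$--$t$ flow problem and invoke max-flow/min-cut. I would build a network on vertex set $\set{\sigma,\tau}\cup C\cup S$ with a unit-capacity arc $\sigma\to c$ for each $c\in C$, an infinite-capacity arc $c\to s$ for each edge $cs\in E$, and a capacity-$p/q$ arc $s\to \tau$ for each $s\in S$. A feasible $\sigma$--$\tau$ flow of value $\sizeof{C}$ necessarily saturates every arc out of $\sigma$ and respects the capacity on every arc into $\tau$, so the flow values on the middle layer can be read off as the $x_{cs}$ realizing a server flow with $\alpha(s)\le p/q$. Conversely, any such server flow gives a flow of value $\sizeof{C}$, so it is enough to show that the maximum flow equals $\sizeof{C}$.

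The upper bound on the maximum flow is immediate from the arcs out of $\sigma$; the real content is the lower bound, and this is the step I expect to be the main obstacle. I would argue it by analyzing an arbitrary $\sigma$--$\tau$ cut and invoking max-flow/min-cut. Let $A$ denote the source side of the cut. A finite cut cannot sever any infinite middle arc, which forces $N(K)\subseteq A\cap S$ where $K = A\cap C$. Counting arcs crossing the cut then gives capacity at least
\begin{equation*}
  \sizeof{C\setminus K} + \tfrac{p}{q}\,\sizeof{A\cap S} \;\ge\; \sizeof{C\setminus K} + \tfrac{p}{q}\,\sizeof{N(K)}.
\end{equation*}
If $K=\emptyset$ this is already $\sizeof{C}$. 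Otherwise the hypothesis gives $\sizeof{K}/\sizeof{N(K)}\le p/q$, hence $\tfrac{p}{q}\sizeof{N(K)}\ge\sizeof{K}$, and the capacity is again at least $\sizeof{C\setminus K}+\sizeof{K}=\sizeof{C}$.

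The standing assumption $\sizeof{N(c)}\ge 1$ ensures the hypothesis is well-defined (every singleton $K$ has a nonempty neighborhood) and guarantees each client is able to route its unit of flow somewhere. Conceptually, the lemma is a weighted fractional analogue of Hall's marriage theorem: specializing to $p/q = 1$ recovers the classical condition for a fractional perfect matching of $C$ into $S$, while larger ratios accommodate the fractional ``overloading'' of individual servers that the server-flow formulation allows.
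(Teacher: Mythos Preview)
Your proof is correct, but it takes a different route from the paper's. The paper argues via an integer blow-up and Hall's theorem: it replaces each client by $q$ copies and each server by $p$ copies, checks that the blown-up bipartite graph satisfies Hall's condition (since any $K^*\subseteq C^*$ projects to some $K\subseteq C$ with $\sizeof{K^*}\le q\sizeof{K}\le p\sizeof{N(K)}=\sizeof{N(K^*)}$), extracts an integer matching $M$ covering all of $C^*$, and then sets $x_{cs}$ to be $1/q$ times the number of matched copy-pairs over $(c,s)$. Your argument instead goes through max-flow/min-cut on the natural three-layer network with capacities $1$, $\infty$, and $p/q$, and the cut analysis you give is exactly right.

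Both approaches are standard; the trade-offs are mild. The paper's blow-up makes the ``generalized Hall'' framing literal and, as a byproduct, shows the $x_{cs}$ can be taken in $\tfrac{1}{q}\mathbb{Z}_{\ge 0}$, which is a slightly stronger conclusion than needed. Your flow argument is cleaner, avoids the auxiliary copies, and would go through verbatim for an arbitrary real threshold in place of $p/q$ (irrelevant here, since the maximum ratio is always rational, but it makes the argument feel more robust). Either proof would be perfectly acceptable in the paper.
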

\begin{proof}
  Let $\cstar$ be the original set $C$ but with $q$ copies of each client. Similarly let $\sstar$ contain $p$ copies of each server, and let $\Estar$ consist of all $pq$ edges between copies of the endpoints of each edge in $E$.

  Now let $K^*\subseteq \cstar$, and let $K\subseteq C$ be the originals that the vertices in $K^*$ are copies of.  Then $\sizeof{K^*}\leq q\sizeof{K}\leq p\sizeof{N(K)}=\sizeof{N(K^*)}$, so the graph $(\cstar\cup\sstar,\Estar)$ satisfies Hall's theorem and thus it has some matching $M$ in which every client in $\cstar$ is matched.
    Now, for $cs\in E$ let 
  \begin{align*}
    x_{cs}&=\tfrac{1}{q}\sizeof{\set{c^*s^*\in M\cond c^*\text{ is a copy of }c\text{ and }s^*\text{ is a copy of }s\strut\!}}
  \end{align*}
  Since for each $c\in C$ all $q$ copies of $c$ are matched, $\sum_{s\in N(c)}x_{cs}=\frac{q}{q}=1$ for all $c\in C$.  Similarly, since for each $s\in S$ at most $p$ copies of $s$ are matched, $\sum_{c\in N(s)}x_{cs}\leq \frac{p}{q}$.
Thus, $(x_e)_{e\in E}$ realizes the desired server flow.
\end{proof}

\begin{definition}\label{def:balanced}
  We say that a server flow $\alpha$ is \emph{balanced}, if additionally:
  \begin{align*}
    &\forall c\in C, s\in N(c)\setminus A(c): x_{cs} = 0
    &\text{where }A(c) &= \argmin_{s\in N(c)} \alpha(s)
  \end{align*}
  That is, if each client only sends flow to its least loaded neighbours.

  We call the set $A(c)$ the \emph{active} neighbors of $c$, and we call an edge $cs$ \emph{active} when $s \in A(c)$.
  We extend the definition to sets of clients in the natural way, so for $K\subseteq C$, $A(K)=\bigcup_{c\in K}A(c)$.
\end{definition}

\subsection{Uniqueness of Loads in a Balanced Server Flow}
\label{subsec:uniqueness}
Note that while there may be more than one server flow, we will show that the balanced server flow 
$\alpha$ is unique, although there may be many possible $x$-values $x_{cs}$ that realize $\alpha$.

\begin{lemma}\label{lem:alphaunique}
	A unique balanced server flow exists if and only if $\sizeof{N(c)}\geq1$ for all $c \in C$.
\end{lemma}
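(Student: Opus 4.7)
The ``only if'' direction is immediate: if some $c\in C$ has $N(c)=\emptyset$, then the constraint $\sum_{s\in N(c)}x_{cs}=1$ cannot be satisfied, so no server flow---balanced or otherwise---exists. Thus assume henceforth that $|N(c)|\geq 1$ for every $c \in C$. My plan is to obtain existence and uniqueness simultaneously from a single canonical ``peel off the tightest layer'' construction, then match any other balanced flow to it layer by layer.

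For existence, set $\rho_1 = \max_{\emptyset\subsetneq K\subseteq C} |K|/|N(K)|$ and let $K^*_1$ be the inclusion-maximum subset achieving this ratio; this is well-defined because the family of maximizers is closed under union (using $N(K_1\cup K_2)=N(K_1)\cup N(K_2)$, $N(K_1\cap K_2)\subseteq N(K_1)\cap N(K_2)$, and the bound $|K_1\cap K_2|\leq\rho_1|N(K_1\cap K_2)|$ coming from maximality of $\rho_1$). Apply Lemma~\ref{lem:serverflowexists} to the bipartite subgraph on $K^*_1\cup N(K^*_1)$: it yields a flow under which every $s\in N(K^*_1)$ carries load at most $\rho_1$, and since the total client load equals $|K^*_1|=\rho_1|N(K^*_1)|$, every such server in fact has load exactly $\rho_1$. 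I set $\alpha^*(s)=\rho_1$ on $N(K^*_1)$, delete $K^*_1\cup N(K^*_1)$, and recurse. The successive ratios $\rho_1>\rho_2>\cdots$ are strictly decreasing: any $K$ achieving ratio $\geq\rho_1$ in the stripped graph would have ratio $\geq\rho_1$ in the original (its neighborhood only shrinks), forcing $K\subseteq K^*_1$ by maximality and contradicting disjointness. To see $\alpha^*$ is balanced, consider $c\in K^*_i$: neighbors removed at an earlier stage $j<i$ have load $\rho_j>\rho_i$, while neighbors surviving to stage $i$ must lie in $N(K^*_i)$ since $N_{G_i}(c)\subseteq N_{G_i}(K^*_i)=N(K^*_i)$ with load $\rho_i$; hence $A(c)=N(K^*_i)$, and all of $c$'s flow is directed into this set by construction.

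For uniqueness, let $\alpha$ be any balanced server flow with realization $x$. I will show the top layer of $\alpha$ must agree with $(K^*_1,N(K^*_1),\rho_1)$; induction on $|C|$ then closes the argument. Put $M=\max_s\alpha(s)$, $T=\{s:\alpha(s)=M\}$, and $K=\{c:x_{cs}>0 \text{ for some } s\in T\}$. For $c\in K$, balance says $c$ has an active neighbor in $T$, so all active neighbors share load $M$, and maximality of $M$ forces $N(c)\subseteq T$. Consequently $N(K)\subseteq T$, every $c\in K$ sends its entire unit into $T$, and each $s\in T$ receives flow only from clients in $K$ (any $c$ with $x_{cs}>0$ lies in $K$ by definition), yielding $N(K)=T$ and $|K|=M|T|$. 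Hence $M=|K|/|N(K)|\leq\rho_1$. Conversely, $K^*_1$ dumps all $|K^*_1|$ of its outgoing flow into $N(K^*_1)$, so $\sum_{s\in N(K^*_1)}\alpha(s)\geq|K^*_1|=\rho_1|N(K^*_1)|$, giving $M\geq\rho_1$ and therefore $M=\rho_1$, with $K\subseteq K^*_1$ by maximality. Combining $\alpha(s)\leq\rho_1$ everywhere with the same sum bound forces $\alpha(s)=\rho_1$ throughout $N(K^*_1)$ and $x_{cs}=0$ whenever $c\notin K^*_1$ and $s\in N(K^*_1)$; hence $N(K^*_1)\subseteq T$ and $K^*_1\subseteq K$. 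The restriction of $x$ to $G\setminus(K^*_1\cup N(K^*_1))$ is then a balanced server flow on the stripped graph, and every remaining client still has at least one neighbor (since its unit of outflow must go somewhere and none of it reaches $N(K^*_1)$), so the induction hypothesis applies.

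The step I expect to be the main obstacle is the bookkeeping at the top of the uniqueness argument, where $K=K^*_1$, $\alpha\equiv\rho_1$ on $N(K^*_1)$, and the ``no leak'' property $x_{cs}=0$ for $c\notin K^*_1, s\in N(K^*_1)$ must all be extracted together from the equality $M=\rho_1$. These three facts are exactly what make the restriction of $\alpha$ to the stripped graph a well-defined balanced flow so that the induction closes the loop; the existence side, by contrast, is mechanical once Lemma~\ref{lem:serverflowexists} and the union-closure of tight sets are in hand.
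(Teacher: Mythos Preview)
Your proof is correct and follows essentially the same combinatorial ``peel off the tightest layer'' strategy as the paper's second (longer) proof of this lemma; the paper also gives a shorter first proof via strict convexity, observing that any balanced server flow minimizes $\sum_s \alpha(s)^2$ among all server flows. Your existence construction and your top-layer matching argument for uniqueness both mirror the paper's decomposition into $\Cmax \cup N(\Cmax)$ and its complement, followed by induction on graph size.

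One small slip worth fixing: in arguing $\rho_2 < \rho_1$, you write that a set $K$ with ratio at least $\rho_1$ in the stripped graph ``would have ratio $\geq\rho_1$ in the original (its neighborhood only shrinks).'' That inference points the wrong way---passing from the stripped graph back to the original can only enlarge $N(K)$ and hence lower the ratio. The repair is to consider $K \cup K^*_1$ in the original graph: since $N(K \cup K^*_1)$ is the disjoint union $N(K^*_1) \cup N_{G_2}(K)$, the ratio $\sizeof{K \cup K^*_1}/\sizeof{N(K \cup K^*_1)}$ is the mediant of $\rho_1$ and $\sizeof{K}/\sizeof{N_{G_2}(K)} \geq \rho_1$, hence itself at least $\rho_1$; so $K \cup K^*_1$ is a maximizer, forcing $K \subseteq K^*_1$ by inclusion-maximality and giving the desired contradiction. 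This is exactly the submodularity/union-closure argument you already invoked for the well-definedness of $K^*_1$, so no new idea is needed.
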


Clearly, it is necessary for all clients to have at least one neighbor for a server flow to exist, so the ``only if" part is obvious.
We dedicate the rest of this section to proving that this condition is sufficient.
In fact, we provide two different proofs of uniqueness, the first of which
is simpler but provides less intuition for what the unique $\alpha(s)$
values signify about the structure of the graph. 

\subsubsection{Short proof of Lemma~\ref{lem:alphaunique} via convex optimization}
\label{subsubsec:unique}
It is not hard to prove uniqueness by showing that a balanced server flow corresponds to the solution
to a convex program\footnote{The authors thank Seffi Naor for pointing this out to us.}.
Consider the convex optimization problem where the constraints
are those of a \emph{not necessarily balanced} server flow (Definition ~\ref{def:flow}),
and the objective function we seek to minimize is the sum of the squares of the server loads.

To be precise, the convex program contains a variable $\alpha_s$ for each server $s\in S$, and a variable $x_{cs}$ for each edge $\left(c,s\right)$ in the graph. Its objective is to minimize the function $\sum_{s\in S} \alpha_s^2$ subject to the constraints:
\begin{align*}
 0 & \leq x_{cs} \leq 1 &
\forall c\in C: & \sum_{s \in N(c)} x_{cs} = 1 &
\forall s\in S: & \sum_{c\in N(s)} x_{cs} = \alpha_s
\end{align*}

It is easy to check that because we introduce a separate variable $\alpha_s$ for each server load, the objective function is strictly convex, 
so the convex program has a unique minimum with respect to the server loads $\alpha_s$ (but not the edge flows).

We now observe that this unique solution is a \emph{balanced} server flow:
the constraints of the convex program ensure that it is a server flow, 
and were it not balanced, there would be some client $c$ that sends non-zero flow 
to both $s$ and $s^{\prime}$ where $\alpha(s)<\alpha(s^\prime)$, which would
be a contradiction because we can decrease the objective function by 
increasing $x_{cs}$ and decreasing $x_{cs^{\prime}}$. 
We have thus proved the existence of a balanced server flow.
 
We must now prove uniqueness, i.e. that all balanced server flows have the same server loads.
We will do this by showing that any balanced server flow optimizes the objective function of the convex function.
There are many standard approaches for proving this claim, but the simplest one we know of is based on existing literature 
on load balancing with selfish agents. 
In particular, we rely on the following simple auxiliary lemma, which is a simplified version of Lemma 2.2 in \cite{SuriTZ07}.

\begin{lemma}
\label{lem:dot-product}
Consider any balanced server flow $x_{cs}$, let $\alpha_s = \sum_{c \in C}x_{cs}$ be the server flow of $s$. Let $x^{\mystar}_{cs}$ be \emph{any} feasible
server flow, and let $\alpha_s^{\mystar} = \sum_{c \in C}x^{\mystar}_{cs}$ be the resulting server loads. Then, we always have:

$$\sum_{s \in S} \alpha^2_s \leq \sum_{s \in S} \alpha_s \alpha^{\mystar}_s $$

\end{lemma}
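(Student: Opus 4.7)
\medskip

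The plan is to rewrite both sides of the inequality as sums over edges of $E$ (weighted by the edge flows and the balanced server loads), regroup the resulting double sums by client, and then argue client by client using the defining property of a balanced server flow.

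\medskip

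More concretely, first I would substitute the definition $\alpha_s^{\mystar} = \sum_{c \in N(s)} x_{cs}^{\mystar}$ on the right hand side and $\alpha_s = \sum_{c \in N(s)} x_{cs}$ (one of the two factors) on the left hand side, and swap the order of summation, so that both sides become
\[
\sum_{s\in S}\alpha_s^2 = \sum_{c\in C}\sum_{s\in N(c)} x_{cs}\,\alpha_s,
\qquad
\sum_{s\in S}\alpha_s\alpha_s^{\mystar} = \sum_{c\in C}\sum_{s\in N(c)} x_{cs}^{\mystar}\,\alpha_s.
\]
It therefore suffices to prove, for each fixed client $c\in C$, the per-client inequality
\[
\sum_{s\in N(c)} x_{cs}\,\alpha_s \;\leq\; \sum_{s\in N(c)} x_{cs}^{\mystar}\,\alpha_s.
\]

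\medskip

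For the per-client inequality, let $m_c = \min_{s\in N(c)}\alpha_s$ be the minimum balanced load over $c$'s neighbors. Because the flow $(x_{cs})$ realizes the \emph{balanced} server flow $\alpha$, Definition~\ref{def:balanced} tells us that $x_{cs}=0$ whenever $\alpha_s>m_c$, so together with $\sum_{s\in N(c)}x_{cs}=1$ the left hand side collapses to
\[
\sum_{s\in N(c)} x_{cs}\,\alpha_s = m_c\sum_{s\in N(c)}x_{cs} = m_c.
\]
On the other hand, $(x_{cs}^{\mystar})$ is merely a feasible server flow, so it satisfies $x_{cs}^{\mystar}\geq 0$ and $\sum_{s\in N(c)}x_{cs}^{\mystar}=1$, giving
\[
\sum_{s\in N(c)} x_{cs}^{\mystar}\,\alpha_s \;\geq\; m_c\sum_{s\in N(c)}x_{cs}^{\mystar} = m_c,
\]
since every $\alpha_s$ appearing in the sum is at least $m_c$. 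Chaining these two relations yields the per-client inequality, and summing over $c\in C$ yields the lemma.

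\medskip

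I do not anticipate a real obstacle here: once one recognizes that after a routine bookkeeping step the claim reduces to a per-client statement, the balanced property immediately forces the left side for a client $c$ to equal $m_c\cdot 1$ while any other feasible distribution of the same unit of flow has cost at least $m_c$. The only thing to be careful about is that the argument uses $\alpha$ on both sides, not $\alpha^{\mystar}$, so the ``minimum neighbor'' we compare to is $\min_{s\in N(c)}\alpha_s$ and the balanced property of $x_{cs}$ is exactly what identifies this minimum with the left hand contribution.
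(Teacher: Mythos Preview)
Your proof is correct and is essentially identical to the paper's: both expand $\alpha_s$ (respectively $\alpha_s^{\mystar}$) into its defining sum, swap the order of summation to get a per-client expression, and then use that balancedness forces the $c$-contribution on the left to equal $\min_{s\in N(c)}\alpha_s$ while feasibility alone makes the $c$-contribution on the right at least that minimum. The paper writes $\mu(c)$ for your $m_c$ and sums over $c$ immediately rather than first isolating the per-client inequality, but the argument is the same.
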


\begin{proof}
For any client $c$, let $\mu(c)$ ($\mu$ for minimum) be the minimum server load neighboring $c$ in the balanced solution $x_{cs}$. That is,
$\mu(c) = \min_{s \in N(c)} \alpha_s$. We then have

$$\sum_{s \in S} \alpha^2_s = \sum_{s \in S} \sum_{c \in C} x_{cs} \alpha_s = \sum_{c \in C} \sum_{s \in S} x_{cs} \alpha_s = 
\sum_{c \in C} \sum_{s \in S}  x_{cs} \mu(c) = \sum_{c \in C} \mu(c),$$

where the last inequality follows from the fact that each client sends one unit of flow, and the before-last inequality follows
from the fact that the flow is balanced, so for any edge $(c,s) \in E$ with $x_{cs} \neq 0$ we have $\alpha_s = \mu(c)$.

From the definition of $\mu(c)$ it follows that for \emph{any} edge $(c,s) \in E$, we have $\alpha_s \geq \mu(c)$. 
This yields:

$$\sum_{s \in S} \alpha^{\mystar}_s \alpha_s  = \sum_{s \in S} \sum_{c \in C} x^{\mystar}_{cs} \alpha_s = \sum_{c \in C} \sum_{s \in S} x^{\mystar}_{cs} \alpha_s
\geq \sum_{c \in C} \sum_{s \in S}  x^{\mystar}_{cs} \mu(c) = \sum_{c \in C} \mu(c).$$

We thus have $\sum_{s \in S} \alpha^2_s = \sum_{c \in C} \mu(c)$ and $\sum_{s \in S} \alpha^{\mystar}_s \alpha_s \geq \sum_{c \in C} \mu(c)$,
which yields the lemma.
\end{proof}

We now argue that any balanced flow is an optimal solution to the convex program, and is thus unique.
Consider any balanced flow with loads $\alpha_s$. 
To show that $\alpha_s$ is optimum, we need to show that for any feasible solution $\alpha^{\mystar}_s$
we have $\sum_{s \in S} \alpha^2_s \leq \sum_{s \in S}(\alpha^{\mystar})^2_s$.
Equivalently, let $\boldsymbol{\alpha}$ and $\boldsymbol{\alpha^{\mystar}}$ be the vectors of server loads in the two solutions.
We want to show that $\norm{\boldsymbol{\alpha}} \leq \norm{\boldsymbol{\alpha^{\mystar}}}$. 
This follows trivially from Lemma \ref{lem:dot-product}, which is equivalent to $\norm{\boldsymbol{\alpha}}^2 \leq \boldsymbol{\alpha} \cdot \boldsymbol{\alpha^{\mystar}}$.

\subsubsection{Longer combinatorial proof of uniqueness}
Although the reduction to convex programming is the most direct proof of uniqueness, it 
has the disadvantage of not providing any insight into what the unique $\alpha(s)$
values actually correspond to. We thus provide a more complicated combinatorial proof which
shows that the $\alpha(s)$ correspond to a certain hierarchical decomposition of the graph.

The following lemma will help us upper and lower bound the sum of flow to a subset of servers.

\begin{lemma}\label{lem:alphasum}
  If $\alpha$ is a balanced server flow, then
  \begin{align*}
    \forall T\subseteq S:
    \sizeof{\set{c\in C\cond A(c)\subseteq T}\strut\!}
    \leq
    \sum_{s\in T}\alpha(s)
    \leq
    \sizeof{\set{c\in C\cond A(c)\cap T\neq\emptyset}\strut\!}
  \end{align*}
\end{lemma}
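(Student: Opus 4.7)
The plan is to prove both inequalities by the standard ``swap the order of summation'' trick on the realizing values $x_{cs}$, using the balance condition to restrict the inner sum from $N(c)$ to $A(c)$.

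First I would fix an arbitrary realization $(x_e)_{e \in E}$ of the balanced server flow $\alpha$, and rewrite the middle quantity as
\begin{align*}
  \sum_{s \in T} \alpha(s)
  = \sum_{s \in T} \sum_{c \in N(s)} x_{cs}
  = \sum_{c \in C} \sum_{s \in T \cap N(c)} x_{cs}.
\end{align*}
By Definition~\ref{def:balanced}, $x_{cs} = 0$ whenever $s \in N(c) \setminus A(c)$, so each inner sum equals $\sum_{s \in T \cap A(c)} x_{cs}$. Thus
\begin{align*}
  \sum_{s \in T} \alpha(s) = \sum_{c \in C} \sum_{s \in T \cap A(c)} x_{cs}.
\end{align*}

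For the upper bound, I would split the outer sum according to whether $A(c) \cap T = \emptyset$ or not. In the former case the inner sum is $0$; in the latter case I bound the inner sum by $\sum_{s \in A(c)} x_{cs} = \sum_{s \in N(c)} x_{cs} = 1$, where the first equality again uses balance and the second is the client-flow constraint from Definition~\ref{def:flow}. Summing gives at most one unit per client in $\{c : A(c) \cap T \neq \emptyset\}$, which is the desired upper bound.

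For the lower bound, I would restrict the outer sum to those $c$ with $A(c) \subseteq T$, since all terms $x_{cs}$ are nonnegative. For such $c$ we have $T \cap A(c) = A(c)$, so the inner sum equals $\sum_{s \in A(c)} x_{cs} = \sum_{s \in N(c)} x_{cs} = 1$ by the same reasoning as above. Summing yields exactly $|\{c : A(c) \subseteq T\}|$ as a lower bound on $\sum_{s \in T} \alpha(s)$. I do not expect any real obstacle here; the only subtlety is being careful to invoke balance at the right moment so that the restriction from $N(c)$ to $A(c)$ is justified, after which the two inequalities are immediate from nonnegativity and the client-normalization constraint.
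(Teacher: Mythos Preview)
Your proposal is correct and follows essentially the same approach as the paper: both arguments use that each client contributes exactly one unit of flow, all of it to servers in $A(c)$, so the sum $\sum_{s\in T}\alpha(s)$ picks up a full unit from every client with $A(c)\subseteq T$ and at most one unit from every client with $A(c)\cap T\neq\emptyset$. The paper's proof is just a two-sentence summary of this, while you spell out the order-swap on the $x_{cs}$ explicitly.
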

\begin{proof}
  The first inequality is true because each client in the first set contributes exactly one to the sum (but there may be other contributions).  The second inequality is true because every client contributes exactly one to $\sum_{s\in S}\alpha(s)$, and the inequality counts every client that contributes anything to $\sum_{s\in T}\alpha(s)$ as contributing one.
\end{proof}

The first step to proving that every graph has a unique server flow $\alpha$ is to show
that the maximum value $\alphamax=\max_{s\in S}\alpha(s)$ is uniquely defined.
We start by showing that the generalization of Hall's Marriage Theorem from Lemma~\ref{lem:serverflowexists} is ``tight'' for a balanced server flow in the sense that there does indeed exist a set of $p$ clients 
with neighbourhood of size $q$ realizing the maximum $\alpha$-value $\frac{p}{q}$. In fact, the maximally necessary servers and their active neighbours (defined below) form such a pair of sets:

\begin{lemma}\label{lem:alphaset}
  Let $\alpha$ be a balanced server flow, let $\alphamax=\max_{s\in S}\alpha(s)$ be the maximal necessity, let $\Smax=\set{s\in S\cond \alpha(s)=\alphamax}$ be the maximally necessary servers, and let $\Kmax=\set{c\in C\cond A(c)\cap \Smax\neq\emptyset}$ be their active neighbours.  Then $N(\Kmax)=\Smax$ and $\sizeof{\Kmax}=\alphamax\sizeof{\Smax}$.
\end{lemma}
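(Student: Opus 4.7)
My plan is to prove both claims from a single structural observation: for every $c \in \Kmax$ we actually have $N(c) \subseteq \Smax$ (so $A(c) = N(c)$). To see this, pick any $s^{*} \in A(c) \cap \Smax$. By the definition of active neighbors, $\alpha(s^{*}) = \min_{s\in N(c)} \alpha(s)$; since $\alpha(s^{*}) = \alphamax$ is also the global maximum of $\alpha$, every $s \in N(c)$ must satisfy $\alpha(s) = \alphamax$, so $N(c) \subseteq \Smax$.

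With this in hand, $N(\Kmax) \subseteq \Smax$ is immediate. For the reverse inclusion, I would first note that $\sum_{s \in S} \alpha(s) = \sizeof{C} > 0$ (summing each client's unit of flow), so $\alphamax > 0$. Hence for each $s \in \Smax$ there is some $c \in N(s)$ with $x_{cs} > 0$; balancedness forces $s \in A(c)$, placing $c$ in $\Kmax$ and thus $s$ in $N(\Kmax)$.

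For the cardinality identity, my plan is to invoke Lemma~\ref{lem:alphasum} with $T = \Smax$. The right-hand set $\set{c \in C \cond A(c) \cap \Smax \neq \emptyset}$ is literally $\Kmax$ by definition. The left-hand set $\set{c \in C \cond A(c) \subseteq \Smax}$ also equals $\Kmax$: the inclusion ``$\supseteq$'' is our key observation, since $A(c) = N(c) \subseteq \Smax$ for $c \in \Kmax$, and the inclusion ``$\subseteq$'' holds because $A(c)$ is nonempty (every client has at least one neighbor by our standing assumption), so $A(c) \subseteq \Smax$ forces $A(c) \cap \Smax = A(c) \neq \emptyset$. Since $\sum_{s \in \Smax} \alpha(s) = \alphamax \sizeof{\Smax}$, the two extremal bounds of Lemma~\ref{lem:alphasum} squeeze this sum between $\sizeof{\Kmax}$ and $\sizeof{\Kmax}$, yielding the desired equality.

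The entire argument is short and I do not expect any real obstacle; the only minor care points are justifying $\alphamax > 0$ when deriving $\Smax \subseteq N(\Kmax)$ and invoking $A(c) \neq \emptyset$ when collapsing the lower count in Lemma~\ref{lem:alphasum}.
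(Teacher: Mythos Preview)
Your proposal is correct and follows essentially the same route as the paper: both arguments hinge on the observation that $c\in\Kmax$ implies $N(c)=A(c)\subseteq\Smax$, and both finish the cardinality claim by sandwiching $\sum_{s\in\Smax}\alpha(s)=\alphamax\sizeof{\Smax}$ via Lemma~\ref{lem:alphasum} with $T=\Smax$. If anything, your write-up is a bit more explicit than the paper's: you spell out why $\alphamax>0$ and hence why $\Smax\subseteq N(\Kmax)$, a step the paper leaves implicit.
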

\begin{proof}
  Let $K=\set{c\in C\cond A(c)\subseteq \Smax}$, and note that $K \subseteq \Kmax$. However, we also have $\Kmax\subseteq K$: By definition of $\Smax$, and since we assume the server flow is balanced, $\Kmax\neq\emptyset$, and for every $c\in \Kmax$,  $N(c)=A(c)\subseteq\Smax$. 
Thus, $K = \Kmax$ and $N(\Kmax)=\Smax$. 
Now, note that by Lemma~\ref{lem:alphasum} 
  \begin{align*}
  \sizeof{\Kmax}=\sizeof{K}\leq\alphamax\sizeof{\Smax}\leq\sizeof{\Kmax}.
  \tag*{\qedhere}
  \end{align*} 
\end{proof}

We can thus show that $\alphamax$ exactly equals the maximal quotient $\sizefrac{K}{N(K)}$ over subsets $K$ of clients.

\begin{lemma}\label{lem:maxalpha}
  Let $\alpha$ be a balanced server flow, and let $\alphamax=\max_{s\in S}\alpha(s)$. Then 
  $$\alphamax=\max_{\emptyset\subset K\subseteq C}\frac{\sizeof{K}}{\sizeof{N(K)}}$$
  Furthermore, for any $K\subseteq C$, if $\sizeof{K}=\alphamax\sizeof{N(K)}$, then $\alpha(s)=\alphamax$ for all $s\in N(K)$.
\end{lemma}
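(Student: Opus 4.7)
The plan is to prove the two directions of the equality separately, and then extract the "furthermore" clause by inspecting where equality holds. Lemma~\ref{lem:serverflowexists} already gives one direction abstractly, but for a \emph{balanced} server flow I prefer to argue directly so that tight sets can be read off.

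For the lower bound $\alphamax \le \max_{\emptyset\subset K\subseteq C}\sizeof{K}/\sizeof{N(K)}$, I would just invoke Lemma~\ref{lem:alphaset}. That lemma supplies a specific witness: the set $\Kmax$ of clients whose active neighbourhoods intersect $\Smax$ satisfies $N(\Kmax)=\Smax$ and $\sizeof{\Kmax}=\alphamax\sizeof{\Smax}$, so plugging in $K=\Kmax$ gives $\sizeof{K}/\sizeof{N(K)}=\alphamax$. (Here I use that $\Kmax$ is non-empty, which follows from $C\neq\emptyset$ and the balancing condition.)

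For the upper bound $\alphamax \ge \sizeof{K}/\sizeof{N(K)}$ for every non-empty $K\subseteq C$, I would apply Lemma~\ref{lem:alphasum} to $T=N(K)$. The key observation is that for every $c\in K$, $A(c)\subseteq N(c)\subseteq N(K)$, so $K$ is a subset of $\{c\in C : A(c)\subseteq N(K)\}$. The first inequality of Lemma~\ref{lem:alphasum} then gives $\sizeof{K}\le \sum_{s\in N(K)}\alpha(s)$, and bounding each $\alpha(s)$ by $\alphamax$ yields $\sizeof{K}\le \alphamax\sizeof{N(K)}$, as required. Combining the two bounds gives the claimed equality.

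For the "furthermore" clause, note that the upper-bound argument gives the chain
\[
\sizeof{K} \;\le\; \sum_{s\in N(K)}\alpha(s) \;\le\; \alphamax\sizeof{N(K)}.
\]
If $\sizeof{K}=\alphamax\sizeof{N(K)}$, both inequalities must be equalities. In particular, $\sum_{s\in N(K)}\alpha(s)=\alphamax\sizeof{N(K)}$ together with $\alpha(s)\le\alphamax$ for each $s\in N(K)$ forces $\alpha(s)=\alphamax$ for every $s\in N(K)$. There is no real obstacle here; the only subtle point is making sure one uses $A(c)\subseteq N(c)$ (so that $A(c)\subseteq N(K)$ for $c\in K$), which is immediate from Definition~\ref{def:balanced}.
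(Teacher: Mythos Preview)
Your proposal is correct and follows essentially the same route as the paper: use Lemma~\ref{lem:alphaset} to exhibit a set achieving the ratio $\alphamax$, use the chain $\sizeof{K}\le\sum_{s\in N(K)}\alpha(s)\le\alphamax\sizeof{N(K)}$ for the reverse inequality, and read off the ``furthermore'' clause from equality in that chain. The only cosmetic difference is that the paper obtains $\sizeof{K}\le\sum_{s\in N(K)}\alpha(s)$ directly from the definition of a server flow (each client in $K$ contributes exactly one unit, all of it landing in $N(K)$), without invoking Lemma~\ref{lem:alphasum} or the balancing condition; your detour through $A(c)\subseteq N(K)$ is valid but unnecessary here.
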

\begin{proof}
  By definition of server flow, for $K\subseteq C$, $\sizeof{K}\leq\sum_{s\in N(K)}\alpha(s)\leq\alphamax\sizeof{N(K)}$, so $\sizeof{K}\leq\alphamax\sizeof{N(K)}$.
  Let $\Kmax$ be defined as in Lemma~\ref{lem:alphaset}.  Then $\alphamax=\frac{\sizeof{\Kmax}}{\sizeof{N(\Kmax)}\strut}\leq \max_{\emptyset\subset K\subseteq C}\frac{\sizeof{K}}{\sizeof{N(K)}}\leq\alphamax$.
  Finally, if $\sizeof{K}=\sum_{s\in N(K)}\alpha(s)=\alphamax\sizeof{N(K)}$ then $\alpha(s)\leq\alphamax$ for all $s\in S$ implies $\alpha(s)=\alphamax$ for $s\in N(K)$.
\end{proof}

\begin{corollary}\label{cor:alphasinglevalue}
  If $\max_{\emptyset\subset K\subseteq C}\frac{\sizeof{K}}{\sizeof{N(K)}}=\frac{\sizeof{C}}{\sizeof{S}}$ there is a unique balanced server flow.
\end{corollary}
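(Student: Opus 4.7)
The plan is to leverage Lemma~\ref{lem:maxalpha} twice: once to pin down the value of $\alphamax$, and once (via its ``furthermore'' clause) to conclude that every server in $S$ actually attains this value.

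First, I would fix an arbitrary balanced server flow $\alpha$ (the corollary asserts uniqueness; existence of at least one balanced flow under the stated hypothesis is handled elsewhere in Lemma~\ref{lem:alphaunique}). By Lemma~\ref{lem:maxalpha}, the maximum necessity satisfies
\[
\alphamax \;=\; \max_{\emptyset\subset K\subseteq C}\frac{\sizeof{K}}{\sizeof{N(K)}} \;=\; \frac{\sizeof{C}}{\sizeof{S}},
\]
the second equality being the hypothesis of the corollary.

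Next I would argue that $K = C$ itself achieves this maximum, i.e.\ that $N(C) = S$. Indeed, the inequality $\sizeof{C}/\sizeof{N(C)} \leq \alphamax = \sizeof{C}/\sizeof{S}$ gives $\sizeof{N(C)} \geq \sizeof{S}$, and since $N(C)\subseteq S$ we get $N(C) = S$. Consequently $\sizeof{C} = \alphamax\sizeof{N(C)}$, so the ``furthermore'' part of Lemma~\ref{lem:maxalpha} applies with $K = C$ and yields $\alpha(s) = \alphamax$ for every $s\in N(C) = S$.

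Since every server is therefore forced to have load exactly $\sizeof{C}/\sizeof{S}$, the vector $(\alpha(s))_{s\in S}$ is uniquely determined, which is the content of the corollary. The only potentially subtle point is confirming $N(C)=S$ so that no server is ``missed'' by the furthermore clause; but this follows immediately from the pigeonhole-style comparison of $\sizeof{N(C)}$ with $\sizeof{S}$ above, so there is no serious obstacle in the proof.
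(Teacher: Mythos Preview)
Your uniqueness argument is correct and is essentially what the paper's terse ``uniqueness follows from Lemma~\ref{lem:maxalpha}'' unpacks to. However, there is a genuine gap on the existence side.

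The corollary asserts that \emph{there is} a unique balanced server flow, so you must establish existence as well as uniqueness. You defer existence to Lemma~\ref{lem:alphaunique}, but this is circular: Corollary~\ref{cor:alphasinglevalue} sits inside the combinatorial proof of Lemma~\ref{lem:alphaunique} and is invoked there (both for the base case and in the inductive step when $N(\Cmax)=S$). So within this section you cannot assume Lemma~\ref{lem:alphaunique} is already available.

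The paper avoids this by appealing to Lemma~\ref{lem:serverflowexists} instead: under the hypothesis, that lemma yields a (not a priori balanced) server flow with $\alpha(s)\leq\sizeof{C}/\sizeof{S}$ for all $s$. Since $\sum_{s\in S}\alpha(s)=\sizeof{C}$, equality is forced everywhere, and a flow in which all server loads are equal is trivially balanced. That gives existence without circularity; your argument (or equivalently the total-flow pigeonhole) then handles uniqueness.
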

\begin{proof}
  By Lemma~\ref{lem:serverflowexists} there exists a server flow with $\alpha(s)\leq\frac{\sizeof{C}}{\sizeof{S}}$ for all $s\in S$.  Since $\sum_{s\in S}\alpha(s)=\sizeof{C}$, any such flow must actually have $\alpha(s)=\frac{\sizeof{C}}{\sizeof{S}}$ for all $s\in S$, and be balanced.
  Uniqueness follows from Lemma~\ref{lem:maxalpha}.
\end{proof}

We are now ready to give a combinatorial proof of uniqueness. We will do so by showing that the $\alpha(s)$ in fact
express a very nice structural property of the graph, which can be thought of as a hierarchy of 
''tightness" for the Hall constraint.
As shown in Lemma \ref{lem:maxalpha},
the maximum $\alpha$ value $\alphamax$ corresponds to the tightest Hall constraint,
i.e. the maximum possible value of $\sizeof{K}/\sizeof{N(K)}$. Now, there may be many sets $K$
with $\sizeof{K}/\sizeof{N(K)} = \alphamax$, so let $\Cmax$ be a maximal such set; we will show that $\Cmax$ is in fact the union
of all sets $K$ with $\sizeof{K}/\sizeof{N(K)} = \alphamax$. 
Now, by Lemma \ref{lem:maxalpha}, every server 
$s \in N(\Cmax)$ has $\alpha(s) = \alphamax$. We will show that in fact, because $\Cmax$ captured
\emph{all} sets with tightness $\alphamax$, all servers $s \notin N(\Cmax)$ have $\alpha(s) < \alphamax$.
Thus, because the flow is balanced, all active edges incident to $\Cmax$ or $\Smax$ will be between
$\Cmax$ and $\Smax$; there will be no active edges coming from the outside. For this reason,
any balanced server flow $\alpha$ on $G = (C \cup S)$ can be the thought of as the union of two completely 
independent server flows: the first (unique) flow assigns 
$\alpha(s) = \alphamax = \sizeof{\Cmax} / \sizeof{N(\Cmax)}$ to all $s \in \Smax$, 
while the second is a balanced server flow on the remaining graph $G \setminus (\Cmax \cup \Smax)$.
Since this remaining graph is smaller, we can use induction on the size of the graph to claim that this second balanced server flow has unique $\alpha$-values, which completes the proof of uniqueness. 
If we follow through the entire inductive chain, we end up with a hierarchy of $\alpha$-values, which can be viewed as the result
of the following peeling procedure: first find the (maximally large) set $C_1$ that maximizes $\alpha_1 = \sizeof{C_1}/\sizeof{N(C_1}$ and assign every server $s \in N(C_1)$ a value of $\alpha_1$; then peel off $C_1$ and $N(C_1)$, find the (maximally large) set $C_2$ in the remaining graph that maximizes $\alpha_2 = \sizeof{C_2}/\sizeof{N(C_2}$, and assign every server $s \in N(C_2)$ value $\alpha_2$; peel off $C_2$ and $N(C_2)$ and continue in this fashion, until every server has some value $\alpha_i$. These values 
$\alpha_i$ assigned to each server are precisely the unique $\alpha(s)$ in a balanced server flow.

\begin{remark}
We were unaware of this when submitting the extended abstract, but a
similar hierarchical decomposition was used earlier to compute 
an approximate matching in the
semi-streaming setting: see \cite{GoelKK12}, \cite{Kapralov13}. Note that unlike
those papers, we do not end up relying on this decomposition for our main arguments.
We only present it here to give a combinatorial alternative to the convex 
optimization proof above: regardless of which proof we use, once uniqueness is 
established, the rest of our analysis is expressed purely in terms of balanced
server flows.
\end{remark}

\begin{proof}[Proof of Lemma~\ref{lem:alphaunique}]
  As already noted, $\sizeof{N(c)}\geq 1$ for all $c\in C$ is a necessary condition.
  We will prove that it is sufficient by induction on $i=\sizeof{S}$.  If $\sizeof{S}=1$, the flow $\alpha(s)=\sizeof{C}$ for $s\in S$ is trivially the unique balanced server flow.  Suppose now that $i>1$ and that it
  holds for all $\sizeof{S}<i$.  Now let $\alphamax=\max_{\emptyset\subset K\subseteq C}\frac{\sizeof{K}}{\sizeof{N(K)}}$ and let   \begin{align*}
                                            \Cmax &= \bigcup_{K\in\mathcal{K}}K &\text{where }\mathcal{K}=\set{K\subseteq C\cond \sizeof{K}=\alphamax\sizeof{N(K)}\strut
    }
  \end{align*}
  Note that for any $K_1,K_2\in\mathcal{K}$ we have 
  \begin{align*}
    \alphamax\sizeof{N(K_1\cup K_2)} &\geq \sizeof{K_1\cup K_2}\tag{by definition of $\alphamax$}\\
    &= \sizeof{K_1}+\sizeof{K_2}-\sizeof{K_1\cap K_2} \\
    &= \alphamax\sizeof{N(K_1)}+\alphamax\sizeof{N(K_2)}-\sizeof{K_1\cap K_2} \tag{since $K_1,K_2\in\mathcal{K}$}\\
    &\geq \alphamax\sizeof{N(K_1)}+\alphamax\sizeof{N(K_2)}-\alphamax\sizeof{N(K_1\cap K_2)} \tag{by definition of $\alphamax$}\\
    & \geq\alphamax\sizeof{N(K_1\cup K_2)} \tag{since $\sizeof{N(\cdot)}$ is submodular}
  \end{align*} so $K_1\cup K_2\in\mathcal{K}$ and thus $\Cmax\in\mathcal{K}$ and
    $\sizeof{\Cmax}=\alphamax\sizeof{N(\Cmax)}$. 
  If $N(\Cmax)=S$ then $\Cmax=C$ (otherwise 
$\frac{\sizeof{\Cmax}}{\sizeof{N(\Cmax)}\strut} < \frac{\sizeof{C}}{\sizeof{S}} \leq \alphamax$) and by Corollary~\ref{cor:alphasinglevalue} we are done, so suppose $\emptyset\subset N(\Cmax)\subset S$.  Consider the subgraph $G_1$ induced by $\Cmax\cup N(\Cmax)$ and the subgraph $G_2$ induced by $(C\setminus \Cmax)\cup (S\setminus N(\Cmax))$.

  By Corollary~\ref{cor:alphasinglevalue}, $G_1$ has a unique balanced server flow $\alpha_1$ with 
        $\alpha_1(s)=\alphamax$ for all $s\in N(\Cmax)$. 

 By our induction hypothesis, $G_2$ also has a \emph{unique} balanced server flow $\alpha_2$.

We proceed to show that the combination of $\alpha_1$ with $\alpha_2$ constitutes a unique balanced flow $\alpha$ of the entire graph $G$, defined as follows:
$$\alpha(s)=
  \begin{cases}
    \alpha_1(s)&\text{if }s\in N(\Cmax)\\
    \alpha_2(s)&\text{otherwise}
  \end{cases}$$
Note first that $\alpha$ is a balanced server flow for $G$, because both $G_1$ and $G_2$ have a set of $x$-values that realize them, and by construction these values (together with zeroes for each edge between $C\setminus \Cmax$ and $N(\Cmax)$) realize a balanced server flow for $G$.

For uniqueness, note that by Lemma~\ref{lem:maxalpha} any balanced server flow 
$\alpha'$ for $G$ must have $\alpha'(s)=\alphamax=\alpha_1(s)$ for $s\in N(\Cmax)$. We now show that for any $s \in S\setminus N(\Cmax)$, any balanced server flow $\alpha'$
must also have $\alpha'(s) = \alpha_2(s)$; then, the uniqueness of $\alpha$ will follow 
from the uniqueness of $\alpha_1$ and $\alpha_2$. 

Let $\Smax=\set{s\in S\cond \alpha'(s)=\alphamax}$ be the set of maximally necessary servers, and let 
$\Kmax=\{c\in C\mid $
$A(c)\cap\Smax\neq\emptyset\}$ be the set of clients with a maximally necessary server in their active neighborhood.  We will show that $\Kmax=\Cmax$. 
\begin{description}
    \setlength\itemsep{0em}
    \item[``$\subseteq$''] By Lemma~\ref{lem:alphaset}, $\sizeof{\Kmax}=\alphamax\sizeof{N(\Kmax)}$ so by definition of $\Cmax$, $\Kmax\subseteq\Cmax$.
    \item[``$\supseteq$''] On the other hand, $\sizeof{\Cmax}=\alphamax\sizeof{N(\Cmax)}$ so by Lemma~\ref{lem:maxalpha} we have $N(\Cmax)\subseteq\Smax$ and in particular $A(c)\subseteq\Smax$ for $c$ in $\Cmax$ and thus $\Cmax\subseteq\Kmax$.
\end{description}

Thus, by definition of $\Kmax$,  $A(c)\cap\Smax=\emptyset$ for all $c\in C\setminus\Cmax$.  And there are clearly no edges between $\Cmax$ and $S\setminus N(\Cmax)$.
But then, for any $(x_e)_{e\in E}$ realizing $\alpha'$,
the subset $(x_{cs})_{c\in C\setminus\Cmax,s\in S \setminus N(\Cmax)}$ realizes a balanced server flow in $G_2$, so since $\alpha_2$ is the unique balanced server flow in $G_2$ we have
$\alpha'(s)=\alpha_2(s)$ for $s\in S\setminus N(\Cmax)$.
\end{proof}

\subsection{How Server Loads Change as New Clients are Inserted}

From now on, let $\alpha$ denote the unique balanced server flow.
We want to understand how the balanced server flow changes as new clients are added.  For any server $s$, let $\alphaold(s)$ be the flow in $s$ \emph{before} the insertion of $c$,
and let $\alphanew(s)$ be the flow \emph{after}.  Also, let $\Delta\alpha(s)=\alphanew(s)-\alphaold(s)$.

Intuitively, as more clients are added
to the graph, the flow on the servers only increases, so no $\alpha(s)$ ever decreases. 
We now prove this formally.

\begin{lemma}\label{lem:nondecreasing}
  When a new client $c$ is added, $\Delta\alpha(s)\geq0$ for all $s\in S$.
\end{lemma}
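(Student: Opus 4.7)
The plan is to argue by contradiction using a flow-decomposition of the difference between realizations of the old and new balanced flows. Suppose, toward contradiction, that $\sminus=\set{s\in S\cond \alphanew(s)<\alphaold(s)}\neq\emptyset$, and let $\splus=\set{s\in S\cond \alphanew(s)>\alphaold(s)}$. Fix any realizations $x$ of $\alphaold$ and $y$ of $\alphanew$, agreeing that $x_{cs}=0$ for every edge incident to the new client $c$. Define the signed edge flow $\delta_e=y_e-x_e$. Direct calculation gives net ``outflow'' $+1$ at $c$, conservation at every old client, and net ``inflow'' $\Delta\alpha(s)$ at every server $s$.

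Next, I reorient each edge according to the sign of $\delta$: edges with $\delta_e>0$ are oriented from $C$ to $S$ and edges with $\delta_e<0$ are oriented from $S$ to $C$. In this oriented graph, $|\delta|$ is a nonnegative flow whose sources are $\{c\}\cup\sminus$ and whose sinks are $\splus$. Because every edge incident to $c$ has $x_{cs}=0$ and hence $\delta_{cs}=y_{cs}\geq0$, the vertex $c$ has only outgoing edges and cannot appear as an internal vertex on any directed path. Since $\sum_{s\in S}\Delta\alpha(s)=1$, the total demand at sinks matches the total supply at sources, so by standard flow decomposition the supply at any $s_-\in\sminus$ must travel along a directed simple path to some $s_+\in\splus$; this path stays entirely inside the old graph.

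Write this path, which necessarily alternates between $S$ and $C$, as $s_0=s_-,c_1,s_1,c_2,\ldots,c_k,s_k=s_+$. Each edge $(s_{i-1},c_i)$ is a backward edge, so $\delta_{c_is_{i-1}}<0$, which forces $x_{c_is_{i-1}}>0$; by balancedness of the old flow, $s_{i-1}\in A_{\text{old}}(c_i)$, so $\alphaold(s_{i-1})\leq\alphaold(s_i)$. Each edge $(c_i,s_i)$ is a forward edge, so $\delta_{c_is_i}>0$, which forces $y_{c_is_i}>0$; by balancedness of the new flow, $s_i\in A_{\text{new}}(c_i)$, so $\alphanew(s_i)\leq\alphanew(s_{i-1})$. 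Chaining these inequalities along the path yields $\alphaold(s_-)\leq\alphaold(s_+)$ and $\alphanew(s_+)\leq\alphanew(s_-)$. Combined with $\alphanew(s_-)<\alphaold(s_-)$ and $\alphanew(s_+)>\alphaold(s_+)$, this gives
\[
\alphaold(s_+)<\alphanew(s_+)\leq\alphanew(s_-)<\alphaold(s_-)\leq\alphaold(s_+),
\]
a contradiction. Hence $\sminus=\emptyset$ and $\Delta\alpha(s)\geq0$ for every $s\in S$.

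The main obstacle I foresee is the flow-decomposition step: one must verify that the reoriented $|\delta|$ really is a nonnegative conservation-preserving flow with the claimed sources and sinks, and that a decomposition guaranteeing a path from $\sminus$ to $\splus$ exists (which is where the observation that $c$ has no incoming arcs is essential, otherwise the supply from $\sminus$ might be absorbed by balancing against $c$). Once this is in place, the combinatorial chain argument through active neighborhoods is straightforward, and crucially it is insensitive to the particular choice of realizations $x$ and $y$.
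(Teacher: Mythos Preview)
Your proof is correct, and it takes a genuinely different route from the paper's.

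The paper argues by identifying a threshold level: it sets $\alphastar=\min_{s\in\sstar}\alphanew(s)$ where $\sstar$ is your $\sminus$, and partitions $S$ into $\set{s:\alphaold(s)\leq\alphastar}$, the ``level set'' $\schange=\set{s:\alphaold(s)>\alphastar\text{ and }\alphanew(s)=\alphastar}$, and the rest. It then lets $\cchange$ be the clients with an old-active neighbor in $\schange$, observes that $\cchange$ has no edges at all into the first set (old balancedness) and no new-active edges into the third set (new balancedness), and obtains the contradiction $\sizeof{\cchange}\leq\sum_{s\in\schange}\alphanew(s)<\sum_{s\in\schange}\alphaold(s)\leq\sizeof{\cchange}$ by double counting. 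This is a set-based argument in the spirit of Lemma~\ref{lem:alphasum} that never fixes a realization or decomposes a flow.

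Your argument instead fixes realizations and decomposes the signed difference $\delta=y-x$ into directed paths. The key insight that $c$ has no incoming arcs (because $x$ vanishes on its edges) is exactly what guarantees a decomposition path from $\sminus$ to $\splus$ that avoids $c$ entirely; without it the supply from $\sminus$ could be absorbed at $c$, and indeed this is the only delicate step. The subsequent chain of inequalities along the path is the standard ``improving path'' device from load-balancing games and is clean. Compared to the paper, your proof is more constructive and perhaps more portable to related monotonicity statements (e.g., edge insertions), at the cost of invoking flow decomposition; the paper's level-set argument is shorter and stays entirely at the level of server loads without ever touching the $x$-values, which fits more naturally with the surrounding lemmas.
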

\begin{proof}
  Let $\sstar=\set{s \in S \cond \alphanew(s) < \alphaold(s)}$. We want to show that $\sstar=\emptyset$.  Say for contradiction that $\sstar\neq\emptyset$, and let $\alphastar = \min_{s \in \sstar} \alphanew(s)$.
We will now partition $S$ into three sets.
\begin{align*}
  \sminus &= \set{s \in S \cond \alphaold(s) \leq \alphastar}
  \\
  \schange &= \set{s \in S \cond  \alphaold(s) > \alphastar \land \alphanew(s) = \alphastar}
  \\
  \splus &= \set{s \in S \cond \alphaold(s) > \alphastar \land \alphanew(s) > \alphastar}
\end{align*}
It is easy to see that these sets form a partition of $S$, and that  $\emptyset\neq\schange\subseteq\sstar$.

Now, let $\cchange$ contain all clients with an active neighbor in $\schange$ before the insertion of $c$.  Since each client sends one unit of flow,
$\sum_{s \in \schange} \alphaold(s) \leq \sizeof{\cchange}$.
Now, because we had a balanced flow before the insertion of $c$ there
cannot be any edges in $G$ from $\cchange$ to $\sminus$ (any such edge would be from a client $u\in\cchange$ to a server $v\in\sminus$ with $\alphaold(v)\leq\alphastar<\alphaold(s)$ for $s\in\schange$ contradicting that $u$ had an active neighbor in $\schange$). Moreover,
in the balanced flow after the insertion of $c$, 
there are no active edges from $\cchange$ to $\splus$ (any such edge would be from a client $u\in\cchange$ to a server $v\in\splus$ with $\alphanew(v)>\alphastar=\alphanew(s)$ for all $s\in\schange$ so is not active).
Thus, all active edges incident to $\cchange$ go to $\schange$, so
$\sum_{s \in \schange} \alphanew(s) \geq \sizeof{\cchange}$. 
This contradicts the earlier fact that $\sum_{s \in \schange} \alphaold(s) \leq \sizeof{\cchange}$,
since by definition of $\schange$ we have
$\sum_{s \in \schange} \alphanew(s) < \sum_{s \in \schange} \alphaold(s)$. 
\end{proof}

The next lemma formalizes the following argument:
Say that we insert a new client $c$, and for simplicity say that $c$ is only incident to server
$s$. Now, $c$ will have no choice but to send all of its flow to $s$, but that does not imply
that $\Delta\alpha(s) = 1$, since other clients will balance by retracting their flow from
$s$ and sending it elsewhere. But by the assumption that the flow was balanced before the 
insertion of $c$, all this new flow can only flow ``upward'' from $s$: it cannot end up increasing the flow
on some $s^{-}$ with $\alphaold(s^{-}) < \alphaold(s)$. Along the same lines of intuition, even if $c$ has several neighbors, inserting $c$ cannot affect the flow of servers whose original flow was less than the lowest original flow among the neighbors of $s$.

\begin{lemma}\label{lem:onlyhigher}
  When a new client $c$ is added  , $\Delta\alpha(s)=0$ for all $s$ where $\alphaold(s)<\min_{v\in N(c)}\alphaold(v)$.
\end{lemma}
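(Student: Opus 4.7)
The plan is to use a global conservation argument on the servers whose old flow lies below $\beta = \min_{v \in N(c)} \alphaold(v)$. Write $S_{<} = \set{s \in S : \alphaold(s) < \beta}$; by Lemma~\ref{lem:nondecreasing} we have $\Delta\alpha(s) \geq 0$ for every server, so proving the lemma reduces to showing $\sum_{s \in S_{<}} \alphanew(s) \leq \sum_{s \in S_{<}} \alphaold(s)$. Combined with pointwise non-decrease, any such inequality is forced to be equality termwise, giving $\alphanew(s) = \alphaold(s)$ for each $s \in S_<$.

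The first step I would carry out is to identify the clients that can push any flow into $S_<$ at all. Let $C_{<} = \set{c' \in C : N(c') \cap S_{<} \neq \emptyset}$. For $c' \in C_{<}$, picking any neighbor $v \in N(c') \cap S_<$ gives $\min_{u \in N(c')} \alphaold(u) \leq \alphaold(v) < \beta$, so every old active neighbor of $c'$ lies in $S_<$ and the balanced-flow property sends all of $c'$'s one unit of flow into $S_<$. Clients outside $C_<$ have no edges into $S_<$ whatsoever. Summing gives $\sum_{s \in S_{<}} \alphaold(s) = \sizeof{C_{<}}$.

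The second step is to bound the new flow into $S_<$. The crucial observation is that the new client $c$ satisfies $\alphaold(v) \geq \beta$ for every $v \in N(c)$ by the very definition of $\beta$, so $N(c) \cap S_< = \emptyset$ and $c$ contributes nothing to $\sum_{s \in S_{<}} \alphanew(s)$. Clients outside $C_<$ still have no neighbors in $S_<$ (their neighborhoods are unchanged), and each client in $C_<$ contributes a total of at most one unit of flow. This yields $\sum_{s \in S_{<}} \alphanew(s) \leq \sizeof{C_{<}} = \sum_{s \in S_{<}} \alphaold(s)$, which combined with Lemma~\ref{lem:nondecreasing} proves the lemma.

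I do not anticipate any serious obstacle. The only subtlety is remembering to invoke the balanced-flow property in both directions: once to see that every client in $C_<$ sends its full unit of flow into $S_<$ in the old flow, and once implicitly to rule out that the new client $c$ contributes to $S_<$ in the new flow, which follows directly from the definition of $\beta$.
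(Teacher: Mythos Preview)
Your proof is correct and follows essentially the same approach as the paper: define the set of servers with old load below the threshold, identify the set of clients that can touch those servers, use balancedness to get $\sum_{s \in S_<}\alphaold(s) = \sizeof{C_<}$ exactly, then observe that the new client has no edge into $S_<$ so $\sum_{s \in S_<}\alphanew(s) \leq \sizeof{C_<}$, and conclude via Lemma~\ref{lem:nondecreasing}. The paper's proof uses the complementary sets $\splus,\cplus$ in its notation but the argument is identical.
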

\begin{proof}
  Let us first consider the balanced flow \emph{before} the insertion of $c$.
  
  Let $\splus = \set{s \in S \cond \alphaold(s) \geq \min_{v\in N(c)}\alphaold(v)}$ and define
  $\sminus = S \setminus \splus$. We want to show that $\Delta\alpha(s)=0$  for all servers $s$ in $\sminus$.

Define $\cplus$ to be the set of client vertices 
whose neighbors are all in $\splus$; that is,
$\cplus = \set{c \in C \cond N(c) \subseteq \splus}$.
Note that the following holds before the insertion of $c$:
by definition of $\cplus$ there are no edges
in $G$ from $\cplus$ to $\sminus$, and
because the flow is balanced, 
there are no \emph{active} edges
from $\cminus$ to $\splus$.
Thus, $\sum_{s \in \sminus} \alphaold(s) = \sizeof{\cminus}$.


Now consider the insertion of $c$. 
By definition of $\sminus$ the new client $c$ has no neighbors in $\sminus$,
so it is still the case that only clients in $\cminus$ have neighbors in $\sminus$.
Thus, in the new balanced flow we still have have that 
$\sum_{s \in \sminus} \alphanew(s) \leq \sizeof{\cminus}$.
But this means that $\sum_{s \in \sminus} \Delta\alpha(s)\leq 0$, so if $\Delta\alpha(s_1)>0$ for some $s_1 \in \sminus$ then $\Delta\alpha(s_2)<0$ for some $s_2 \in \sminus$, which contradicts Lemma~\ref{lem:nondecreasing}.
\end{proof}

\section{Analyzing replacements in maximum matching}\label{sec:maxmatch}
We now consider how server flows relate to the length of augmenting paths.

\begin{lemma}\label{lem:alphalessone}
The graph $(C \cup S, E)$ contains a matching of size $\sizeof{C}$,
if and only if $\alpha(s) \leq 1$ for all $s\in S$.
\end{lemma}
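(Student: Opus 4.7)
The plan is to derive this equivalence almost immediately from three results established earlier in the paper: Hall's Marriage Theorem (Theorem~\ref{thm:hall}), Lemma~\ref{lem:maxalpha} (which identifies $\alphamax$ with the maximum quotient $\sizeof{K}/\sizeof{N(K)}$ over nonempty $K \subseteq C$), and Lemma~\ref{lem:alphaunique} (existence of the balanced server flow under the standing assumption $\sizeof{N(c)} \geq 1$ for all $c$).

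First I would briefly dispatch the degenerate case: if some client $c \in C$ has $\sizeof{N(c)} = 0$, then Hall's condition fails trivially for $K = \set{c}$, so no matching of size $\sizeof{C}$ exists; and as the paper notes right after Definition~\ref{def:flow}, no server flow exists in this case either, so the ``only if'' direction is vacuous and the ``if'' direction is vacuous as well (there are no $\alpha(s)$). Thus I may assume $\sizeof{N(c)} \geq 1$ for every $c \in C$, so by Lemma~\ref{lem:alphaunique} the unique balanced server flow $\alpha$ exists.

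For the forward direction, suppose $(C \cup S, E)$ contains a matching of size $\sizeof{C}$. By Hall's Marriage Theorem this is equivalent to $\sizeof{K} \leq \sizeof{N(K)}$ for every $K \subseteq C$, i.e.\ $\sizeof{K}/\sizeof{N(K)} \leq 1$ for every nonempty $K \subseteq C$. By Lemma~\ref{lem:maxalpha}, the left side's maximum equals $\alphamax = \max_{s\in S}\alpha(s)$, so $\alpha(s) \leq \alphamax \leq 1$ for all $s$. For the reverse direction, suppose $\alpha(s) \leq 1$ for every $s$. Then $\alphamax \leq 1$, so by Lemma~\ref{lem:maxalpha} we have $\sizeof{K} \leq \sizeof{N(K)}$ for every nonempty $K \subseteq C$, and Hall's theorem delivers a matching of size $\sizeof{C}$.

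There is no real obstacle here; the lemma is essentially a repackaging of Lemma~\ref{lem:maxalpha} combined with Hall. The only thing to be careful about is making sure the ``if and only if'' is clean about the isolated-client edge case, which is why I handle it separately at the outset.
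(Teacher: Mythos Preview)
Your proposal is correct and is essentially the paper's own proof: both combine Lemma~\ref{lem:maxalpha} (identifying $\alphamax$ with $\max_{\emptyset\subset K\subseteq C}\sizeof{K}/\sizeof{N(K)}$) with Hall's Theorem to get the equivalence. Your version is just a bit more explicit about the isolated-client edge case and the two directions, but the argument is the same.
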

\begin{proof}
Let $\alphamax=\max_{s\in S}\alpha(s)$.  It follows directly from Lemma~\ref{lem:maxalpha} that 
$\sizeof{K}\leq\sizeof{N(K)}$ for all $K\subseteq C$ if and only if $\alphamax\leq 1$.
The corollary then follows from Hall's Theorem (Theorem~\ref{thm:hall})
\end{proof}

It is possible that in the original graph $G = (C \cup S, E)$, there are 
many clients that cannot be matched. 
But recall that by Observation~\ref{obs:ignore-clients},
if a client cannot be matched when it is inserted, then it can be effectively 
ignored for the rest of the algorithm.
This motivates the following definition:

\begin{definition}
We define the set $\cm \subseteq C$ as follows. 
When a client $c$ is inserted, consider the set of clients $C'$
before $c$ is inserted: then $c \in \cm$
if the maximum matching in $(C' \cup \set{c} \cup S, E)$ is
greater than the maximum matching in $(C' \cup S, E)$.
Define $\gm = (\cm \cup S, E)$.
\end{definition}

\begin{observation}
\label{obs:cm}
When a client $c \in \cm$ is inserted the SAP algorithm finds an
augmenting path from $c$ to a free server; this follows from the fact that SAP always
maintains a maximum matching (Lemma~\ref{lem:sap-correctness}).
By Observation~\ref{obs:ignore-clients}, if $c \notin \cm$ then no augmenting
path goes through $c$ during the entire sequence of insertions. By the same
observation, once a vertex $c \in \cm$ is inserted
it remains matched through the entire sequence of insertions. 
\end{observation}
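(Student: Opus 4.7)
The plan is to prove the three claims of the observation separately, each by reducing to one of the two previously stated facts: Lemma~\ref{lem:sap-correctness} (SAP maintains a maximum matching) and Observation~\ref{obs:ignore-clients} (once matched by SAP a client stays matched, and once unmatched on arrival a client stays unmatched forever).

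For the first claim, I would argue directly from the definition of $\cm$. When $c \in \cm$ is inserted into the graph $(C' \cup S, E)$, the maximum matching size strictly increases. By Lemma~\ref{lem:sap-correctness}, SAP maintains a maximum matching both before and after this insertion. The only way the matching size can increase by one while preserving all previously matched edges in the ``only'' sense required (the difference $M \oplus M'$ between the old and a new maximum matching must contain an augmenting path starting at $c$, since $c$ is the only new vertex and $c$ must be matched in some maximum matching $M'$ of the enlarged graph); this augmenting path ends at a free server, and is precisely the kind of path SAP augments down.

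For the second claim, suppose $c \notin \cm$. Then the maximum matching size of $(C' \cup \{c\} \cup S, E)$ equals that of $(C' \cup S, E)$, so there is a maximum matching of the new graph in which $c$ is unmatched; equivalently, there is no augmenting path from $c$ to a free server in the matching that SAP holds at the time $c$ arrives (otherwise SAP could augment and produce a larger matching, contradicting Lemma~\ref{lem:sap-correctness}). Hence upon the arrival of $c$, SAP does nothing, leaving $c$ unmatched. By Observation~\ref{obs:ignore-clients}, an unmatched-on-arrival client is incident to no matched edge at any future time, and since every alternating (hence every augmenting) path through $c$ must use a matched edge at $c$, no such path can exist for the remainder of the insertion sequence.

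For the third claim, $c \in \cm$ was shown in the first claim to be matched by SAP immediately upon arrival. Observation~\ref{obs:ignore-clients} then guarantees it remains matched throughout all subsequent insertions.

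No step is particularly hard here; the only subtlety is invoking the standard symmetric-difference argument to convert ``the maximum matching strictly grows'' into ``there is an augmenting path starting at $c$,'' which is exactly the step already used in the proof of Lemma~\ref{lem:sap-correctness}. Since the observation is essentially a bookkeeping consequence of two results already in hand, the proof is short and the main care is simply in distinguishing the two directions (``not in $\cm$ $\Rightarrow$ never on an augmenting path'' versus ``in $\cm$ $\Rightarrow$ always matched after arrival'') and citing the correct prior result for each.
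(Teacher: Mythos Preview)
Your proposal is correct and follows essentially the same approach as the paper: the observation in the paper already embeds its justification inline, citing Lemma~\ref{lem:sap-correctness} for the first claim and Observation~\ref{obs:ignore-clients} for the second and third, and you have simply unpacked those citations in the natural way (including the symmetric-difference step that the paper uses in the proof of Lemma~\ref{lem:sap-correctness}).
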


\begin{definition}
\label{dfn:alpham}
Given any $s \in S$, 
let $\alpham(s)$ be the flow into $s$ in some balanced server flow in $\gm$;
by Lemma~\ref{lem:alphaunique} $\alpham(s)$ is uniquely defined.
\end{definition}

\begin{observation}
\label{obs:alpham}
By construction $\gm$ contains a matching of size $\sizeof{\cm}$,
so by Lemma~\ref{lem:alphalessone} $\alpham(s) \leq 1$ for all $s \in S$.
Finally, note that since $\cm \subseteq C$, we clearly have $\alpham(s) \leq \alpha(s)$
\end{observation}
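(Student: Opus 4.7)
The plan is to prove the two assertions of the observation in order.

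First I would dispatch the claim $\alpham(s) \leq 1$ by elaborating the ``by construction'' phrase. By the definition of $\cm$, the insertion of each $c \in \cm$ strictly grows the maximum matching size, so Lemma~\ref{lem:sap-correctness} forces SAP to match $c$ at the moment of its arrival, and Observation~\ref{obs:ignore-clients} guarantees that $c$ remains matched forever after. Thus, after all insertions, the SAP matching includes a matched edge incident to every $c \in \cm$; restricting this to the edges with both endpoints in $\cm \cup S$ yields a matching of size exactly $\sizeof{\cm}$ inside $\gm$. Applying Lemma~\ref{lem:alphalessone} with $C$ replaced by $\cm$ then gives $\alpham(s) \leq 1$ for every $s \in S$.

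Next I would prove $\alpham(s) \leq \alpha(s)$ by viewing $G$ as obtained from $\gm$ by reinserting the missing clients of $C \setminus \cm$ one at a time. Ordering these clients arbitrarily as $c_1,\dots,c_\ell$, I form the chain $\gm = G_0 \subseteq G_1 \subseteq \cdots \subseteq G_\ell = G$, note that every client in every $G_i$ still satisfies $\sizeof{N(c)} \geq 1$ (an immediate inheritance from $G$, since we never remove servers or edges), and invoke Lemma~\ref{lem:alphaunique} to obtain the unique balanced server flow on each $G_i$. Then Lemma~\ref{lem:nondecreasing} applies to each single-client insertion step and tells us that no server's balanced flow decreases from $G_{i-1}$ to $G_i$. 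Chaining these pointwise inequalities across $i = 1,\dots,\ell$ yields $\alpham(s) \leq \alpha(s)$ for every $s \in S$, which is exactly the second assertion.

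The main ``obstacle'' here is essentially just bookkeeping: one has to verify that the balanced server flow is well-defined at every stage of the chain (reducing to the inherited condition $\sizeof{N(c)} \geq 1$ for every client in every $G_i$), and handle the trivial edge case $\cm = \emptyset$ separately, where $\alpham$ is vacuously zero and both inequalities hold. Beyond these sanity checks, both parts of the observation follow directly from results already established earlier in the paper, so no genuinely new combinatorial argument is required.
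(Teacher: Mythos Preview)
Your proposal is correct. The paper treats this observation as self-evident and gives no proof beyond the words ``by construction'' and ``clearly'', so your write-up is essentially a faithful expansion of the intended reasoning: the first claim via Observation~\ref{obs:cm} (every $c\in\cm$ is matched by SAP, no $c\notin\cm$ ever is) plus Lemma~\ref{lem:alphalessone}, and the second claim via iterated application of Lemma~\ref{lem:nondecreasing} along the chain $\gm=G_0\subseteq\cdots\subseteq G_\ell=G$. One small remark: under the paper's standing assumption that $\sizeof{N(c)}\geq 1$ for all $c\in C$ together with $C\neq\emptyset$, the first inserted client is always matchable, so $\cm\neq\emptyset$ and your edge case is vacuous---but noting it does no harm.
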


\begin{definition}
  \label{dfn:augmenting-tail}
  Define an \emph{augmenting tail} from a vertex $v$ to be an alternating path that starts in $v$ and ends in an unmatched server. We call an augmenting tail \emph{active} if all the edges on the alternating path that are not in the matching are active. 
\end{definition}

Note that augmenting tails as defined above are an obvious extension of the concept of augmenting paths: 
Every augmenting path for a newly arrived client $c$ consists of an edge $(c,s)$, plus an augmenting tail from some server $s\in N(c)$.

We are now ready to prove our main lemma connecting the balanced server flow to augmenting paths.
We show that if some server $s$ has small $\alpha(s)$, then regardless of the particular matching at hand, there is guaranteed to be a \emph{short} active augmenting tail from $s$. 
Since every \emph{active} augmenting tail is by definition an augmenting tail, this implies that any newly newly inserted client $c$ that is incident to $s$ has a short augmenting path to an unmatched server.

\begin{lemma}[Expansion Lemma]\label{lem:expand}
  Let $s\in S$, and suppose
  $\alpham(s)=1-\epsilon$ for some $\epsilon>0$.  Then there is an
  active augmenting tail for $s$ of length at most $\frac{2}{\epsilon}\ln(\sizeof{\cm})$.
\end{lemma}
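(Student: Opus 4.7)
The plan is to build an active augmenting tail from $s$ by a layered BFS through matched edges (forward from server to client) and active edges (backward from client to server). Set $S_0 = \set{s}$, and inductively let $C_i$ be the set of unvisited clients matched (in the current matching $M$) to servers in $S_{i-1}$, and $S_i = A(C_i) \setminus T_{i-1}$, where $T_j = S_0 \cup \cdots \cup S_j$ and $K_j = C_1 \cup \cdots \cup C_j$. The BFS halts the first time some layer $S_i$ contains an unmatched server, at which point the BFS path from $s$ to that server is an active augmenting tail of length exactly $2i$. It therefore suffices to show halting by $i \leq (1/\epsilon)\ln\sizeof{\cm}$.

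The key conceptual step is the monotonicity claim that every $s' \in T_i$ satisfies $\alpham(s') \leq 1-\epsilon$, proved by induction on $i$. The base case $s \in S_0$ is given. For the step, any $s' \in S_i$ lies in $A(c)$ for some $c \in C_i$, and by construction $c$ is matched to some $s'' \in S_{i-1}$; the induction hypothesis gives $\alpham(s'') \leq 1-\epsilon$. Since $A(c) = \argmin_{u \in N(c)} \alpham(u)$ and $s'' \in N(c)$, we get $\alpham(s') \leq \alpham(s'') \leq 1-\epsilon$.

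The expansion argument then runs as follows. The monotonicity claim gives $\sum_{s' \in T_i} \alpham(s') \leq (1-\epsilon)\sizeof{T_i}$. On the other hand, every $c \in C_j \subseteq K_i$ has $A(c) \subseteq T_j \subseteq T_i$ by construction of the BFS, so Lemma~\ref{lem:alphasum} gives $\sizeof{K_i} \leq \sum_{T_i}\alpham(s')$. If the BFS has not yet halted, then every server in $T_{i-1}$ is matched and its partner lies in $K_i$, forcing $\sizeof{K_i} = \sizeof{T_{i-1}}$. Combining these three inequalities yields $\sizeof{T_{i-1}} \leq (1-\epsilon)\sizeof{T_i}$, so starting from $\sizeof{T_0}=1$ we get $\sizeof{T_i} \geq (1-\epsilon)^{-i} \geq e^{i\epsilon}$. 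Once $\sizeof{T_i} > \sizeof{\cm}$, the set $T_i$ must contain an unmatched server (at most $\sizeof{\cm}$ servers are matched), which happens within $i \leq (1/\epsilon)\ln\sizeof{\cm}$ layers, yielding a tail of length at most $(2/\epsilon)\ln\sizeof{\cm}$.

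The main obstacle is spotting the monotonicity. Without it, only $s$ itself contributes an $\epsilon$ deficit, and the best one can say is $\sum_{T_i}(1 - \alpham(s')) \geq \epsilon$, giving only the additive bound $\sizeof{T_i} \geq \sizeof{T_{i-1}} + \epsilon$ that is hopelessly weak. It is precisely the $\argmin$ definition of $A(c)$ that propagates the deficit along every backward active edge, ensuring each of the $\sizeof{T_i}$ reachable servers inherits a full $\epsilon$ deficit and thereby converting additive into multiplicative growth.
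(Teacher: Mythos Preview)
Your proof is correct and follows essentially the same approach as the paper's: both perform a layered expansion from $s$ via matched edges and active edges, use the monotonicity $\alpham(s')\leq 1-\epsilon$ along active edges (exactly as you prove it), and combine the flow bound $\sizeof{K_i}\leq(1-\epsilon)\sizeof{T_i}$ with the matching bijection $\sizeof{K_i}=\sizeof{T_{i-1}}$ to get geometric growth $(1-\epsilon)^{-i}$. The only cosmetic difference is that the paper tracks the client sets $K_i$ while you track the server sets $T_i$, and the paper invokes the server-flow inequality directly rather than citing Lemma~\ref{lem:alphasum}; the arguments are otherwise identical.
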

\begin{proof}
By our definition of active edges, it
is not hard to see that any server $s'$ reachable from $s$ by an active augmenting tail has 
$\alpham(s') \leq 1-\epsilon$.

 For $i\geq 1$, let $K_i$ be the set of clients $c$ such that there is an active augmenting tail $s_0, c_0,  \ldots, c_{k-1}, s_k$ from $s$ with $c=c_j$ for some $j<i$. Let $k_i=\sizeof{K_i}$. Note that $k_1=1$, $K_1\subseteq K_2\subseteq \ldots\subseteq K_i$, and

  \begin{align*}
    k_i = \sizeof{K_i}\leq \sum_{s'\in A(K_i)} \alpham(s') \leq \sum_{s'\in A(K_i)}(1-\epsilon) = \sizeof{A(K_i)}(1-\epsilon)
  \end{align*}
  Thus
  \begin{align*}
    \sizeof{A(K_i)} \geq \frac{k_i}{1-\epsilon}
  \end{align*}

Suppose there is no active augmenting tail from $s$ of length $\leq 2(i-1)$, then every server in $A(K_i)$ is matched, and the clients they are matched to are exactly $K_{i+1}$.  There is a bijection between $A(K_i)$ and $K_{i+1}$ given by the perfect matching, so we have $k_{i+1} = \sizeof{A(K_i)}$ and thus $\sizeof{\cm} \geq k_{i+1} \geq \frac{1}{1-\epsilon}k_i \geq (\frac{1}{1-\epsilon})^ik_1 = (\frac{1}{1-\epsilon})^i$.  It follows that $i\leq \frac{\ln\sizeof{\cm}}{\ln\frac{1}{1-\epsilon}}\leq\frac{1}{\epsilon}\ln\sizeof{\cm}$, where the last inequality follows from $1-\epsilon\leq e^{-\epsilon}$.  Thus for any $i>\frac{1}{\epsilon}\ln\sizeof{\cm}$ there exists an active augmenting tail of length at most $2(i-1)$, and the result follows.
\end{proof}

We are now able to prove the key lemma of our paper, which we showed
in Section~\ref{sec:overview} implies Theorem~\ref{thm:main-uncapacitated}.

\longpaths*

\begin{proof}
Recall that $n = \sizeof{C} \geq \sizeof{\cm}$.
The lemma clearly holds for $h \leq 4\ln(n)$ because there at most $n$ augmenting paths in total.
We can thus assume for the rest of the proof that $h > 4\ln(n)$.
Recall by Observation~\ref{obs:cm} that any augmenting path is contained entirely in $\gm$.
Now, let $\cstar \subseteq \cm$ be the set of clients whose shortest augmenting path have length at least $h+1$ when they are added.  Our goal is to show that $\sizeof{\cstar}\leq4n\ln(n)/h$. For each $c\in\cstar$ the shortest augmenting
tail from each server $s \in N(c)$ has length at least $h$ and so by the Expansion 
Lemma~\ref{lem:expand}, 
each server $s \in N(c)$ has  $\alpham(s) \geq 1 - 2\ln(n)/h$.
Let $\sstar$ be the set of all servers that at some point have
$\alpham(s) \geq 1 - 2\ln(n)/h$; by Lemma~\ref{lem:nondecreasing},
this is exactly the set of servers $s$ such that $\alpham(s) \geq 1 - 2\ln(n)/h$
after all clients have been inserted. 
By Lemma~\ref{lem:onlyhigher}, if $c\in\cstar$, the insertion of $c$ only increases the flow on servers in $\sstar$ that already had flow
at least $1 - 2\ln(n)/h$. Since by Observation~\ref{obs:alpham} 
$\alpham(s) \leq 1$ for all $s \in S$, the flow of each server in $\sstar$ 
can only increase by at most $2\ln(n)/h$. But then, since the client $c$ contributes with exactly one unit of flow,
the total number of such clients is  $\sizeof{\cstar}\leq(2\log(n)/h)\sizeof{\sstar}$.
We complete the proof by showing that $\sizeof{\sstar} < 2n$.
This follows from the fact that each client $c \in \cm$ sends one unit of flow,
so $n \geq \sizeof{\cm} \geq (1 - 2\ln(n)/h)\sizeof{\sstar} > \sizeof{\sstar}/2$,
where the last inequality follows from the assumption that $h > 4\ln(n)$.
\end{proof}

\section{Implementation}\label{sec:implementation}
In the previous section we proved that augmenting along a shortest augmenting path yields a total of $\Oo(n\log^2 n)$ replacements. But the naive implementation would spend $\Oo(m)$ time per inserted vertex, leading to 
total time $\Oo(mn)$ for actually maintaining the matching. In this section, we show how to find the augmenting paths more quickly, and thus maintain the optimal matching at all times in $\Oo(m\sqrt{n}\sqrt{\log n})$ total time, differing only by an $\Oo(\sqrt{\log n})$ factor from the classic offline algorithm of Hopcroft and Karp algorithm for static graphs~\cite{Hopcroft73}.

\begin{definition}
Define the height of a vertex $v$ (server or client) to be the length of
the shortest augmenting tail (Definition~\ref{dfn:augmenting-tail})
from $v$. If no augmenting tail exists, we set the height to $2n$.
\end{definition}

At a high level, our algorithm is very similar to the standard $\Oo(m\sqrt{n})$ blocking flow algorithm.
We will keep track of heights to find shortest augmenting paths of length at most $\sqrt{n}\sqrt{\log n}$.
We will find longer augmenting paths using the trivial $\Oo(m)$ algorithm,
and use Lemma~\ref{lem:bound-long-paths} to bound the number of such paths.
Our analysis will also require the following lemma:

\begin{lemma}
\label{lem:no-augmenting-path}
For any server $s \in S$, there is an augmenting tail from $s$ to an unmatched server
if and only if $\alpham(s) < 1$.
\end{lemma}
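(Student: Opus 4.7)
The plan is to prove both directions of the biconditional separately, treating the ``easy'' $(\Leftarrow)$ direction as essentially a restatement of the Expansion Lemma, and handling $(\Rightarrow)$ via its contrapositive using the structural consequences of $\alpha_M(s)=1$ together with the fact that $M$ matches every client in $\cm$.

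For $(\Leftarrow)$, assume $\alpham(s)<1$, and set $\epsilon = 1-\alpham(s) > 0$. By Observation~\ref{obs:alpham} we have $\alpham(s) = 1-\epsilon$ with $\epsilon>0$, so the Expansion Lemma (Lemma~\ref{lem:expand}) produces an active augmenting tail from $s$ to an unmatched server of length at most $\tfrac{2}{\epsilon}\ln \sizeof{\cm}$. Every active augmenting tail is in particular an augmenting tail, so we are done.

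For $(\Rightarrow)$, we prove the contrapositive: if $\alpham(s)=1$, then no augmenting tail starts at $s$. Let $\Smax = \set{s'\in S \cond \alpham(s')=1}$ and $\Kmax = \set{c\in \cm \cond A(c)\cap \Smax\neq\emptyset}$. Applying Lemma~\ref{lem:alphaset} to the balanced server flow in $\gm$ (whose maximum is $1$ by Observation~\ref{obs:alpham}) gives $N(\Kmax)=\Smax$, $\sizeof{\Kmax}=\sizeof{\Smax}$, and moreover the proof of that lemma shows $N(c)\subseteq \Smax$ for every $c\in \Kmax$. Because the current matching $M$ has size $\sizeof{\cm}$ (Lemma~\ref{lem:sap-correctness}) and $\Kmax\subseteq \cm$, every client in $\Kmax$ is matched under $M$, and the resulting matches must all lie in $N(\Kmax)=\Smax$. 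Since $\sizeof{\Kmax}=\sizeof{\Smax}$, $M$ restricted to $\Kmax$ is a bijection between $\Kmax$ and $\Smax$; in particular every server in $\Smax$ is matched, and its partner lies in $\Kmax$.

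Now suppose for contradiction there is an augmenting tail $s=s_0,c_1,s_1,\ldots,c_k,s_k$ from $s\in\Smax$ with $s_k$ unmatched. The ending server $s_k$ is unmatched, so $s_k\notin \Smax$. On the other hand, we show by induction that every $s_i\in\Smax$. The base case $s_0=s$ holds. For the inductive step, if $s_i\in\Smax$ then $s_i$ is matched, and the alternating-path structure forces the edge $(s_i,c_{i+1})$ on the tail to be the matched edge incident to $s_i$; the argument above then gives $c_{i+1}\in \Kmax$, so $s_{i+1}\in N(c_{i+1})\subseteq \Smax$. This contradicts $s_k\notin\Smax$, completing the proof.

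The main subtlety is the structural step: extracting from $\alpham(s)=1$ the fact that the ``maximum-load'' set $\Smax$ is closed under the operation of ``follow a matched edge to a client in $\Kmax$, then any unmatched edge to a neighbor.'' This closure is precisely what prevents an alternating path from escaping $\Smax$, and it relies on the nonobvious identity $N(c)\subseteq\Smax$ for $c\in\Kmax$, which in turn uses that active neighbors of $c$ are its minimum-$\alpha$ neighbors and that this minimum already equals the global maximum $1$.
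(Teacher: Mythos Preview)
Your proposal is correct and follows essentially the same route as the paper's proof: the $(\Leftarrow)$ direction is a direct application of the Expansion Lemma, and the $(\Rightarrow)$ direction uses Lemma~\ref{lem:alphaset} (applied to $\alpham$ in $\gm$, whose maximum value is $1$) to exhibit a set $\Kmax\subseteq\cm$ with $N(\Kmax)=\Smax$ and $\sizeof{\Kmax}=\sizeof{\Smax}$, then argues that $M$ restricted to $\Kmax$ is a perfect matching onto $\Smax$, so no alternating path starting in $\Smax$ can escape $\Smax$ or reach a free server. Your explicit induction along the tail is just a more detailed write-up of the paper's one-line ``so it never reaches a server outside of $N(C_1)=S_1$''. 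One small remark: you cite ``the proof of that lemma'' for $N(c)\subseteq\Smax$ when $c\in\Kmax$, but this already follows from the lemma's conclusion $N(\Kmax)=\Smax$, since $N(c)\subseteq N(\Kmax)$.
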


\begin{proof}
If $\alpham(s) < 1$, then the existence of \emph{some} tail follows directly from the Expansion Lemma~\ref{lem:expand}.
Now let us consider $\alpham(s) = 1$.
Let $S_1 = \set{s \in S \cond \alpham(s) = 1}$.
Since $1$ is the maximum possible value of $\alpham(s)$ (Observation~\ref{obs:alpham}),
Lemma~\ref{lem:alphaset} implies that there is a set of clients $C_1 \in \cm$ such that 
$N(C_1) = S_1$ and $\sizeof{C_1} = \sizeof{S_1}$. Now since every client in $C_1$ is matched,
every server $S_1$ is matched to some client in $C_1$. Every augmenting tail from some $s \in S_1$
must start with a matched edge, so it must go through $C_1$, so it never reaches a server
outside of $N(C_1) = S_1$, so it can never reach a free server.
\end{proof}

We now turn to our implementation of the SAP protocol. We will use a dynamic single-source shortest paths
algorithm as a building block. We start by defining a directed graph $D$ such that
maintaining distances in $D$ will allow us to easily find shortest augmenting paths as new clients
are inserted.

Let $D$ be the directed graph obtained from $G=(C\cup S,E)$ by directing all unmatched edges from $C$ to $S$, and all matched edges from $S$ to $C$, and finally adding a \emph{sink} $t$ with an edge from all unmatched vertices, as well as edge from all clients in $C$ that have not yet arrived.
Any alternating path in $G$ corresponds to a directed path in $D\setminus\set{t}$ and vice-versa. In particular, it is easy to see that if $P$ is a shortest path in $D$ from a matched server $s$ to the sink $t$, then $P\setminus\set{t}$ is a shortest augmenting tail from $s$ to a free server. Similarly,
for any client $c$ that has arrived (so edge $(c,t)$ is deleted) but is not yet matched,
if $P$ is the shortest path from $c$ to $t$ in $D$,
then $P\setminus\set{t}$ is a shortest augmenting path for $c$ in $G$. Furthermore, augmenting down this path in $G$ corresponds (in $D$) to changing the direction of all edges on $P\setminus\set{t}$ and deleting the edge on $P$ incident to $t$.

We can thus keep track of shortest augmenting paths by using a simple dynamic shortest path algorithm
to maintain shortest paths to $t$ in the changing graph $D$.
We will use a modification of Even and Shiloach (See~\cite{Shiloach:1981}) to maintain a shortest path tree in $D$ to $t$ from all vertices of height at most $h=\sqrt{n}\sqrt{\log n}$.  The original version by Even and Shiloach worked only for undirected graphs, and only in the decremental setting where 
the graph only undergoes edge deletions, never edge insertions.  This was later extended by King~\cite{King99} to work for directed graphs. The only-deletions setting
is too constrained for our purposes
because we will need to insert edges into $D$; augmenting down a path $P$ corresponds to 
deleting the edges on $P$ and inserting the reverse edges.
Fortunately, it is well known that the Even and Shiloach tree can be extended to the setting
where there are both deletions and insertions, as long as the latter are guaranteed not to
decrease distances; we will show that this in fact applies to our setting.

\begin{lemma}[{Folklore. See e.g.\cite{BernsteinR11,BernsteinC16}}]\label{lem:ES-extended}
  Let $G = (V, E)$ be a dynamic directed or undirected graph with positive integer weights, let $t$ be a fixed sink, and say that for every vertex $v$ we are guaranteed that the distance $\dist(v,t)$ never decreases due to an edge insertion. Then we can maintain a tree of shortest paths to $t$ up to distance $d$ in total time $\Oo(m\cdot d+\Delta)$, where $m$ is the total number of edges $(u,v)$ such 
that $(u,v)$ is in the graph at any point during the update sequence, and $\Delta$ is the total number of edge changes.
\end{lemma}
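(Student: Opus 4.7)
The plan is to adapt the classical Even--Shiloach tree, with King's extension to directed graphs, and to observe that the standard analysis survives edge insertions whenever those insertions do not lower any distance to $t$. I would maintain, for every vertex $v$, a level $\ell(v)$ intended to equal $\min(\dist(v,t),d+1)$ (the cap marks ``out of range''), together with a parent pointer $p(v)\in N^+(v)$ witnessing $\ell(v)=\ell(p(v))+w(v,p(v))$. For each $v$ I would also maintain, with $\Oo(1)$ amortized update cost, the list of in-neighbors that currently use $v$ as their parent, so that a level change at $v$ can propagate to exactly the vertices whose parent pointers it invalidates.

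The three event handlers would then be as follows. For an insertion $(u,v)$, simply append the edge to both adjacency lists: by the hypothesis, no level can decrease, so no further work is needed. For a deletion $(v,u)$ with $u\neq p(v)$, remove the edge in constant time. For a deletion with $u=p(v)$, enqueue $v$ and repeatedly pop a vertex $v$, rescan $N^+(v)$ to compute $\ell^{\star}=\min_{u'\in N^+(v)}\ell(u')+w(v,u')$, and either reattach $p(v)$ if $\ell^{\star}=\ell(v)$, or raise $\ell(v)$ to $\ell^{\star}$, evict $v$ from the tree if $\ell^{\star}>d$, and enqueue every in-neighbor of $v$ whose parent was $v$.

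Correctness would follow by a routine induction on the event sequence, showing that between updates $\ell(v)=\min(\dist(v,t),d+1)$ for every $v$: insertions preserve the invariant by hypothesis, and each rescan computes the new distance exactly from the (inductively correct) levels of the out-neighbors currently present in the adjacency list. For the running time I would use the classical amortization: since insertions never lower levels and weights are positive integers, $\ell(v)$ is monotone non-decreasing and can strictly increase at most $d+1$ times before $v$ leaves the tree; each rescan of $v$ costs $\Oo(\deg(v))$, so $v$ contributes $\Oo(d\deg(v))$ over its entire lifetime, summing to $\Oo(md)$ where $m$ counts every edge that is ever present. Insertions, non-parent deletions, and the per-event bookkeeping contribute the additive $\Oo(\Delta)$.

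The main obstacle I anticipate is checking that inserted edges are picked up at the right moments and never introduce hidden work that breaks the amortization. This is handled by two observations: first, every rescan reads the full current adjacency list, so any edge inserted earlier is examined the next time its endpoint is rescanned; second, because the hypothesis forbids distance decreases, ignoring a freshly inserted edge in the interim is safe, as it cannot make the maintained level of either endpoint strictly larger than the true distance. With these two points in hand, the monotone-level potential argument goes through exactly as in the deletions-only version.
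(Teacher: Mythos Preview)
The paper does not prove this lemma at all; it is stated as folklore with citations, so there is no in-paper argument to compare against. Your outline is the standard Even--Shiloach/King approach those references describe, and the correctness part (monotone levels, insertions ignored because they cannot lower distances, rescans picking up previously inserted edges) is fine.

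The running-time analysis, however, has a genuine gap. You charge every full rescan of $N^+(v)$ to a strict increase of $\ell(v)$, but your handler rescans $N^+(v)$ in full even in the reattach case $\ell^\star=\ell(v)$, and those rescans are charged to nothing. Concretely, let $v$ have out-edges to $u_1,\ldots,u_k$, each $u_i$ with a single edge to $t$, take $d=2$, and delete the edges $u_i\to t$ one by one: each deletion raises $\ell(u_i)$ above $d$, enqueues $v$ (its child), and your handler does a $\Theta(k)$ scan of $N^+(v)$ that merely reattaches $v$ to the next $u_j$ with $\ell(v)$ unchanged. After $k-1$ deletions you have spent $\Theta(k^2)$ work, whereas $m,\Delta=\Theta(k)$ and $d=\Oo(1)$, so the claimed $\Oo(md+\Delta)$ bound is $\Theta(k)$. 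The folklore fix is precisely to avoid the full rescan on reattach: maintain for each $v$ a cursor into $N^+(v)$ that is advanced lazily and reset only when $\ell(v)$ actually increases (equivalently, keep a count of out-neighbors $u$ with $\ell(u)+w(v,u)=\ell(v)$ and do a full scan only when that count hits zero). With this refinement the total scanning at each fixed level of $v$ is $\Oo(\deg(v))$ rather than $\Oo(\deg(v))$ per pop, and then your $\Oo(md+\Delta)$ amortization goes through.
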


\implement*

\begin{proof}
  We will explicitly maintain the graph $D$, and use the extended Even-Shiloach tree structure from Lemma~\ref{lem:ES-extended} to maintain a tree $T$ of shortest paths to $t$ up to distance $h=\sqrt{n}\sqrt{\log n}$. Every vertex will either be in this tree (and hence have height less than $h$), or be marked as a \emph{high} vertex.
  When a new client $c$ arrives, we update $D$ (and $T$) by first adding edges to $N(c)$ from $c$, and then deleting the dummy edge from $c$ to $t$.  Note that because the deletion of edge $(c,t)$ comes last,
	the inserted edges do not change any distances to $t$.
	 We then use $D$ and $T$ to find a shortest augmenting path. We consider two cases.

  The first case is when $c$ is not high.  Then $T$ contains a shortest path $P$ from $c$ to $t$.

  The second case is when $c$ is high. In this case we can just brute-force search for a shortest path $P$ from $c$ to $t$ in time $\Oo(m)$. \emph{If we do not find a path from $c$ to $t$, then we remove all servers and clients encountered during the search, and continue the algorithm in the graph with these vertices removed}.

  In either case, if a shortest path $P$ from $c$ to $t$ is found, we augment down $P$ and
	then make the corresponding changes to $D$:
	we first reverse the edges on $P\setminus\set{t}$ in order starting with the edge closest to $c$, and then we delete the edge $(s,t)$ on $P$ incident to $t$ (because the server $s$ is now matched).  Each edge reversal is done by first inserting the reversed edge, and then deleting the original. Note that since $P$ is a shortest path, none of these edge insertions change the distances to $t$. 
\paragraph{Correctness:}
We want to show that our implementation chooses a shortest augmenting path at every step. This is clearly true if we always find an augmenting path, but otherwise becomes a bit more subtle as we delete vertices from the graph after a failed brute-force search. 
We must thus show that any vertex deleted in this way cannot have participated in any future
augmenting path.

To see this, note that when our implementation deletes a server $s \in S$, there must have been 
no augmenting
path through $s$ at the time that $s$ was deleted. By Lemma~\ref{lem:no-augmenting-path}, this implies
that $\alpham(s) = 1$. But then by Lemma~\ref{lem:nondecreasing} we have $\alpham(s) = 1$
for all future client insertions as well. 
(Recall that by Observation~\ref{obs:alpham} we never have $\alpham(s) > 1$.)
Thus by Lemma~\ref{lem:no-augmenting-path} there is never an augmenting path through $s$
after this point, so $s$ can safely be deleted from the graph. Similarly,
if a client $c$ is deleted from the graph, then all of its neighboring servers had no augmenting
tails at that time, so they all have $\alpham(s) = 1$, so there will never be an 
augmenting path through $c$.

\paragraph{Running time:} 

There are three factors to consider.
\begin{enumerate}
    \item the time to follow the augmenting paths and maintain $D$.\label{item:follow}
    \item the time to maintain $T$.\label{item:maintain}
    \item the time to brute-force search for augmenting paths.\label{item:brute}
\end{enumerate}

Item~\ref{item:follow} takes $\Oo(m+n\log^2 n)$ time because we need $\Oo(1)$ time to add each of the $m$ edges and to follow and reverse each edge in the augmenting paths, and by Theorem~\ref{thm:main-uncapacitated} the total length of augmenting paths is $\Oo(n\log^2 n)$.

For Item~\ref{item:maintain}, it is easy to see that the total number of 
edges ever to appear in $D$ is $m = O(|E|)$;
$D$ consists only of dummy edges to the sink $t$, and edges in the original graph
oriented in one of two directions.
By Item~\ref{item:follow}, the number of changes to $D$ is $\Oo(m + n\log^2 n)$.
Thus by Lemma~\ref{lem:ES-extended} the total time to maintain $T$ is 
 $\Oo(mh + n\log^2 n)$ .

For Item~\ref{item:brute} we consider two cases. The first is brute-force searches
which result in finding an augmenting path. These take a total of
$\Oo(mn\log(n)/h)$ time
because by Lemma~\ref{lem:bound-long-paths} during the course of the entire algorithm
there are at most $\Oo(n\log(n)/h)$ augmenting paths of length $\geq h$, and each such path  
requires $\Oo(m)$ time to find. The second case to consider is brute-force searches
that do not result in an augmenting path. These take total time $\Oo(m)$
because once a vertex participates in such a search, it is deleted from the graph with all its incident edges.

Summing up, the total time used is $\Oo(mh+n\log^2 n+mn\log(n)/h+m)$, which for our choice of $h=\sqrt{n}\sqrt{\log n}$ is $O(m\sqrt{n}\sqrt{\log n})$.
\end{proof}

\section{Extensions}\label{sec:extend}

In many applications of online bipartite assignments, it is natural to consider the extension in which each server can serve multiple clients. Recall from the introduction that
we examine two variants: capacitated assignment, where each server comes with a fixed capacity which we are not allowed to exceed, and minimizing maximum server load, in which there is no upper limit to the server capacity, but we wish to minimize the maximum number of clients served by any server.
We show that there is a substantial difference between the number of reassignments: Capacitated assignment is equivalent to uncapacitated online matching with replacements, but for minimizing maximum load, we show a significantly higher lower bound.

\subsection{Capacitated assignment}
\label{sec:capacitated-assignment}
We first consider the version of the problem where each server can be matched to multiple clients.
Each server comes with a positive integer capacity $u(s)$, which denotes how many clients can
be matched to that server. The greedy algorithm is the same as before: when a new client is inserted,
find the shortest augmenting path to a server $s$ that currently has less than $u(s)$ 
clients assigned.

\capacitate*

\begin{proof}
There is a trivial reduction from any instance of capacitated assignment to one of uncapacitated matching
where each server can only be matched to one client: simple create $u(s)$ copies of each server $s$.
This reduction was previously used in~\cite{BernsteinKPPS17}.
When a client $c$ is inserted, if there is an edge $(c,s)$ in the original graph, 
then add edges from $c$ to every copy of $s$. It is easy to see that the number of flips
made by the greedy algorithm in the capacitated graph is exactly
equal to the number made in the uncapacitated graph, which by Theorem
\ref{thm:main-uncapacitated} is $\Oo(n\log^2 n)$. (Note that although the constructed uncapacitated
graph has more servers than the original capacitated graph, the number of clients $n$
is exactly the same in both graphs.)
\end{proof}

\subsection{Minimizing maximum server load}
\label{sec:MinMax}

In this section, we analyze the online assignment problem. Here, servers may have an unlimited load, but we wish to minimize maximum server load.  
\begin{definition}
\label{def:assignment}
Given a bipartite graph $G=(C\cup S,E)$, an assignment $\Aa : C \rightarrow S$ assigns 
each client $c$ to a server $\Aa(c) \in S$. 
Given some assignment $\Aa$, for any $s \in S$ let the \emph{load} of $s$, 
denoted $\load_{\Aa}(s)$, be
the number of clients assigned to $s$; when the assignment $\Aa$ is clear
from context we just write $\load(s)$.
Let $\load(\Aa) = \max_{s \in S} \load_{\Aa}(s)$.
Let $\opt(G)$ be the minimum load among all possible assignments from $C$ to $S$.
\end{definition}

In the online assignment problem, clients are again inserted one by one with all 
their incident edges, and the goal is to maintain an assignment with minimum possible load.
More formally, define $G_t=(C_t\cup S,E_t)$ to be the graph after exactly $t$ clients have arrived,
and let $\Aa_t$ be the assignment at time $t$. Then we must have that for all $t$,
$\load(\Aa_t) = \opt(G_t)$. The goal is to make as few changes to the assignment as possible. 

\cite{Gupta:2014} and~\cite{BernsteinKPPS17} showed how to solve this problem with approximation: namely, with
only $\Oo(1)$ amortized changes per client insertion they can maintain
an assignment $\Aa$ such that for all $t$, $\load(\Aa_t) \leq 8\opt(G_t)$. 
Maintaining an approximate assignment is thus not much harder than maintaining an approximate maximum matching, so one might have hoped
that the same analogy holds for the exact case, and that it is possible
to maintain an optimal assignment with amortized $\Oo(\log^2 n)$ 
changes per client insertion. We now present a lower bound
disproving the existence of such an upper bound. The lower bound is not specific to the greedy algorithm,
and applies to any algorithm for maintaining an assignment $\Aa$ of minimal load.
In fact, the lower bound applies even if the algorithm knows the entire graph $G$
in advance; by contrast, if the goal is only to maintain a maximum matching,
then knowing $G$ in advance trivially leads to an online matching algorithm that never has to rematch any vertex.

\lowerbound*

The main ingredient of the proof is the following lemma:

\begin{lemma}
\label{lem:assignment-lower-bound}
For any positive integer $\maxload$ divisible by $4$, there exists a graph $G=(C\cup S,E)$ along with
an ordering in which clients in $C$ are inserted, such that
$\sizeof{C} = \maxload^2$, $\sizeof{S} = \maxload$, $\opt(G) = \maxload$,
and any algorithm for maintaining an optimal assignment $\Aa$ requires $\Omega(\maxload^3)$
changes to $\Aa$.
\end{lemma}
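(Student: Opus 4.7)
I would construct a graph $G$ on $L$ servers $s_1,\ldots,s_L$ and $L^2$ clients together with a specific insertion order, so that any algorithm maintaining an optimal assignment at every step performs $\Omega(L^3)$ reassignments. The high-level strategy is to engineer the graph so that each new client arrival forces a shift of $\Theta(L)$ existing clients along a carefully designed chain; with $L^2$ insertions, this totals $\Omega(L^3)$ reassignments, and since each client contributes at least its own insertion, we get a total of $\Omega(L^3)$ changes.

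\textbf{Construction sketch.} I envision two kinds of clients. \emph{Flexible} clients are each connected to a pair of adjacent servers $\{s_i, s_{i+1}\}$ along a chain, so they can be moved freely between their two endpoints. \emph{Pinning} clients are each attached to a single server and serve as ``forcing'' constraints. The counts of each type are chosen so that (i) the final graph satisfies $\opt(G)=L$ via a Hall-type counting argument, and (ii) each server ends up carrying a mixture of pinnings and flexible clients, with the flexible clients forming a long chain whose ``shift'' determines the whole assignment up to a small number of choices.

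\textbf{Insertion strategy.} First, insert all flexible clients, driving every server's load from the flexible side up to $L-1$. Then insert the pinning clients in an alternating pattern that drives cascading reassignments along the chain (for example, alternating pinnings at $s_1$ and at $s_L$, so that pressure is applied to opposite ends). Each pinning arrival over-fills the target server by $1$, and restoring an optimal assignment forces a chain of flexible client moves to propagate along the chain until it reaches a server with slack. By alternating the direction of pressure, the free space at any single server is not monotonically exhausted; rather, the chain ``oscillates,'' so that long cascades of length $\Theta(L)$ can be triggered by essentially every one of the $\Theta(L^2)$ pinning insertions.

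\textbf{Key steps.}
\begin{enumerate}
\item Verify that the final graph has $\opt(G)=L$ via Hall's theorem, and that the optimal assignment structure is tight (every server is loaded exactly to $L$).
\item Show that after the setup phase, every subsequent pinning insertion forces $\Omega(L)$ reassignments of flexible clients, regardless of which optimal assignment the algorithm previously chose.
\item Sum over the $\Theta(L^2)$ pinning insertions to obtain a total of $\Omega(L^3)$ reassignments.
\end{enumerate}

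\textbf{Main obstacle.} The crucial subtlety is that the algorithm may pick \emph{any} optimal assignment at each step, and even plan the entire sequence in advance. Thus the argument must show that every valid sequence of optimal assignments differs on $\Omega(L)$ clients between consecutive steps. This amounts to a diameter lower bound on the polytope of optimal assignments at each step under the ``reassignment'' (Hamming-like) metric. The hardest part of the proof is establishing this structural rigidity: I expect to argue that the alternating pinning pressure, combined with the tight Hall conditions at intermediate times, uniquely determines (up to $O(1)$ slack) the net shift of the flexible chain, so that no clever choice of optimal assignment can reduce the per-insertion cost below $\Omega(L)$.
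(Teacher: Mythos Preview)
Your overall scaffold---a chain of servers with clients attached to adjacent pairs---matches the paper's construction, but the proposal has two real gaps.

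\textbf{A counting inconsistency.} You propose $\Theta(L^2)$ pinning insertions, each forcing $\Omega(L)$ reassignments. But pinning clients are attached to a single server, and you concentrate them at $s_1$ and $s_L$. Since $\opt(G)=L$, each server can absorb at most $L$ pinning clients, so at most $2L$ pinnings can live at the endpoints---not $\Theta(L^2)$. If instead you spread pinnings across all $L$ servers, you lose the alternating-endpoint mechanism that is supposed to drive the long cascades. Either way the arithmetic $\Theta(L^2)\times\Theta(L)=\Theta(L^3)$ does not close under the constraint $\sizeof{C}=L^2$, $\opt(G)=L$.

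\textbf{The obstacle you flag is the crux, and the paper solves it differently.} You correctly identify that proving ``every optimal assignment at step $t$ differs from every optimal assignment at step $t{+}1$ by $\Omega(L)$'' is the hard part, and you do not yet have a mechanism for it. The paper does \emph{not} attempt a per-insertion bound. Instead it uses almost exclusively flexible clients (block $C_i$ adjacent to $s_i,s_{i+1}$), groups the insertions into $\Theta(L)$ \emph{epochs} of $\Theta(L)$ insertions each, and arranges the block sizes so that both immediately before and immediately after each epoch the optimal assignment is \emph{unique}. Before a ``down-heavy'' epoch every block has size $\beta$ and the only optimum sends all of $C_i$ to $s_i$; after the epoch the lower half of the blocks have size $\beta+2$, and an easy induction along the chain shows the optimum is again unique---but now $\sum_{i>L/2}(L-i)=\Omega(L^2)$ clients have shifted one step along the chain. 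Uniqueness at the two epoch boundaries forces $\Omega(L^2)$ changes regardless of what the algorithm does at intermediate steps. With $\Theta(L)$ epochs this yields $\Omega(L^3)$.

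So the key idea you are missing is to bound the distance between two \emph{forced, unique} states separated by many insertions, rather than the per-step distance between \emph{sets} of optimal assignments; this completely sidesteps the rigidity obstacle you identified.
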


\begin{proof}
Let $S = \set{s_1, s_2, ..., s_\maxload}$.
We partition the clients in $C$ into $\maxload$ blocks $C_1, C_2, ..., C_\maxload$,
where all the clients in a block have the same neighborhood. In particular, 
clients in $C_\maxload$ only have a single edge to server $s_\maxload$,
and clients in $C_i$ for $i < \maxload$ have an edge to $s_i$ and $s_{i+1}$.

The online sequence of client insertions begins by adding $\maxload/2$ clients
to each block $C_i$. The online sequence then proceeds to alternate between \emph{down-heavy} epochs and 
\emph{up-heavy} epochs, where a down-heavy epoch inserts 2 clients into blocks 
$C_1, C_2, ..., C_{\maxload/2}$ (in any order), while an up-heavy epoch inserts 2 clients into blocks
$C_{\maxload/2 + 1}, ..., C_\maxload$. The sequence then terminates after $\maxload/2$ 
such epochs: $\maxload/4$ up-heavy ones and $\maxload/4$ down-heavy ones in alternation. 
Note that a down-heavy epoch followed by an up-heavy one simply adds
two clients to each block. 
Thus the final graph has $\sizeof{C_i} = \maxload$
for each $i$, so the graph $G=(C\cup S,E)$ satisfies the desired conditions that $\sizeof{C} = \maxload^2$
and $\opt(G) = \maxload$.

\begin{figure}[h!]
    \begin{tikzpicture*}{\textwidth}
        \begin{scope}[
            vertex/.style={
                draw,
                circle,
                minimum size=2mm,
                inner sep=0pt,
                outer sep=0pt            },
            vedge/.style={
                near end,
                below,
                rotate=90,
                font=\tiny,
            },
            dedge/.style={
                near start,
                below,
                rotate=63.43,
                font=\tiny,
            },
            every label/.append style={
                rectangle,
                                font=\tiny,
            }
            ]
            \node[vertex,label={below:$C_1$}] (c1) at (1,0) {};
            \node[vertex,label={below:$C_2$}] (c2) at (2,0) {};
            \node[vertex,label={below:$C_3$}] (c3) at (3,0) {};
            \node[vertex,label={below:$C_{L/2}$}] (cL2) at (6,0) {};
            \node[vertex,label={below:$C_{L/2+1}$}] (cL2p1) at (7,0) {};
            \node[vertex,label={below:$C_{L-3}$}] (cLm3) at (10,0) {};
            \node[vertex,label={below:$C_{L-2}$}] (cLm2) at (11,0) {};
            \node[vertex,label={below:$C_{L-1}$}] (cLm1) at (12,0) {};
            \node[vertex,label={below:$C_L$}] (cLm0) at (13,0) {};
            
            \node[vertex,label={above:$s_1$}] (s1) at (1,2) {};
            \node[vertex,label={above:$s_2$}] (s2) at (2,2) {};
            \node[vertex,label={above:$s_3$}] (s3) at (3,2) {};
            \node[vertex,label={above:$s_4$}] (s4) at (4,2) {};
            \node[vertex,label={above:$s_{L/2}$}] (sL2) at (6,2) {};
            \node[vertex,label={above:$s_{L/2+1}$}] (sL2p1) at (7,2) {};
            \node[vertex,label={above:$s_{L/2+2}$}] (sL2p2) at (8,2) {};
            \node[vertex,label={above:$s_{L-2}$}] (sLm2) at (11,2) {};
            \node[vertex,label={above:$s_{L-1}$}] (sLm1) at (12,2) {};
            \node[vertex,label={above:$s_L$}] (sLm0) at (13,2) {};
            
            \draw[dotted] (6.5,-0.75) -- (6.5,2.75);
            
            \draw (c1) -- (s1) node[vedge] {$\beta$};
            \draw (c1) -- (s2) node[dedge] {$0$};
            \draw (c2) -- (s2) node[vedge] {$\beta$};
            \draw (c2) -- (s3) node[dedge] {$0$};
            \draw (c3) -- (s3) node[vedge] {$\beta$};
            \draw (c3) -- (s4) node[dedge] {$0$};
            \draw[dashed] ($(s4.south)+(0,-2)$) -- (s4);
            
            \node at (4.75,1) {$\cdots$};
            
            \draw[dashed] ($(sL2.south)+(-1,-2)$) -- (sL2);
            \draw (cL2) -- (sL2) node[vedge] {$\beta$};
            \draw (cL2) -- (sL2p1) node[dedge] {$0$};
            \draw (cL2p1) -- (sL2p1) node[vedge] {$\beta$};
            \draw (cL2p1) -- (sL2p2) node[dedge] {$0$};
            \draw[dashed] ($(sL2p2.south)+(0,-2)$) -- (sL2p2);
            
            \node at (9,1) {$\cdots$};
            
            \draw[dashed] (cLm3) -- ($(cLm3.north)+(0,2)$);
            \draw (cLm3) -- (sLm2) node[dedge] {$0$};
            \draw (cLm2) -- (sLm2) node[vedge] {$\beta$};
            \draw (cLm2) -- (sLm1) node[dedge] {$0$};
            \draw (cLm1) -- (sLm1) node[vedge] {$\beta$};
            \draw (cLm1) -- (sLm0) node[dedge] {$0$};
            \draw (cLm0) -- (sLm0) node[vedge] {$\beta$};
        \end{scope}
    \end{tikzpicture*}
    \vspace{-0.25cm}
    \caption{Number of assignments of each type after first $L/2$ clients added to each block, and after each up-heavy phase. Each $C_i$ has $\beta$ clients. Each server has $\beta$ clients assigned.}
    \label{fig:assignment-lower-upheavy}
\end{figure}
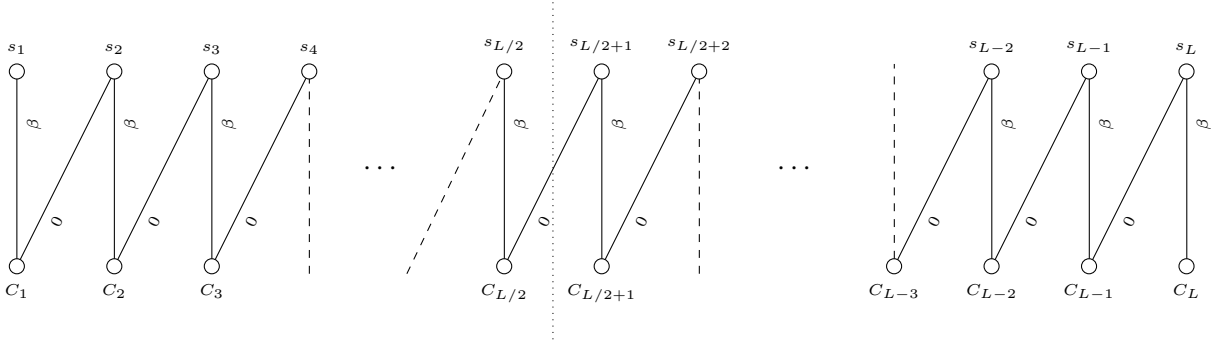

We complete the proof by showing that all the client insertions during a single
down-heavy epoch cause the algorithm to make at least $\Omega(\maxload^2)$ changes
to the assignment; the same analysis applies to the up-heavy epochs as well. 
Consider the $k$th down-heavy epoch of client insertions. Let
$\beta = \maxload/2 + 2(k-1)$ and consider the graph $\Gold=(\cold\cup S,\eold)$ before the down-heavy epoch:
it is easy to see that  every block $C_i$ has exactly $\beta$ clients, 
that $\opt(\Gold) = \beta$, 
and that there is exactly one assignment $\Aaold$ that adheres to this maximum load: $\Aaold$
assigns all clients in block $C_i$ to server $s_i$ (see Figure~\ref{fig:assignment-lower-upheavy}).

Now, consider the graph $\Gnew=(\cnew \cup S,\enew)$ after the down-heavy epoch. Blocks $C_1, C_2, ..., C_{\maxload/2}$
now have $\beta + 2$ clients, while blocks $C_{\maxload/2 + 1},...,C_{\maxload}$ still only have $\beta$.
We now show that $\opt(\Gnew) = \beta + 1$. In particular, 
recall that $\beta \geq \maxload/2$ and consider
the following assignment $\Aanew$: for $i \leq \maxload/2$, $\Aanew$
assigns $\beta + 2 - i \geq 2$ clients from $C_i$ 
to $s_i$ and $i$ clients in $C_i$ to $s_{i+1}$;
for $\maxload/2 < i \leq \maxload$, $\Aanew$ assigns $\beta + i - \maxload \geq 0$ 
clients in $C_i$ to $s_i$,
and $\maxload - i$ clients from $C_i$ to $s_{i+1}$. 
(In particular, all $\beta$ clients in $C_{\maxload}$ are assigned to $s_\maxload$,
which is necessary as there is no server $s_{\maxload+1}$). 
It is easy to check that for every $s \in S$, 
$\load_{\Aanew}(s) = \beta + 1$ (see Figure~\ref{fig:assignment-lower-downheavy}).

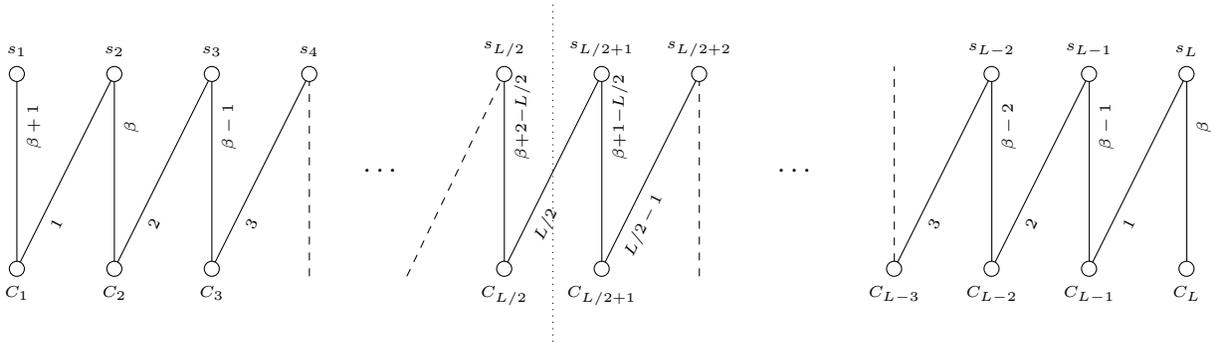
\begin{figure}[h!]
    \begin{tikzpicture*}{\textwidth}
        \begin{scope}[
            vertex/.style={
                draw,
                circle,
                minimum size=2mm,
                inner sep=0pt,
                outer sep=0pt            },
            vedge/.style={
                near end,
                below,
                rotate=90,
                font=\tiny,
            },
            dedge/.style={
                near start,
                below,
                rotate=63.43,
                font=\tiny,
            },
            every label/.append style={
                rectangle,
                                font=\tiny,
            }
            ]
            \node[vertex,label={below:$C_1$}] (c1) at (1,0) {};
            \node[vertex,label={below:$C_2$}] (c2) at (2,0) {};
            \node[vertex,label={below:$C_3$}] (c3) at (3,0) {};
            \node[vertex,label={below:$C_{L/2}$}] (cL2) at (6,0) {};
            \node[vertex,label={below:$C_{L/2+1}$}] (cL2p1) at (7,0) {};
            \node[vertex,label={below:$C_{L-3}$}] (cLm3) at (10,0) {};
            \node[vertex,label={below:$C_{L-2}$}] (cLm2) at (11,0) {};
            \node[vertex,label={below:$C_{L-1}$}] (cLm1) at (12,0) {};
            \node[vertex,label={below:$C_L$}] (cLm0) at (13,0) {};
            
            \node[vertex,label={above:$s_1$}] (s1) at (1,2) {};
            \node[vertex,label={above:$s_2$}] (s2) at (2,2) {};
            \node[vertex,label={above:$s_3$}] (s3) at (3,2) {};
            \node[vertex,label={above:$s_4$}] (s4) at (4,2) {};
            \node[vertex,label={above:$s_{L/2}$}] (sL2) at (6,2) {};
            \node[vertex,label={above:$s_{L/2+1}$}] (sL2p1) at (7,2) {};
            \node[vertex,label={above:$s_{L/2+2}$}] (sL2p2) at (8,2) {};
            \node[vertex,label={above:$s_{L-2}$}] (sLm2) at (11,2) {};
            \node[vertex,label={above:$s_{L-1}$}] (sLm1) at (12,2) {};
            \node[vertex,label={above:$s_L$}] (sLm0) at (13,2) {};
            
            \draw[dotted] (6.5,-0.75) -- (6.5,2.75);
            
            \draw (c1) -- (s1) node[vedge] {$\beta+1$};
            \draw (c1) -- (s2) node[dedge] {$1$};
            \draw (c2) -- (s2) node[vedge] {$\beta$};
            \draw (c2) -- (s3) node[dedge] {$2$};
            \draw (c3) -- (s3) node[vedge] {$\beta-1$};
            \draw (c3) -- (s4) node[dedge] {$3$};
            \draw[dashed] ($(s4.south)+(0,-2)$) -- (s4);
            
            \node at (4.75,1) {$\cdots$};
            
            \draw[dashed] ($(sL2.south)+(-1,-2)$) -- (sL2);
            \draw (cL2) -- (sL2) node[vedge] {$\quad\beta{+}2{-}L{/}2$};
            \draw (cL2) -- (sL2p1) node[dedge] {$L/2$};
            \draw (cL2p1) -- (sL2p1) node[vedge] {$\quad\beta{+}1{-}L{/}2$};
            \draw (cL2p1) -- (sL2p2) node[dedge] {$L/2-1$};
            \draw[dashed] ($(sL2p2.south)+(0,-2)$) -- (sL2p2);
            
            \node at (9,1) {$\cdots$};
            
            \draw[dashed] (cLm3) -- ($(cLm3.north)+(0,2)$);
            \draw (cLm3) -- (sLm2) node[dedge] {$3$};
            \draw (cLm2) -- (sLm2) node[vedge] {$\beta-2$};
            \draw (cLm2) -- (sLm1) node[dedge] {$2$};
            \draw (cLm1) -- (sLm1) node[vedge] {$\beta-1$};
            \draw (cLm1) -- (sLm0) node[dedge] {$1$};
            \draw (cLm0) -- (sLm0) node[vedge] {$\beta$};
        \end{scope}
    \end{tikzpicture*}
    \vspace{-0.25cm}
    \caption{Number of assignments of each type after each down-heavy phase.
        Each $C_i$ has $\beta+2$ clients for $1\leq i\leq L/2$ and $\beta$ clients for $L/2+1\leq i\leq L$.  Each server has $\beta+1$ clients assigned.}
    \label{fig:assignment-lower-downheavy}
\end{figure}

We now argue that $\Aanew$ is in fact the only assignment
in $\Gnew$ with $\load(\Aanew) = \beta + 1$. 
Consider any assignment $\Aa$ for $\cnew$ with $\load(\Aa) = \beta + 1$.
Observe that since the
total number of clients in $\cnew$ is exactly $(\beta+1) \maxload$, we
must have that every server $s \in S$ has $\load(s) = \beta+1$ in $\Aa$. 
We now argue by induction that for $i \leq \beta/2$,
$\Aa$ assigns assigns $\beta + 2 - i$ clients from $C_i$ 
to $s_i$ and $i$ clients in $C_i$ to $s_{i+1}$ (exactly as $\Aanew$ does). 
The claim holds for $i=1$ because the only way $s_1$ can end up
with load $\beta + 1$ is if $\beta + 1$ clients from $C_1$ are assigned to it.
Now say the claim is true for some $i < \beta/2$. By the induction
hypothesis, $s_{i+1}$ has $i$ clients from $C_i$ assigned to it. 
Since $s_{i+1}$ must have total load $\beta + 1$, and
all clients assigned to it come from $C_i$ or $C_{i+1}$, $s_{i+1}$ must 
have $\beta + 1 - i = \beta + 2 - (i+1)$ clients assigned to it from $C_{i+1}$.

We now prove by induction that for all $\maxload/2 < i \leq \maxload$, 
$\Aa$ assigns $\beta + i - \maxload$ 
clients in $C_i$ to $s_i$,
and $\maxload - i$ clients from $C_i$ to $s_{i+1}$,
which proves that $\Aa = \Aanew$. 
The claim holds for $i = \maxload/2 + 1$ because 
we have already shown that in the above paragraph that $\maxload/2$ clients
assigned to $s_i = s_{\maxload/2 + 1}$ come from $C_{\maxload/2}$, so since
$\load(s_i) = \beta + 1$,
it must have $\beta + 1 - \maxload/2 = \beta + i -\maxload$
clients from $C_i$ assigned to it. Now, say that the claim is true
for some $i > \maxload/2$. Then by the induction step $s_{i+1}$
has $\maxload - i$ clients assigned to it from $C_i$,
so since $\load(s_{i+1}) = \beta + 1$, it has $\beta + (i+1)-\maxload$
clients assigned to it from $C_{i+1}$, as desired. The remaining
$\maxload - (i+1)$ clients in $C_{i+1}$ must then be assigned to $s_{i+2}$. 

We have thus shown that the online assignment algorithm
is forced to have assignment $\Aaold$ before the down-heavy epoch, and
assignment $\Aanew$ afterwards.
We now consider how many changes the algorithm must make to go from one to another.
Consider block $C_i$ for some $\maxload/2 < i \leq \maxload$. Note that because
the epoch of client insertions was down-heavy, 
$\sizeof{C_i} = \beta$ before and after the epoch. 
Now, in $\Aaold$ all of the clients in $C_i$ are matched to $s_i$. 
But in $\Aanew$, $\maxload-i$ of them are matched to $s_{i+i}$.
Thus, the total number of reassignments to get from $\Aaold$ to $\Aanew$
is at least $\sum_{\maxload/2 < i \leq \maxload} (\maxload - i) = \Omega(\maxload^2)$.
Since there are $\maxload/4$ down-heavy epochs, the total number of reassignments
over the entire sequence of client insertions is $\Omega(\maxload^3)$. 
\end{proof}

\begin{proof}[Proof of Theorem~\ref{thm:assignment-lower-bound}]
  Recall the assumption of the Theorem that $n/2 \geq \maxload^2$    .
  Simply let the graph $G$ consist of $\floor{n/\maxload^2}$ separate
  instances of the graph in Lemma~\ref{lem:assignment-lower-bound}, together with sufficient copies of $K_{1,1}$ to make the total number of clients $n$.
  The algorithm will have to make $\Omega(\maxload^3)$ changes in each such instance,
  leading to $\Omega(\maxload^3 \floor{n/\maxload^2}) = \Omega(n\maxload)$ changes in total.
\end{proof}

We now show a nearly matching upper bound which is off by a $\log^2 n$ factor.
As with the case of matching, this upper bound is achieved by the most natural SAP algorithm,
which we now define in this setting.
Since $\opt(G)$ may change as clients are inserted into $C$, whenever a new client is inserted,
the greedy algorithm must first compute $\opt(G)$ for the next client set. 
Note that the algorithm does not do any reassignments at this stage, it simply
figures out what the max load should be. $\opt(G)$ can easily
be computed in polynomial time: for example we could just compute the maximum matching
when every server has capacity $b$ for every $b = 1, 2, ..., \sizeof{C}$,
and then $\opt(G)$ is the minimum $b$ 
for which every client in $C$ is matched; for a more efficient approach see~\cite{BernsteinKPPS17}.
Now, when a new client $c$ is inserted, the algorithm first checks if $\opt(G)$
increases. If yes, the maximum allowable load on each server increases by $1$ so 
$c$ can just be matched to an arbitrary neighbor. Otherwise, SAP finds the shortest alternating path from $c$ to a server $s$ with $\load(s) < \opt(G)$:
an augmenting path is defined exactly the same way as in Definition~\ref{dfn:augmenting-path},
though there may now be multiple matching edges incident to every server. 
The proof of the upper bound will rely on the following very simple observation:

\begin{observation}
\label{obs:initial-matching}
For the uncapacitated problem of online maximum matching with replacements,
let us say that instead of starting with $C = \emptyset$, the algorithm
starts with some initial set of clients $C_0 \subset C$ already inserted, and an initial matching between $C_0$ and $S$. 
Then the total number of replacement made during all future client insertions is still 
upper bounded by the same 
$\Oo(n\log^2 n)$ as in
Theorem~\ref{thm:main-uncapacitated}, where $n$ is the number of clients in the final graph
(so $n$ is $\sizeof{C_0}$ plus the number of clients inserted).
\end{observation}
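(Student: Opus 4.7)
The plan is to verify that the proof of Theorem~\ref{thm:main-uncapacitated} goes through essentially unchanged in this more general setting. That theorem's proof factors through Lemma~\ref{lem:bound-long-paths}, whose argument relies on only four ingredients: the balanced server flow $\alpham$, the monotonicity properties in Lemmas~\ref{lem:nondecreasing} and~\ref{lem:onlyhigher}, the bound $\alpham(s)\le 1$ from Observation~\ref{obs:alpham}, and the Expansion Lemma~\ref{lem:expand}. The first three depend only on the graph $\gm$ and are therefore oblivious to how the current matching was constructed. Since $\gm$ grows only as new clients are inserted, starting the process from an arbitrary initial configuration $(C_0,M_0)$ does not alter these tools.

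The only ingredient that references the current matching at all is the Expansion Lemma. I would carefully re-examine its proof and observe that the sole structural use of the matching is that each matched server in $A(K_i)$ is matched to a distinct client, so that $|A(K_i)|=|K_{i+1}|$; this is automatic for any valid matching. Hence the Expansion Lemma applies at every insertion of a future client regardless of how $M_0$ was obtained.

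With these observations in hand, I would re-run the counting of Lemma~\ref{lem:bound-long-paths} with $\cstar$ redefined to consist only of the future clients (those not in $C_0$) whose shortest augmenting path at insertion has length greater than $h$. As in the original argument, the Expansion Lemma forces every $s\in N(c)$ to satisfy $\alpham(s)\ge 1-2\ln(n)/h$ at the moment $c\in\cstar$ is inserted; Lemma~\ref{lem:onlyhigher} then confines the unit of new flow contributed by $c$ to servers already in $\sstar=\{s:\alpham(s)\ge 1-2\ln(n)/h\text{ at some point}\}$; each such server can absorb at most $2\ln(n)/h$ additional flow after first entering $\sstar$, by Observation~\ref{obs:alpham}; and the bound $|\sstar|\le 2n$ still follows from $\sum_{s\in\sstar}\alpham(s)\le |\cm|\le n$, where here $n$ counts all clients including those in $C_0$. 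Combining gives $|\cstar|\le 4n\ln(n)/h$, and summing dyadically over $h=2^i$ yields the desired $O(n\log^2 n)$ bound on the number of replacements during the future insertions.

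I anticipate no substantive obstacle; the only mild care required is confirming that the Expansion Lemma's proof does not implicitly rely on the matching having been built by SAP, which a line-by-line inspection of its argument readily rules out.
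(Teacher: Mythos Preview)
Your proposal is correct and follows essentially the same approach as the paper: the paper's proof simply notes that Theorem~\ref{thm:main-uncapacitated} follows from Lemma~\ref{lem:bound-long-paths}, which in turn rests on the Expansion Lemma~\ref{lem:expand}, and that the expansion argument refers only to the server necessities $\alpham$ and not to any particular matching, so it applies regardless of the initial matching $M_0$. Your write-up is more detailed in that you explicitly isolate the four ingredients and single out the one place the matching enters the Expansion Lemma's proof (the bijection between $A(K_i)$ and $K_{i+1}$), but the underlying idea is identical.
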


\begin{proof}
Intuitively, we could simply let our protocol start by unmatching all the clients in $C_0$,
and then rematching them according the SAP protocol, which would lead to $\Oo(n\log^2 n)$
replacements. In fact this initial unmatching is not actually necessary. Recall that the proof
of Theorem~\ref{thm:main-uncapacitated} follows directly from the key Lemma~\ref{lem:bound-long-paths},
which in term follows from the expansion argument in Lemma~\ref{lem:expand}. 
The expansion argument only refers to server necessities, not to the
particular matching maintained by the algorithm, so it will hold no matter what initial matching
we start with.
\end{proof}

\minmax*

\begin{proof}
Let us define epoch $i$ to contain all clients $c$ such that after the insertion of $c$ we have $\opt(G) = i$. We now define $n_i$ as the total number of clients added by the end of epoch $i$
(so $n_i$ counts clients from previous epochs as well).
Extend the reduction in the proof of Theorem~\ref{thm:main-capacitated} from~\cite{BernsteinKPPS17} as follows: between any two epochs, add a new copy of each server, along with all of its edges.  
For the following epoch, say, the $i$th epoch, Observation~\ref{obs:initial-matching}
tells us that regardless of what matching we had at the beginning of the epoch,
the total number of reassignments performed by SAP during the epoch will not exceed $\Oo(n_i \log^2 n_i)\subseteq \Oo(n\log^2 n)$.
We thus make at most $\Oo(n \maxload \log^2n)$ reassignments in total,
which completes the proof if $L < \sqrt{n}/\log n$.
If $\maxload \geq \sqrt{n}/\log n$, we make $\Oo(n\sqrt{n}\log n)$ reassignments during the first 
$\sqrt{n}/\log n$ epochs.
In all future epochs, note that a server at its maximum allowable load has at 
least $\sqrt{n}/\log n$ clients
assigned to it, so there are at most $\sqrt{n}\log n$ such servers, and whenever a client is inserted the 
shortest augmenting path to a server below maximum load will have length $\Oo(\sqrt{n}\log n)$.
This completes the proof because there are only $n$ augmenting paths in total. 
\end{proof}

\subsection{Approximate semi-matching}
Though our result on minimizing maximum load \emph{exactly} is tight, we conclude this section on extensions with a short and cute improvement for \emph{approximate} load balancing, which follows from the Expansion Lemma (Lemma~\ref{lem:expand}).

We study a setting similar to that of \cite{BernsteinKPPS17}, in which one wishes to minimize not only the maximum load, but the $p$-norm $\left|X\right|_p = \left(\sum_{s\in S} l(s)^p\right)^{\frac{1}{p}}$, where $l(s)$ is the load of the server $s$ in the assignment $X$, for every $p\geq 1$.

First, observe that a lower bound on the $p$-norm comes from our necessity values $\alpha$ from Section~\ref{sec:server-flow}. That is, $\left(\sum_{s\in S} \alpha(s)^p\right)^{\frac{1}{p}}\leq \left|X\right|_p$, for any assignment $X$. 
For $p=1$, we even have equality, as we simply count the number of clients. For $p>1$, the proof is almost identical to that of uniqueness in Section~\ref{subsubsec:unique}.

In Section~\ref{sec:MinMax}, we saw that even for the $\infty$-norm, we cannot obtain $\lceil \alpha(s) \rceil$ with logarithmic recourse. This motivates the use of approximation, and motivates the following definition:

\begin{definition}[$(1+\varepsilon)$-approximate semi-matching] For each server s in the current graph, let $L(s) = \lceil (1+\varepsilon) \alpha(s) \rceil$. We say that a semi-matching is $(1+\varepsilon)$-approximate, if each server s is assigned at most $L(s)$ clients.
\end{definition}

Assigning $(1+\varepsilon)\alpha(s)$ clients to server $s$ would indeed give a $(1+\varepsilon)$-approximation for every $p$-norm. 
Unfortunately, $(1+\varepsilon)\alpha(s)$ may not be an integer, which is why we apply the natural ceiling operation. 
Under the further assumption that $\alpha(s)\geq \frac{1}{\varepsilon}$ for all $s$, we have
\begin{align*}
\frac{\ceil{(1+\varepsilon)\alpha(s)}}{\alpha(s)}<
\frac{(1+\varepsilon)\alpha(s)+1}{\alpha(s)} =
1+\varepsilon+\frac{1}{\alpha(s)} \leq 1+2\varepsilon
\end{align*}
And thus, under the assumption that all necessities are $\geq\frac{1}{\varepsilon}$, then for 
any $(1+\varepsilon)$-approximate assignment $X$, 
where we 
let $l(s)$ denote the number of clients assigned to server $s\in S$,
we have:
\begin{align*}
\left(\sum_{s\in S} l(s)^p\right)^{\frac{1}{p}} & \leq &  \left(\sum_{s\in S} \left\lceil(1+\varepsilon)\alpha(s)\right\rceil^p\right)^{\frac{1}{p}} & < & 
\left((1+2\varepsilon)^p\sum_{s\in S}\alpha(s)^p\right)^{\frac{1}{p}} & = & 
(1+2\varepsilon)\left(\sum_{s\in S}\alpha(s)^p\right)^{\frac{1}{p}}
\end{align*}
But as already noted, the $p$-norm of the $\alpha$-vector is a lower bound on any assignment, including the optimal assignment $X_{\opt}$, so $\left|X\right|_p \leq (1+2\varepsilon)\left|X_{\opt}\right|$.

In the following, let $n$ denote the number of clients that have arrived thus far.

\begin{theorem}$(1+\varepsilon)$-approximate semi-matching has worst-case $O(\frac{1}{\varepsilon} \log n)$ reassignments with SAP.
\end{theorem}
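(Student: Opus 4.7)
The plan is to reduce the $(1+\varepsilon)$-approximate semi-matching problem to ordinary (uncapacitated) online matching, following the reduction used in the proof of Theorem~\ref{thm:main-capacitated}: at the moment of each insertion, replace each server $s$ by $L(s)=\lceil(1+\varepsilon)\alpha(s)\rceil$ identical copies sharing the neighborhood of $s$, and connect each client to all copies of its original neighbors. SAP on the capacitated graph performs exactly the same augmentations as SAP on this replicated graph $G'$, so it suffices to bound the length of a single SAP augmenting path in $G'$.

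The core step is to show that \emph{every} server copy in $G'$ has necessity bounded away from $1$. By Lemma~\ref{lem:maxalpha},
\[
\alpha'_{\max}=\max_{\emptyset\subset K\subseteq C}\frac{|K|}{\sum_{s\in N(K)}L(s)}.
\]
Using $L(s)\geq(1+\varepsilon)\alpha(s)$ directly from the definition of $L(s)$, together with the inequality $\sum_{s\in N(K)}\alpha(s)\geq|K|$ (true for every server flow, in particular the balanced one on the original graph), each such ratio is at most $\tfrac{1}{1+\varepsilon}$. Hence every server copy $s'$ of $G'$ satisfies $\alpha'(s')\leq\tfrac{1}{1+\varepsilon}$, so $1-\alpha'(s')\geq\tfrac{\varepsilon}{1+\varepsilon}=\Omega(\varepsilon)$. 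Since $\alpha'_{\max}<1$, Lemma~\ref{lem:alphalessone} ensures that $G'$ admits a full matching of the clients, so $C_M=C$ and the matched-graph necessity $\alpha'_M$ coincides with $\alpha'$ in the sense of Definition~\ref{dfn:alpham}.

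With these uniform bounds, the conclusion is a single application of the Expansion Lemma (Lemma~\ref{lem:expand}). When a new client $c$ arrives and is connected to some copy $s'\in N(c)$ in $G'$, the lemma gives an active augmenting tail from $s'$ of length at most $\tfrac{2(1+\varepsilon)}{\varepsilon}\ln|C_M|=O(\tfrac{1}{\varepsilon}\log n)$. Prepending the edge $(c,s')$ yields an augmenting path from $c$ of length $O(\tfrac{1}{\varepsilon}\log n)$; since SAP picks the shortest such path, at most $O(\tfrac{1}{\varepsilon}\log n)$ reassignments occur per insertion in the worst case.

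The main subtlety to verify is that $L(s)$ (and hence the replicated graph $G'$) changes as $\alpha(s)$ grows (Lemma~\ref{lem:nondecreasing}), so one might worry about matching replicated copies across timesteps. However, the bound $\alpha'(s')\leq\tfrac{1}{1+\varepsilon}$ is derived at the single moment of insertion using the current $\alpha$ and the current $L$, and the Expansion Lemma is invoked just for that instantaneous $G'$; no amortization or cross-timestep bookkeeping is required. In particular, no assumption of the form $\alpha(s)\geq 1/\varepsilon$ is needed for this reassignment bound, since the gap $\tfrac{\varepsilon}{1+\varepsilon}$ arises from $L(s)\geq(1+\varepsilon)\alpha(s)$ alone.
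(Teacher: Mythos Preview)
Your proposal is correct and takes essentially the same approach as the paper: replicate each server into $L(s)$ copies, use $L(s)\geq(1+\varepsilon)\alpha(s)$ together with $\sum_{s\in N(K)}\alpha(s)\geq|K|$ to bound $\alpha'_{\max}\leq\tfrac{1}{1+\varepsilon}$ via Lemma~\ref{lem:maxalpha}, and then apply the Expansion Lemma with gap $\tfrac{\varepsilon}{1+\varepsilon}$. Your extra remarks---that $\cm=C$ in $G'$ so $\alpham'=\alpha'$, and that the time-varying $L(s)$ poses no problem because the bound is applied instantaneously---are sound clarifications the paper leaves implicit.
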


The proof of this theorem relies again on the Expansion Lemma.
In this case, however, we do not use the $\alpha$-values as part of an amortization argument, but only to bound the lengths of the shortest augmenting paths.
	
\begin{proof}
Given our graph $G$, let $G'$ denote a similar graph with $L(s)$ copies of each server.  Then any maximum matching in $G'$ corresponds to an $(1+\varepsilon)$-approximate semi-matching in $G$.
Now, note that each client-set $K$ in $G'$ has a neighborhood of at least $(1+\varepsilon)$ times its own size:
\begin{align*}
\hfill && \left| N_{G'} (K)\right| && = && \sum_{s\in N_G(K)} L(s) && \geq &&  (1+\varepsilon) \sum_{s\in N_{G}(K)}\alpha_G (s) && \geq && (1+\varepsilon)\left|K\right| && \hfill
\end{align*}	
Where the last inequality follows from the fact that the neighborhood of $K$ receives at least all the flow from $K$, and thus, at least $K$ flow.
Thus, by Lemma~\ref{lem:maxalpha} we can upper bound the highest alpha-value $\alphamax$ in $G'$ by
\begin{align*}
  \alphamax &= \max_{\emptyset\subset K\subseteq C}
  \frac{\sizeof{K}}{\sizeof{N_{G'}(K)}} \leq \frac{1}{1+\varepsilon} =
  1 - \frac{\varepsilon}{1+\varepsilon}
\end{align*}
By setting $\varepsilon^{\prime} = \frac{\varepsilon}{1+\varepsilon}$, all servers of $G'$ has necessity $\leq 1 - \varepsilon^{\prime}$. The Expansion Lemma (Lemma~\ref{lem:expand}) then gives that any active augmenting tail has length at most $\frac{2}{\varepsilon'}\ln (n) = 2\frac{1+\varepsilon}{\varepsilon}\ln n$, which is $O(\frac{1}{\varepsilon}\log n)$.
\end{proof}

\section{Conclusion}
We showed that in the online matching problem with replacements, where vertices on one side
of the bipartition are fixed (the servers), while those the other side arrive one at a time
with all their incoming edges (the $n$ clients), the shortest augmenting path protocol maintains a maximum matching
while only making amortized $\Oo(\log^2 n)$ changes to the matching per client insertion. 
This almost matches the $\Omega(\log n)$ lower bound of Grove et al.~\cite{Grove:95}.
Ours is the first paper to achieve polylogarithmic changes per client; the previous
best of Bosek et al. required $\Oo(\sqrt{n})$ changes, and used a non-SAP strategy~\cite{BosekLSZ14}.
The SAP protocol is especially interesting to analyze because it is the most natural greedy approach
to maintaining the matching. However, despite the conjecture of Chaudhuri et al.~\cite{conf/infocom/ChaudhuriDKL09} that the SAP protocol only requires $\Oo(\log n)$ amortized changes per client, our analysis is the first to go beyond the trivial $\Oo(n)$ bound for general bipartite graphs; previous results were only able to analyze SAP in restricted settings. Using our new analysis technique, we were also able to show an implementation of the SAP protocol that requires total update time 
$\Oo(m\sqrt{n}\sqrt{\log n})$, which almost matches the classic offline $\Oo(m\sqrt{n})$ running time
of Hopcroft and Karp~\cite{Hopcroft73}. 

The main open problem that remains is to close the gap between our $\Oo(\log^2 n)$ upper bound
and the $\Omega(\log n)$ lower bound. This would be interesting for any replacement strategy,
but it would also be interesting to know what the right bound is for the SAP protocol in particular.
Another open question is to remove the $\sqrt{\log n}$ factor in our implementation of the 
SAP protocol. Note that both of these open questions would be resolved if we managed
to improve the bound in Lemma~\ref{lem:bound-long-paths} from $\Oo(n\ln(n)/h)$ to $\Oo(n/h)$. 
(In the implementation of Section~\ref{sec:implementation} we would then set $h = \sqrt{n}$ 
instead of $h = \sqrt{n}\sqrt{\log n}$.)

\section{Acknowledgements}
The first author would like to thank Cliff Stein for introducing him to the problem. The authors 
would like to thank Seffi Naor for pointing out to us that uniqueness of server loads can be proved via convex optimization (Section~\ref{subsubsec:unique}), and to thank Martin Skutella and Guilamme Sagnol for very helpful pointers regarding the details of this proof.

\bibliographystyle{plain}
\bibliography{refs}

\end{document}